\providecommand{\keywords}[1]{\bigskip\textbf{\textit{Index terms---}} #1}
\newcommand{\eqdef}{=\vcentcolon}
\newcommand{\defeq}{\vcentcolon=}
\newcommand{\norm}[1]{\left\|{#1}\right\|}
\newcommand{\bbR}{\mathbb{R}}
\newcommand{\bbC}{\mathbb{C}}
\newcommand{\bbN}{\mathbb{N}}
\newcommand{\bbZ}{\mathbb{Z}}
\newcommand{\bbP}{\mathbb{P}}
\newcommand{\calX}{\mathcal{X}}
\newcommand{\calY}{\mathcal{Y}}
\newcommand{\calM}{\mathcal{M}}
\newcommand{\calU}{\mathcal{U}}
\newcommand{\calV}{\mathcal{V}}
\newcommand{\calA}{\mathcal{A}}
\newcommand{\calB}{\mathcal{B}}
\newcommand{\real}{\operatorname{Re}}
\newcommand{\imag}{\operatorname{Im}}
\newcommand{\rmd}{\mathrm{d}}
\newcommand{\ind}[1]{\mathbf{1}\left(#1\right)}
\newtheorem{theorem}{Theorem}[section]
\newtheorem{lemma}[theorem]{Lemma}
\newtheorem{corollary}[theorem]{Corollary}
\newtheorem{definition}[theorem]{Definition}
\begin{document}
\title{Identifiability and Stability in Blind Deconvolution \\under Minimal Assumptions\thanks{This work was supported in part by the National Science Foundation (NSF) under Grants CCF 10-18789 and IIS 14-47879.}}
\author[1]{Yanjun Li}
\author[2]{Kiryung Lee}
\author[1]{Yoram Bresler}
\affil[1]{Coordinated Science Laboratory and Department of Electrical and Computer Engineering}
\affil[2]{Coordinated Science Laboratory and Department of Statistics}
\affil[1,2]{University of Illinois, Urbana-Champaign}
\date{}
\maketitle

\doublespacing

\abstract
Blind deconvolution (BD) arises in many applications. Without assumptions on the signal and the filter, BD does not admit a unique solution. In practice, subspace or sparsity assumptions have shown the ability to reduce the search space and yield the unique solution. However, existing theoretical analysis on uniqueness in BD is rather limited. In an earlier paper, we provided the first algebraic sample complexities for BD that hold for almost all bases or frames. We showed that for BD of a pair of vectors in $\bbC^n$, with subspace constraints of dimensions $m_1$ and $m_2$, respectively, a sample complexity of $n\geq m_1m_2$ is sufficient. This result is suboptimal, since the number of degrees of freedom is merely $m_1+m_2-1$. We provided analogus results, with similar suboptimality, for BD with sparsity or mixed subspace and sparsity constraints. In this paper, taking advantage of the recent progress on the information-theoretic limits of unique low-rank matrix recovery, we finally bridge this gap, and derive an optimal sample complexity result for BD with generic bases or frames. We show that for BD of an arbitrary pair (resp. all pairs) of vectors in $\bbC^n$, with sparsity constraints of sparsity levels $s_1$ and $s_2$, a sample complexity of $n > s_1+s_2$ (resp. $n > 2(s_1+s_2)$) is sufficient. We also present analogous results for BD with subspace constraints or mixed constraints, with the subspace dimension replacing the sparsity level. Last but not least, in all the above scenarios, if the bases or frames follow a probabilistic distribution specified in the paper, the recovery is not only unique, but also stable against small perturbations in the measurements, under the same sample complexities.

\keywords{uniqueness, sample complexity, bilinear inverse problem, low-rank matrix recovery}

\section{Introduction}
Blind deconvolution (BD) is the bilinear inverse problem of recovering the signal and the filter simultaneously given the their convolutioin or circular convolution. It arises in many applications, including blind image deblurring \cite{Kundur1996}, blind channel equalization \cite{LitwinL.R.1999}, speech dereverberation \cite{Naylor2010}, and seismic data analysis \cite{Yilmaz2001}. Without further assumptions, BD is an ill-posed problem, and does not yield a unique solution. In this paper, we focus on subspace or sparsity assumptions on the signal and the filter. These priors, which render BD better-posed by reducing the search space, were shown to be effective constraints or regularizers in various applications\cite{Chan1998,Herrity2008b,Asif2009,Krishnan2011,Repetti2015,Ahmed2014}. However, despite the success in practice, the theoretical results on uniqueness in BD with a subspace or sparsity constraint are limited.

Recently, the ``lifting'' scheme -- recasting bilinear or quadratic inverse problems, such as blind deconvolution and phase retrieval, as rank-$1$ matrix recovery from linear measurements -- has attracted considerable attention \cite{Ahmed2014,Candes2013a}. Choudhary and Mitra \cite{Choudhary2013a} showed that identifiability in BD (or in any bilinear inverse problem) hinges on the set of rank-$2$ matrices in a certain nullspace. In particular, they showed a negative result that the solution to blind deconvolution with a canonical sparsity prior, that is, sparsity over the natural basis, is \emph{not} identifiable \cite{Choudhary2014a}. However, the authors did not analyze the identifiability of signals that are sparse over other dictionaries. Eldar et al. \cite{Eldar2012} derived tight sufficient conditions for low-rank matrix recovery. However, the authors did not exploit any sparsity priors, and the results do not apply to structured measurements that arise in blind deconvolution. 

Using the lifting framework, Ahmed et al. \cite{Ahmed2014}, Ling and Strohmer \cite{Ling2015}, and Lee et al. \cite{Lee2015a,Lee2015} proposed algorithms to solve BD with with subspace constraints, mixed constraints, and sparsity constraints, respectively. Chi \cite{Chi2015} solved BD with mixed constraints, where the sparse spikes do not necessarily lie on a grid. They all showed successful recovery using convex programming or alternating minimization, which implies identifiability and stability. These results are constructive, being demonstrated by establishing performance guarantees of algorithms. However, the guarantees are only shown to hold with high probability. The probability of failure is \emph{nonzero}, and decays in a power-law form as the size of the problem increases.

In earlier papers \cite{Li2015b,Li2015c}, we addressed the identifiability up to scaling in single channel blind deconvolution under subspace or sparsity constraints. We presented the first algebraic sample complexities for BD with fully deterministic signal models. In particular, we showed that for BD of a pair of vectors in $\bbC^n$, with generic subspace constraints of dimensions $m_1$ and $m_2$, the bilinear mapping is injective if $n\geq m_1m_2$. However, the number of degrees of freedom in the unknown pair of vectors is only $m_1+m_2-1$, hence the above result is suboptimal. Similarly, the sample complexities for BD with sparsity or with mixed constraints are $n\geq 2s_1s_2$ and $n\geq 2s_1m_2$, respectively, where $s_1$ and $s_2$ denote the sparsity levels of the signal and the filter. Here the cost for the unknown support is an extra factor of 2. These results suffer from the same suboptimality as the results for the subspace constraints, in comparison to the number of degrees of freedom of the continuous-valued unknowns.

In this paper, we finally bridge this gap. We show nearly optimal sufficient conditions for identifiability and stability in blind deconvolution that match the number of degrees of freedom in the unknowns. Results are given for the cases of subspace constraints, sparsity constraints, or mixed constraints, and for complex or real signal and filter.
The results of this paper provide the first \emph{tight} sample complexity bounds, without constants or log factors, for unique and stable recovery in blind deconvolution. Such tight bounds were not achieved (either for unique or for stable recovery) in any of the previous works \cite{Ahmed2014,Ling2015,Lee2015a,Chi2015}.

The tight sample complexities in the identifiability results apply to almost all\footnote{Results of similar nature, in that they apply to ``almost all'' objects of interest, have been derived for FIR multichannel deconvolution \cite{Harikumar1998} and for low-rank matrix recovery \cite{Eldar2012}.} bases or frames. Given a sufficient number of measurements, the conditions for unique recovery are violated only on a set of Lebesgue measure zero.  In this sense, these results are deterministic, requiring no probabilistic assumptions. As an immediate corollary though, if the bases or frames are drawn from any probability distribution that is absolutely continuous with respect to the Lebesgue measure (e.g., the entries are jointly Gaussian with a non-singular covariance, or i.i.d. following a uniform distribution, etc.), then the results in this paper hold: they imply that the signal and the filter are identifiable with probability $1$, which is better than being identifiable with high probability as in previous works \cite{Ahmed2014,Ling2015,Lee2015a,Chi2015}.

The unique recovery results are complemented by matching stability results. If the bases or frames follow a distribution specified later in this paper, then under the same sample complexities as in the identifiability results, the recovery is stable with high probability against small perturbations in the measurements. In this paper, the probability of failure decays in an exponential form  as the size of the problem increases, faster than the power-law decay in previous works \cite{Ahmed2014,Ling2015,Lee2015a,Chi2015}.

One of the main technical tools for the derivation of our results are results on information-theoretic limits of low-rank matrix recovery. Inspired by the brilliant work of Riegler et al. \cite{Riegler2015} on such limits for the real matrix case, we extend the results to complex matrix recovery problems. The contribution of our extension is two-fold: (i) we provide a simpler proof that gets rid of some unnecessary technicalities; and (ii) we prove a new concentration of measure inequality that enables the extension to the complex case. These results may be of independent interest.

Although all the main results of this paper are stated and proved for 1D circular convolution, they translate to 2D or higher-dimensional circular convolutions, by replacing the 1D discrete Fourier transform (DFT) with 2D or higher-dimensional DFT's.

The rest of the paper is organized as follows. In Section \ref{sec:probstat}, we formally state the blind deconvolution problem and its connection with matrix recovery. In Section \ref{sec:main}, we state our main results for the identifiability and stability in blind deconvolution of complex signals and of real signals. In Section \ref{sec:mr}, we extend the result for real matrix recovery \cite{Riegler2015} to complex matrix recovery. We prove the main results in Section \ref{sec:proof}, and conclude the paper in Section \ref{sec:conclusions}.

\section{Problem Statement}\label{sec:probstat}
\subsection{Notations}
We use lower-case letters $x$, $y$, $z$ to denote vectors, and upper-case letters $D$ and $E$ to denote matrices. We use $F$ to denote the normalized (unitary) discrete Fourier transform (DFT) matrix. Unless otherwise stated, all vectors are column vectors. The dimensions of all vectors and matrices are made clear in the context. We use superscript letters to denote subvectors or submatrices. For example, the scalar $x^{(j)}$ represents the $j$th entry of $x$. The vector $D^{(j,:)}$ represents the $j$th row of the matrix $D$. The colon notation is borrowed from MATLAB. The transpose and conjugate transpose to a matrix $A$ are denoted by $A^T$ and $A^*$, respectively. The inner product of two matrices $A$ and $M$ are denoted by $\left<A,M\right>=\operatorname{trace}(A^*M)$. We use $\norm{\cdot}_0$ to denote the $\ell_0$ ``norm'', or number of nonzero entries. We use $\norm{\cdot}_2$ to denote the $\ell_2$ norm of a vector or the spectral norm of a matrix, and $\norm{\cdot}_\mathrm{F}$ to denote  the Frobenious norm of a matrix. We use $\odot$ to denote entrywise product. Circular convolution is denoted by $\circledast$. 

We say a subset $\Omega_\calM$ of a linear vector space is a cone, if for every $M\in\Omega_\calM$ and every $\sigma>0$, the scaled vector $\sigma M\in \Omega_\calM$. The real and imaginary parts of a complex vector are denoted by $\real(x)$ and $\imag(x)$, respectively. If $\Omega_\calX$ is a subset of $\bbC^m$, then we use $\real(\Omega_\calX) = \{\real(x): x\in\Omega_\calX\}$, and $\imag(\Omega_\calX) = \{\imag(x): x\in\Omega_\calX\}$ to denote the real and imaginary parts of $\Omega_\calX$. The unit ball in $\bbR^{m}$ (with respect to the $\ell_2$ norm) centered at the origin is denoted by $\calB_{\bbR^{m}}$. Then $x+R\calB_{\bbR^{m}}$ denotes the ball in $\bbR^{m}$ of radius $R$ centered at $x$. Similarly, the unit ball in $\bbC^{m_1\times m_2}$ (with respect to the Frobenius norm) centered at the origin is denoted by $\calB_{\bbC^{m_1\times m_2}}$. Then $M+R\calB_{\bbC^{m_1\times m_2}}$ denotes the ball in $\bbC^{m_1\times m_2}$ of radius $R$ centered at $M$. We use $V_{\bbC^{m}}(R)=\int_{R\calB_{\bbC^{m}}} ~\rmd x$ to denote the volume of a ball of radius $R$ in $\bbC^{m}$. Here, the multiple integral of a real-valued function $f(x)$ over $\Omega_\calX\subset\bbC^{m}$ is defined as the multiple integral of $f(y^{(1:m)}+\sqrt{-1}y^{(m+1:2m)})$ over $\{y\in\bbR^{2m}: y^{(1:m)}+\sqrt{-1}y^{(m+1:2m)}\in\Omega_\calX\}$.

We say a property holds for almost all vectors/matrices (generic vectors/matrices) if the property holds for all vectors/matrices but a set of Lebesgue measure zero.

\subsection{Blind Deconvolution}\label{sec:bdstat}
In this paper, we study the blind deconvolution (BD) problem with the circular convolution model. It is the joint recovery of two vectors $u_0\in\bbC^n$ and $v_0\in\bbC^n$, namely the signal and the filter,\footnote{Due to symmetry, the name ``signal'' and ``filter'' can be used interchangeably.} given their circular convolution $z=u_0\circledast v_0$, subject to subspace or sparsity constraints. The constraint sets $\Omega_\calU$ and $\Omega_\calV$ are subsets of $\bbC^n$. With these definitions, the BD problem is written as follows:
\begin{align*}
\text{Find}~~&(u,v),\\
\text{s.t.}~~&u\circledast v = z,\\
& u \in \Omega_\calU,~ v \in \Omega_\calV.
\end{align*}
We consider the following scenarios for the constraints:
\begin{enumerate}
	\item \emph{(Subspace Constraints)} The signal $u$ and the filter $v$ reside in lower-dimensional subspaces spanned by the columns of $D\in \bbC^{n\times m_1}$ and $E\in \bbC^{n\times m_2}$, respectively, with $m_1,m_2<n$. The matrices $D$ and $E$ have full column ranks. The signal $u=Dx$ for some $x\in\bbC^{m_1}$. The filter $v=Ey$ for some $y\in\bbC^{m_2}$. 
	\item \emph{(Sparsity Constraints)} The signal $u$ and the filter $v$ are sparse over given dictionaries formed by the columns of $D\in \bbC^{n\times m_1}$ and $E\in \bbC^{n\times m_2}$, with sparsity level $s_1$ and $s_2$, respectively. Here $m_1$ and $m_2$ don't have to be smaller than $n$. The matrices $D$ and $E$ are bases or frames that satisfy the spark condition \cite{Donoho2003}: the spark, namely the smallest number of columns that are linearly dependent, of $D$ (resp. $E$) is greater than $2s_1$ (resp. $2s_2$). The signal $u=Dx$ for some $x\in\bbC^{m_1}$ with $\|x\|_0\leq s_1$. The filter $v=Ey$ for some $y\in\bbC^{m_2}$ with $\|y\|_0\leq s_2$.
	\item \emph{(Mixed Constraints)} The signal $u$ is sparse over a given dictionary $D\in \bbC^{n\times m_1}$, and the filter $v$ resides in a lower-dimensional subspace spanned by the columns of $E\in \bbC^{n\times m_2}$, with $m_2<n$. The matrix $D$ satisfies the spark condition, and $E$ has full column rank. The signal $u=Dx$ for some $x\in\bbC^{m_1}$ with $\|x\|_0\leq s_1$. The filter $v=Ey$ for some $y\in\bbC^{m_2}$.\footnote{We can also consider the scenario where $u$ resides in a subspace spanned by the columns of $D$, and $v$ is sparse over $E$. By symmetry, the analysis will be almost identical, and thus omitted.}
\end{enumerate}

In all three scenarios, the vectors $x$, $y$, and $z$ reside in Euclidean spaces $\bbC^{m_1}$, $\bbC^{m_2}$ and $\bbC^{n}$. Given the measurement $z=(Dx_0)\circledast(Ey_0)$, the blind deconvolution problem can be rewritten in the following form:
\begin{align*}
\text{(BD)}\qquad\text{Find}~~&(x,y),\\
\text{s.t.}~~&(Dx)\circledast (Ey) = z,\\
& x \in \Omega_\calX,~ y \in \Omega_\calY.
\end{align*}
If $D$ and $E$ satisfy the full column rank condition or the spark condition, then 
the uniqueness of $(u,v)$ is equivalent to the uniqueness of $(x,y)$. Indeed, the full rank or spark conditions are satisfied for almost all $D$ and $E$. Therefore, the results about the recovery of $(x,y)$ in BD with generic bases or frames imply the corresponding results for $(u,v)$.
For simplicity, we will discuss problem (BD) from now on. The constraint sets $\Omega_\calX$ and $\Omega_\calY$ depend on the constraints on the signal and the filter. For subspace constraints, $\Omega_\calX$ and $\Omega_\calY$ are $\bbC^{m_1}$ and $\bbC^{m_2}$, respectively. For sparsity constraints, $\Omega_\calX$ and $\Omega_\calY$ are $\{x\in \bbC^{m_1}: \|x\|_0\leq s_1\}$ and $\{y\in \bbC^{m_2}: \|y\|_0\leq s_2\}$, respectively.

\subsection{Identifiability up to Scaling}
An important question concerning the blind deconvolution problem is to determine when it admits a unique solution. The BD problem suffers from scaling ambiguity.
For any nonzero scalar $\sigma \in \bbC$ such that $\sigma x_0 \in \Omega_\calX$ and $\frac{1}{\sigma} y_0 \in \Omega_\calY$, $(D(\sigma x_0))\circledast (E(\frac{1}{\sigma} y_0))=(Dx_0)\circledast (Ey_0)=z$. Therefore, BD does not yield a unique solution if $\Omega_\calX,\Omega_\calY$ contain such scaled versions of $x_0,y_0$ (which is the case for the subspace or sparsity constraint sets in the previous section).
Any valid definition of unique recovery in BD must address this issue. Our approach is as follows. If every solution $(x,y)$ is a scaled version of $(x_0,y_0)$, then we say that $(x_0,y_0)$ can be uniquely identified up to scaling.\footnote{Unconstrained BD suffers also from shift ambiguity. If the signal and the filter are circularly shifted by $\ell$ and $-\ell$, respectively, their circular convolution remains the same. However, the BD problem with generic basis or frames does not suffer from shift ambiguity. If the signal and the filter are shifted, then they no longer reside in the same generic subspaces, or are no longer sparse with respect to the same generic dictionaries, as before.}We also consider the case when this property is satisfied by all pairs $(x_0,y_0)$ of interest. Thus we define identifiability as follows.
\begin{definition}\label{def:identifiability}
~
\begin{enumerate}
	\item \textbf{Weak identifiability:} We say that the pair $(x_0,y_0)\in \Omega_\calX\times \Omega_\calY$, in which $x_0\neq 0$ and $y_0\neq 0$, is identifiable up to scaling, if every solution $(x,y)\in \Omega_\calX\times \Omega_\calY$ satisfies $x=\sigma x_0$ and $y=\frac{1}{\sigma} y_0$ for some nonzero $\sigma$.
	\item \textbf{Strong identifiability:} We say that the set $\Omega_\calX\times \Omega_\calY$ is identifiable up to scaling, if every pair $(x_0,y_0)\in \Omega_\calX\times \Omega_\calY$ that satisfies $x_0\neq 0$ and $y_0\neq 0$ is identifiable up to scaling.
\end{enumerate}
\end{definition}

For blind deconvolution, there exists a linear operator $\mathcal{G}_{DE}:\bbC^{m_1\times m_2}\rightarrow \bbC^n$ such that
\begin{equation}
\mathcal{G}_{DE}(xy^T)=(Dx)\circledast(Ey). \label{eq:G_DE}
\end{equation}
Given the measurement $z=\mathcal{G}_{DE}(x_0y_0^T)=(Dx_0)\circledast(Ey_0)$, one can recast the BD problem as the recovery of the rank-$1$ matrix $M_0=x_0y_0^T \in \Omega_\calM=\{xy^T:x\in \Omega_\calX,y\in \Omega_\calY\}$. Using this so-called ``lifting''\cite{Ahmed2014} procedure, the lifted BD problem has the following form:
\begin{align*}
\text{(Lifted BD)}\qquad\text{Find}~~& M,\\
\text{s.t.}~~&\mathcal{G}_{DE}(M) = z,\\
& M\in \Omega_\calM.
\end{align*}
The uniqueness of $M_0$ is equivalent to the identifiability of $(x_0,y_0)$ up to scaling. In (Lifted BD), \emph{weak identifiability} means the recovery of $M_0$ is unique, or $M_0$ is the only point in $\Omega_\calM$ that maps to $\mathcal{G}_{DE}(M_0)$. \emph{Strong identifiability} means the recovery of all matrices in $\Omega_\calM$ is unique, that is $\mathcal{G}_{DE}$ is injective on $\Omega_\calM$, i.e., there exists $\mathcal{G}_{DE}^{-1}: \mathcal{G}_{DE}(\Omega_\calM)\rightarrow \Omega_\calM$.

Since $\Omega_\calX$ and $\Omega_\calY$ are cones, the lifted constraint set $\Omega_\calM$ is also a cone. As shown later, for the linear operator $\mathcal{G}_{DE}$ and the cone constraint set $\Omega_\calM$, identifiability on $\Omega_\calM$ is essentially the same as identifiability on the constraint set restricted to the unit ball $\Omega_\calM\bigcap \calB_{\bbC^{m_1\times m_2}}$. From now on, we use the shorthand notation
\begin{equation}
\Omega_\calB \defeq \Omega_\calM\bigcap \calB_{\bbC^{m_1\times m_2}}.  \label{eq:shorthand}
\end{equation}
Hence $\sigma\Omega_\calB = \Omega_\calM\bigcap \sigma\calB_{\bbC^{m_1\times m_2}}$.

\subsection{Stable Recovery}
Noise is ubiquitous in real-world applications. In a noisy setting, the measurement in matrix recovery is $z = \mathcal{G}_{DE}(M_0)+e$, where $M_0=x_0y_0^T$ denotes the true rank-$1$ matrix, and $e$ denotes noise or other perturbation in the measurement. In order to estimate $M_0$ from the measurement $z$, we consider the following constrained least squares problem:
\begin{align*}
\text{(Noisy BD)}\qquad\underset{M}{\min .}~~& \norm{\mathcal{G}_{DE}(M)-z}_2,\\
\text{s.t.}~~& M\in \sigma\Omega_\calB,
\end{align*}
where $\sigma\Omega_\calB = \{xy^T:x\in\Omega_\calX,y\in\Omega_\calY,\norm{xy^T}_\mathrm{F}\leq \sigma\}$. For all practical purposes, the solution to a blind deconvolution problem is bounded. Therefore, we solve (Noisy BD) subject to the constraint set restricted to a ball, whose radius $\sigma$ is sufficiently large.

We introduce the following two notions of stability of recovery:
\begin{definition}\label{def:stability}
~
\begin{enumerate}
	\item \textbf{Single point stability:} We say that the recovery of $M_0\in \sigma\Omega_\calB$, using measurement operator $\mathcal{G}_{DE}$ and constraint set $\sigma\Omega_\calB$, is stable, if for all $M\in\sigma\Omega_\calB$ such that $\norm{\mathcal{G}_{DE}(M)-\mathcal{G}_{DE}(M_0)}_2\leq \delta$, we have $\norm{M-M_0}_2\leq \varepsilon$.
	\item \textbf{Uniform stability:} We say that the recovery on $\sigma\Omega_\calB$ is uniformly stable if for all $M_1,M_2\in\sigma\Omega_\calB$ that satisfy $\norm{\mathcal{G}_{DE}(M_1)-\mathcal{G}_{DE}(M_2)}_2\leq \delta$, we have $\norm{M_1-M_2}_2\leq \varepsilon$.
\end{enumerate}
In both definitions, $\varepsilon=\varepsilon(\delta)$ is a function of $\delta$ that vanishes as $\delta$ approaches $0$. 
\end{definition}

It is easy to see that the stability as defined above, would guarantee the accuracy of the constrained least squares estimation. Let $M_1=x_1y_1^T$ denote the solution to (Noisy BD). Suppose the perturbation $e$ is small, i.e., $\norm{e}_2\leq \frac{\delta}{2}$ for some small $\delta > 0$. Then the deviation of $\mathcal{G}_{DE}(M_1)$ from $\mathcal{G}_{DE}(M_0)$ is small, i.e.,
\[
\norm{\mathcal{G}_{DE}(M_1)-\mathcal{G}_{DE}(M_0)}_2 \leq \norm{\mathcal{G}_{DE}(M_1)-z}_2+\norm{z-\mathcal{G}_{DE}(M_0)}_2 \leq 2\norm{\mathcal{G}_{DE}(M_0)-z}_2 = 2\norm{e}_2\leq \delta.
\]
By the definition of single point stability (or uniform stability), we have $\norm{M_1-M_0}_2\leq \varepsilon(\delta)$, which is also a small quantity.

If the recovery of $M_0$ is stable, then for every $\varepsilon>0$, there exists $\delta>0$ such that for every $M\in\sigma\Omega_\calB$ that satisfies $\norm{\mathcal{G}_{DE}(M)-\mathcal{G}_{DE}(M_0)}_2\leq \delta$, we have $\norm{M-M_0}_2\leq \varepsilon$. If the recovery is uniformly stable on $\sigma\Omega_\calB$, then for every $\varepsilon>0$, there exists $\delta>0$ such that for all $M_1,M_2\in\sigma\Omega_\calB$ that satisfy $\norm{\mathcal{G}_{DE}(M_1)-\mathcal{G}_{DE}(M_2)}_2\leq \delta$, we have $\norm{M_1-M_2}_2\leq \varepsilon$. If $\mathcal{G}_{DE}$ satisfies \emph{strong identifiability}, i.e., $\mathcal{G}_{DE}$ is invertible when restricted to $\Omega_\calM$, then \emph{single point stability} at $M_0$ implies that $\mathcal{G}_{DE}^{-1}$ is continuous at $\mathcal{G}_{DE}(M_0)$. \emph{Finally uniform stability} on $\sigma\Omega_\calB$ implies that $\mathcal{G}_{DE}^{-1}$ is uniformly continuous on $\mathcal{G}_{DE}(\sigma\Omega_\calB)$.

Suppose $\Omega_\calM$ is a cone, and we need to evaluate stability on $\sigma\Omega_\calB =\Omega_\calM\bigcap\sigma\calB_{\bbC^{m_1\times m_2}}$. We can scale $M_0$ and the radius of the ball by $\frac{1}{\sigma}$ simultaneously. If for all $M\in\Omega_\calB$ such that $\norm{\mathcal{G}_{DE}(M)-\mathcal{G}_{DE}(\frac{M_0}{\sigma})}_2\leq \delta$, we have $\norm{M-\frac{M_0}{\sigma}}_2\leq \varepsilon(\delta)$, then for all $M\in\sigma\Omega_\calB$ such that $\norm{\mathcal{G}_{DE}(M)-\mathcal{G}_{DE}(M_0)}_2\leq \delta$, we have $\norm{M-M_0}_2\leq \sigma\varepsilon(\frac{\delta}{\sigma})$. Therefore, we only need to consider the stability of recovery on the constraint set restricted to the unit ball, $\Omega_\calB$.

In the next section, we present the main results on the identifiability and stability in blind deconvolution, i.e., the optimal sample complexities that guarantee unique and stable recovery in (Lifted BD) and (Noisy BD), respectively.

\section{Main Results}\label{sec:main}

\subsection{Identifiability Results}\label{sec:identifiability}
Subspace membership and sparsity have been used as priors in blind deconvolution for a long time. Previous works either use these priors without theoretical justification \cite{Chan1998,Herrity2008b,Asif2009,Krishnan2011,Repetti2015}, or impose probabilistic models and show successful recovery with high probability \cite{Ahmed2014,Ling2015,Lee2015a,Chi2015}. The sufficient conditions for the identifiability in BD in our prequel paper \cite{Li2015b} are (except for a special class of so-called sub-band structured signals or filters) suboptimal. In this section, we present sufficient conditions for identifiability in BD, as defined in Section \ref{def:identifiability}, with minimal assumptions. First, the weak identifiability results in the following theorem are sharp to within one sample. 

\begin{theorem}[Weak Identifiability]\label{thm:weak}
If $n>d$, then for almost all $D\in\bbC^{n\times m_1}$ and $E\in\bbC^{n\times m_2}$, the pair $(x_0,y_0)\in\Omega_\calX\times \Omega_\calY$ ($x_0\neq 0$, $y_0\neq 0$) is identifiable up to scaling. Here, $d$ is the sample complexity bound, which is $m_1+m_2$, $s_1+m_2$, and $s_1+s_2$ in the subspace, mixed, and sparsity constraints scenarios, respectively. 
\end{theorem}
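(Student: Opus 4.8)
The plan is to lift blind deconvolution to low-rank matrix recovery and to reduce the identifiability claim to a null-space condition that can be settled by a dimension count (equivalently, by the complex low-rank matrix recovery result of Section~\ref{sec:mr}). By the discussion in Section~\ref{sec:probstat}, $(x_0,y_0)$ is identifiable up to scaling if and only if $M_0=x_0y_0^T$ is the unique point of $\Omega_\calM$ with $\mathcal{G}_{DE}(M)=\mathcal{G}_{DE}(M_0)$, and, since $\mathcal{G}_{DE}$ is linear, this is equivalent to
\[
\ker\mathcal{G}_{DE}\cap\bigl(\Omega_\calM-M_0\bigr)=\{0\}.
\]
The set $\Omega_\calM-M_0$ is a translate of the variety of matrices of rank at most one (subspace case), a finite union --- over support patterns $(S_1,S_2)$ with $|S_1|\leq s_1$ and $|S_2|\leq s_2$ --- of translates of the varieties of matrices of rank at most one supported on $S_1\times S_2$ (sparsity case), or a mixture of the two; in all three cases its complex dimension equals $d-1$, with $d$ as in the statement. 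In the sparsity and mixed scenarios I would first record that the spark hypothesis on $D$ (and on $E$ when relevant) forces any equality $M=M_0$ of lifted matrices to come from sparse factors agreeing up to scaling, so that uniqueness at the matrix level is genuinely equivalent to the asserted identifiability (and, in turn, to recovery of $(u,v)$).

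Next I would pass to the Fourier domain, where circular convolution diagonalises: writing $\tilde D=FD$ and $\tilde E=FE$, the $j$th entry of $F\mathcal{G}_{DE}(M)$ is proportional to $\tilde D^{(j,:)}M(\tilde E^{(j,:)})^T$, so $\mathcal{G}_{DE}$ is unitarily equivalent to the operator on $\bbC^{m_1\times m_2}$ whose $n$ measurement functionals are the rank-one functionals $M\mapsto\tilde D^{(j,:)}M(\tilde E^{(j,:)})^T$. The key structural point is that, $F$ being invertible, ``almost all $(D,E)$'' coincides with ``almost all $(\tilde D,\tilde E)$'', and the rows $\tilde D^{(1,:)},\dots,\tilde D^{(n,:)}$ and $\tilde E^{(1,:)},\dots,\tilde E^{(n,:)}$ may be treated as varying \emph{freely and independently}. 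I would then form the incidence variety
\[
\mathcal{Z}=\bigl\{(M,D,E):\ M\in\Omega_\calM,\ M\neq M_0,\ \mathcal{G}_{DE}(M-M_0)=0\bigr\}
\]
and bound its dimension by projecting onto $M$: for a fixed $M$ with $N\defeq M-M_0\neq0$, the fibre $\{(D,E):\mathcal{G}_{DE}(N)=0\}$ is cut out, in the Fourier coordinates, by the $n$ equations $\tilde D^{(j,:)}N(\tilde E^{(j,:)})^T=0$, each a single bilinear form that is nonvanishing as a polynomial (because $N\neq0$), and the $n$ equations involve pairwise disjoint sets of coordinates; hence the fibre is a product of $n$ hypersurfaces and has codimension exactly $n$. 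Therefore $\dim\mathcal{Z}\leq(d-1)+\bigl(n(m_1+m_2)-n\bigr)$, so the projection of $\mathcal{Z}$ onto the $(D,E)$-space --- the ``bad'' set --- has dimension strictly below $n(m_1+m_2)$ as soon as $n\geq d$, which is covered by the hypothesis $n>d$. A proper complex subvariety is Lebesgue-null; intersecting with the Zariski-open, nonempty (hence full-measure) locus on which the spark and full-column-rank hypotheses hold keeps full measure; and in the sparsity and mixed scenarios $\mathcal{Z}$, and hence the bad set, decomposes into a finite union over support patterns, to each of which the same count applies, so the bad set remains a finite union of null sets. Alternatively, one checks that for generic $(D,E)$ the operator $\mathcal{G}_{DE}$ lies in the scope of the complex matrix-recovery theorem of Section~\ref{sec:mr} applied to $\Omega_\calM-M_0$, and invokes it.

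The main obstacle is exactly this transfer from \emph{generic unstructured} linear maps --- to which a result of the Section~\ref{sec:mr} type naturally applies --- to the \emph{structured} family $\{\mathcal{G}_{DE}\}$: the kernel of $\mathcal{G}_{DE}$ is never a generic subspace of $\bbC^{m_1\times m_2}$, being always the common kernel of $n$ rank-one functionals, so nothing can be quoted off the shelf about random subspaces. The Fourier decoupling above is what resolves this, presenting $\{\mathcal{G}_{DE}\}$, in a measure-preserving way, as parametrised by $n$ freely and independently varying pairs of functional-defining vectors --- precisely the structure the dimension count needs. The remaining points --- that $\tilde D^{(j,:)}N(\tilde E^{(j,:)})^T$ is a nonzero bilinear form for every $N\neq0$, that deleting the single point $M_0$ leaves $\mathcal{Z}$ constructible, and that the spark and full-column-rank loci are open and nonempty --- are routine.
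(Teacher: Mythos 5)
Your proposal is correct, but it takes a genuinely different route from the paper. The paper performs the same lifting, the same DFT diagonalisation, and the same observation that almost all $(D,E)$ corresponds to almost all rank-one measurement vectors $a_j=(FD)^{(j,:)*}$, $b_j=(FE)^{(j,:)*}$; from that point, however, it invokes the measure-theoretic machinery of Section \ref{sec:mr}: Corollary \ref{cor:weak} (resting on Theorem \ref{thm:extension}) gives uniqueness of $M_0$ whenever $2n>\underline{\dim}_\mathrm{B}(\Omega_\calB)$, proved by covering the constraint set at scale $\rho$, taking a union bound, and controlling each term with the concentration inequality $\bbP\bigl[|a_j^*M\overline{b_j}|\leq\rho\bigr]\lesssim\rho^2$ of Lemma \ref{lem:concentration2}; Lemma \ref{lem:minkowski} then bounds the upper Minkowski dimension by $2d$, yielding $n>d$. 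You instead exploit the fact that in all three scenarios $\Omega_\calM$ is a finite union of algebraic varieties and run an incidence-variety dimension count, using the same decoupling of the $n$ bilinear constraints into disjoint coordinate blocks that the paper uses (there, to factor the probability into $n$ independent terms). Each approach buys something. Yours is self-contained and more elementary for the identifiability claim, and because it uses the exact complex dimension $d-1$ of the cone $\Omega_\calM$ rather than the slightly lossy bound $\overline{\dim}_\mathrm{B}(\Omega_\calB)\leq 2d$ (which does not credit the scaling redundancy in $xy^T$), it even yields the marginally sharper threshold $n\geq d$. The paper's route, on the other hand, applies to arbitrary bounded constraint sets of low Minkowski dimension with no algebraicity assumption, and its quantitative concentration bounds are reused essentially verbatim to prove the stability results of Theorem \ref{thm:stability}, which a pure dimension count cannot deliver. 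The steps you flag as routine (nonvanishing of the bilinear form for $N\neq 0$, constructibility after deleting $M_0$, openness and nonemptiness of the spark and full-rank loci, and the passage from a constructible set of deficient complex dimension to a Lebesgue-null set) are indeed routine, so I see no gap.
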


The above sufficient condition is appealing since it approaches the information-theoretic limit of blind deconvolution. In BD with subspace, mixed, and sparsity constraints, the number of degrees of freedom in the unknowns is $m_1+m_2-1$, $s_1+m_2-1$, and $s_1+s_2-1$, respectively. Therefore, to within one sample difference, the sample complexity presented above is optimal.

This result is a sufficient condition for weak identifiability. Unlike our results on BD with generic bases or frames in \cite{Li2015b}, which guarantee the injectivity of the bilinear mapping of circular convolution, this result only guarantees the identifiability of one pair $(x_0,y_0)$ in the constraint set. A sufficient condition for strong identifiability, which applies uniformly to all pairs $(x_0,y_0)$ in the constraint set, is presented next. In comparison to the optimal result in Theorem \ref{thm:weak}, the cost for strong identifiability is a factor of $2$ in the sample complexity.

\begin{theorem}[Strong Identifiability]\label{thm:strong}
If $n>2d$, then for almost all $D\in\bbC^{n\times m_1}$ and $E\in\bbC^{n\times m_2}$, all pairs $(x_0,y_0)\in\Omega_\calX\times \Omega_\calY$ ($x_0\neq 0$, $y_0\neq 0$) are identifiable up to scaling. Here, $d$ is the same as in Theorem \ref{thm:weak}.
\end{theorem}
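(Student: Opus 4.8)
The plan is to move to the lifted picture and reduce strong identifiability to a statement about $\ker\mathcal{G}_{DE}$, which I then settle by a parametrized dimension count over the bases/frames. As discussed in Section~\ref{sec:probstat}, $\Omega_\calX\times\Omega_\calY$ is identifiable up to scaling if and only if $\mathcal{G}_{DE}$ is injective on the cone $\Omega_\calM=\{xy^T:x\in\Omega_\calX,\,y\in\Omega_\calY\}$, i.e. if and only if $\ker\mathcal{G}_{DE}$ meets the difference set $\Omega_\calM-\Omega_\calM$ only at the origin. Every element of $\Omega_\calM-\Omega_\calM$ has the form $N=x_1y_1^T-x_2y_2^T$ with $x_i\in\Omega_\calX$, $y_i\in\Omega_\calY$, and when $N=0$ with $x_i,y_i\neq0$ the two rank-one factorizations force the scaling relation of Definition~\ref{def:identifiability}. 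Hence it suffices to show: for almost all $(D,E)$ one has $\mathcal{G}_{DE}(N)\neq0$ for every such $N\neq0$. (A further standard measure-zero exclusion rules out degenerate solutions with a vanishing component, and the full-column-rank / spark conditions, also generic, are only needed to pass from $(x,y)$ back to the original pair $(u,v)$.)

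The core is a codimension estimate for a single $N$. In the Fourier domain, $\mathcal{G}_{DE}(N)=0$ is equivalent to $\tilde d_j^T N\,\tilde e_j=0$ for $j=1,\dots,n$, where $\tilde d_j\in\bbC^{m_1}$ and $\tilde e_j\in\bbC^{m_2}$ are the transposed $j$th rows of $FD$ and $FE$. Since $F$ is invertible, the map $(D,E)\mapsto\bigl((\tilde d_1,\tilde e_1),\dots,(\tilde d_n,\tilde e_n)\bigr)$ is a linear isomorphism onto $(\bbC^{m_1}\times\bbC^{m_2})^n$, so the $n$ scalar conditions are conditions on $n$ disjoint blocks of coordinates, each block appearing in exactly one of them. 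For fixed $N\neq0$ the bilinear form $(a,b)\mapsto a^TNb$ is not identically zero, so each block equation $\tilde d_j^TN\,\tilde e_j=0$ defines a hypersurface in $\bbC^{m_1}\times\bbC^{m_2}$. Therefore $\{(D,E):\mathcal{G}_{DE}(N)=0\}$ has complex codimension exactly $n$, i.e. complex dimension $n(m_1+m_2)-n$.

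Now assemble an incidence-variety argument. In the subspace case, let $P$ parametrize $(x_1,y_1,x_2,y_2)\in\bbC^{m_1}\times\bbC^{m_2}\times\bbC^{m_1}\times\bbC^{m_2}$, of complex dimension $2(m_1+m_2)=2d$, let $P^\circ\subseteq P$ be the Zariski-open set where $x_1y_1^T\neq x_2y_2^T$, and set
\[
\mathcal{I}=\bigl\{(x_1,y_1,x_2,y_2,D,E):(x_1,y_1,x_2,y_2)\in P^\circ,\ \mathcal{G}_{DE}(x_1y_1^T-x_2y_2^T)=0\bigr\}.
\]
This is a constructible set; its projection to $P^\circ$ has every fiber of complex dimension $n(m_1+m_2)-n$ by the codimension estimate, so the standard bound relating the dimension of a variety to its base and fiber dimensions gives $\dim\mathcal{I}\leq 2d+n(m_1+m_2)-n$. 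Projecting $\mathcal{I}$ instead to the $(D,E)$-coordinates yields a constructible set whose Zariski closure has dimension at most $2d+n(m_1+m_2)-n$, strictly smaller than $n(m_1+m_2)=\dim\bigl(\bbC^{n\times m_1}\times\bbC^{n\times m_2}\bigr)$ exactly when $n>2d$. A proper subvariety of a complex vector space is Lebesgue-null, so every $(D,E)$ outside it admits no bad $N$, which is strong identifiability. The mixed and sparsity cases are identical once $P$ is replaced by the finite union, over the admissible support patterns for $x_1,x_2$ (and also for $y_1,y_2$ in the sparsity case), of the corresponding coordinate subspaces; each piece still has complex dimension $2d$ (with $d=s_1+m_2$, resp. $d=s_1+s_2$), and a finite union leaves the dimension count unchanged.

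I expect the codimension estimate of the second paragraph to be the crux: it is precisely the Fourier-domain decoupling — the $n$ scalar constraints living on disjoint coordinate blocks — that makes them ``independent'' and pins the threshold at the sharp value $n>2d$ rather than something larger. Everything else is bookkeeping: constructibility, the fiber-dimension inequality, ``proper subvariety $\Rightarrow$ measure zero'', the finite union over support patterns, and the degenerate-solution checks. I also note that the very same scheme with $(x_0,y_0)$ held fixed — so that the parameter space of the single free pair has complex dimension $d$ rather than $2d$ — yields Theorem~\ref{thm:weak}; the factor $2$ in $n>2d$ is exactly the price of letting both pairs vary.
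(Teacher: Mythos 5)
Your proof is correct, but it follows a genuinely different route from the paper's. The paper also passes to the lifted, Fourier-domain formulation and reduces strong identifiability to the absence of nonzero elements of $\Omega_\calM-\Omega_\calM$ in the nullspace of the measurement operator, but from there it invokes a general complex matrix-recovery result (Theorem \ref{thm:extension}, extending Riegler et al.): the nullspace condition holds for almost all rank-one measurement vectors once $2n$ exceeds the lower Minkowski dimension of the constraint set. That theorem is proved by covering the set (intersected with norm shells $\frac1L\le\norm{M}_2\le L$) with balls, applying a concentration-of-measure inequality $\bbP\left[\left|a^*M\overline{b}\right|\le\rho\right]\lesssim\rho^2$ for $a,b$ uniform on balls, and taking a union bound; the threshold $n>2d$ then comes from the Minkowski-dimension bounds of Lemma \ref{lem:minkowski} together with $\overline{\dim}_\mathrm{B}(\Omega_\calM-\Omega_\calM)\le 2\,\overline{\dim}_\mathrm{B}(\Omega_\calM)$ (Lemma \ref{lem:setminus}). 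Your incidence-variety dimension count replaces all of this covering/concentration machinery with classical algebraic geometry: the fiberwise codimension-$n$ estimate (valid precisely because the $n$ Fourier-domain constraints live on disjoint coordinate blocks), the fiber-dimension inequality for constructible sets, and ``proper subvariety implies Lebesgue-null.'' What your route buys is a shorter, more self-contained argument and a structurally stronger conclusion --- the exceptional set of $(D,E)$ is contained in a proper algebraic subvariety, not merely a null set; what the paper's route buys is generality (it applies to any bounded constraint set of small Minkowski dimension, not only the algebraic sets arising here) and, more importantly, the concentration estimates it develops are reused essentially verbatim to prove the quantitative stability results of Theorem \ref{thm:stability}, which a pure dimension count cannot deliver. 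The factor-of-two bookkeeping is the same in both proofs: you parametrize differences by quadruples of dimension $2d$, while the paper doubles the Minkowski dimension of $\Omega_\calB$ when passing to the difference set.
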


The above results hold true for almost all complex matrices $D$ and $E$. However, in many real-world applications, both the signal and the filter are real vectors. Therefore, it is worthwhile to consider the special case where $D\in\bbR^{n\times m_1}$, $E\in\bbR^{n\times m_2}$, $x\in\bbR^{m_1}$, and $y\in\bbR^{m_2}$. We show that the same sample complexities still hold in this special case.
\begin{theorem}\label{thm:real}
In the case where $D$, $E$, $x$, and $y$ are real, the sample complexities in Theorems \ref{thm:weak} and \ref{thm:strong} hold for almost all $D\in\bbR^{n\times m_1}$ and $E\in\bbR^{n\times m_2}$.
\end{theorem}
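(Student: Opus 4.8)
The plan is to reduce the real case to the complex results already established in Theorems~\ref{thm:weak} and \ref{thm:strong} by showing that the ``bad set'' of real pairs $(D,E)$ is contained in (the real points of) an algebraic variety of complex codimension at least one, hence has Lebesgue measure zero in $\bbR^{n\times m_1}\times\bbR^{n\times m_2}$. First I would recall that in all three constraint scenarios the identifiability of $(x_0,y_0)$ up to scaling (weak) or of all pairs (strong) is, via the lifting $\mathcal{G}_{DE}$, equivalent to a rank condition: weak identifiability fails exactly when $\mathcal{G}_{DE}$ annihilates some nonzero matrix $M$ in $\Omega_\calM - \Omega_\calM$ that is the difference $x_0y_0^T - xy^T$ with the appropriate support structure; strong identifiability fails exactly when $\mathcal{G}_{DE}$ annihilates some such difference of two generic rank-one matrices with the appropriate supports. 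In the proof of the complex theorems this is cast as a statement that a certain determinantal / resultant polynomial in the entries of $D$ and $E$ does not vanish identically; call this polynomial $p(D,E)$. The key observation is that the entries of $\mathcal{G}_{DE}$ — built from $D$, $E$, and the (complex) DFT matrix $F$ — are polynomials in the entries of $D$ and $E$ with coefficients in $\bbC$, so $p$ is a polynomial with complex coefficients in the real variables $\{D^{(i,j)}, E^{(k,l)}\}$.

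The main step is then: since the complex theorems assert $p$ does not vanish on an open dense subset of $\bbC^{n\times m_1}\times\bbC^{n\times m_2}$, the polynomial $p$ is not identically zero as a polynomial in its arguments; equivalently, writing $p = p_{\mathrm{Re}} + \sqrt{-1}\,p_{\mathrm{Im}}$ with $p_{\mathrm{Re}}, p_{\mathrm{Im}}$ real polynomials, at least one of $p_{\mathrm{Re}}, p_{\mathrm{Im}}$ is a nonzero real polynomial in the real variables. I would need to check that $p$ is not identically zero \emph{when restricted to real arguments} — this does not follow automatically from $p\not\equiv 0$ on $\bbC^{n\times m_1}\times\bbC^{n\times m_2}$, so here I would exhibit an explicit real $(D,E)$ (or a real subspace/sparse configuration) for which identifiability holds, using, e.g., a real Vandermonde-type construction or simply the observation that real DFT-based convolution configurations realizing the generic rank behavior exist. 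Given one such real witness, the real polynomial $p|_{\bbR}$ is not identically zero, and the standard fact that the zero set of a nonzero real polynomial has Lebesgue measure zero in $\bbR^{n m_1 + n m_2}$ finishes the argument: for almost all real $(D,E)$, $p(D,E)\neq 0$, hence the desired identifiability holds. For the sparsity and mixed scenarios I would additionally invoke, as in the complex proofs, that the real spark condition on $D$ (resp. $E$) also holds for almost all real matrices (again the vanishing of a real minor is a measure-zero event), and take the finite union over all admissible support patterns, which preserves the measure-zero conclusion.

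The hard part will be confirming that the generic-rank / nonvanishing witness used in the complex proofs can be taken real — in other words, that none of the resultant or determinantal certificates appearing in the proofs of Theorems~\ref{thm:weak}--\ref{thm:strong} secretly require genuinely complex entries (they should not, since the whole construction is about ranks of real-coefficient structured matrices, but this must be verified case by case across the subspace, mixed, and sparsity scenarios, and for both weak and strong identifiability). Once the real witness is in hand for each scenario, the remaining measure-theoretic reduction is routine. A minor subtlety worth spelling out is the role of the scaling ambiguity: in the real case $\sigma$ ranges over $\bbR\setminus\{0\}$ rather than $\bbC\setminus\{0\}$, but since ``identifiable up to scaling'' is defined with $\sigma$ in the appropriate field and the lifted matrix $M_0 = x_0 y_0^T$ is unaffected by which field $\sigma$ lives in, the lifting equivalence ``uniqueness of $M_0$ $\Leftrightarrow$ identifiability of $(x_0,y_0)$ up to scaling'' carries over verbatim, so no separate argument is needed there.
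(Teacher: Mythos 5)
Your proposal rests on a premise that does not hold for this paper: Theorems \ref{thm:weak} and \ref{thm:strong} are \emph{not} proved by exhibiting a determinantal or resultant polynomial $p(D,E)$ whose nonvanishing certifies identifiability. They are proved by a covering-number and concentration-of-measure argument (Theorem \ref{thm:extension} via Lemma \ref{cla:probability}): one covers the constraint set by balls, bounds the probability that a uniformly random pair $(a_j,b_j)$ nearly annihilates a fixed matrix, and plays the covering number (Minkowski dimension) against the number of measurements. No polynomial certificate is produced, so there is nothing to restrict to real arguments, and the scaffolding of your reduction collapses. Moreover, the step you yourself flag as the hard one --- exhibiting a single real witness $(D,E)$ achieving identifiability at the optimal sample complexity --- is essentially the entire content of Theorem \ref{thm:real}: no explicit construction of such a witness is known (the authors' earlier algebraic results only reach the suboptimal $n\geq m_1m_2$ regime), and in this paper the existence of a real witness is a \emph{consequence} of the theorem, not an ingredient. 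Deferring it leaves a genuine gap, not a routine verification.

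The paper's actual argument engages with two quantitative facts your proposal never touches, and whose cancellation is the whole point. First, when $x$ and $y$ are real, the upper Minkowski dimensions of the restricted constraint sets are half of those computed in Lemma \ref{lem:minkowski}, namely $m_1+m_2$, $s_1+m_2$, and $s_1+s_2$. Second, when $D$ and $E$ are real, the vectors $a_j=(FD)^{(j,:)*}$ and $b_j=(FE)^{(j,:)*}$ are complex but conjugate-symmetric, $a_j=\overline{a_{n+2-j}}$, so only $\lceil (n+1)/2\rceil$ of the measurement pairs are independent; of these, one or two are real-valued (each contributing a factor $\delta$ to the concentration bound, Lemma \ref{lem:concentration1}) while the rest are genuinely complex (each contributing $\delta^2$, Lemma \ref{lem:concentration2}). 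The product is still $O(\delta^n)$, which exactly offsets the halving of the Minkowski dimension and is why the sample complexity does not degrade (Lemma \ref{lem:real}). Your approach offers no mechanism for this cancellation; in particular it cannot explain why the loss of roughly half of the DFT measurements to conjugate symmetry does not cost a factor of $2$ in $n$.
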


All the results in this section are proved in Section \ref{sec:proof}. They hold for almost all matrices $D$ and $E$. When the sample complexity is met, the identifiability is violated only on a set of Lebesgue measure zero in the space of matrices $D$ and $E$. Therefore, if $D$ and $E$ are drawn from a distribution that is absolutely continuous with respect to the Lebesgue measure (e.g., $D$ and $E$ are independent random matrices whose entries are i.i.d. following a Gaussian distribution), then the identifiability result holds almost surely.

\subsection{Stability Results}\label{sec:stability}
The previous section gives the sample complexities that guarantee the identifiability in BD. Next, we show that the same sample complexity can guarantee stability. Recall that $\mathcal{G}_{DE}$ and $\Omega_\calB$ are defined in \eqref{eq:G_DE} and \eqref{eq:shorthand}, respectively. Here we only consider single point stability and uniform stability on $\Omega_\calB$, which correspond to Definition \ref{def:stability} with $\sigma = 1$. As argued before, stability on $\Omega_\calB$ implies stability on an arbitrary bounded set.

\begin{theorem}\label{thm:stability}
Assume that $D\in\bbC^{n\times m_1}$ and $E\in\bbC^{n\times m_2}$ are independent random matrices, such that the random vectors $\{(FD)^{(j,:)*}\}_{j=1}^{n}$ are i.i.d. following a uniform distribution on $R\calB_{\bbC^{m_1}}$, and $\{(FE)^{(j,:)*}\}_{j=1}^{n}$ are i.i.d. following a uniform distribution on $R\calB_{\bbC^{m_2}}$. 
\begin{enumerate}
	\item If $n>d$, then, with probability at least $1-C'(\frac{\delta^2}{R^4})^{n-d}(\frac{1}{\varepsilon^2})^n$, we have single point stability on $\Omega_\calB$.
	\item If $n>2d$, then, with probability at least $1-C''(\frac{\delta^2}{R^4})^{n-2d}(\frac{1}{\varepsilon^2})^n$, we have uniform stability on $\Omega_\calB$.
\end{enumerate}
Here, $d$ is the same sample complexity bound as in Theorems \ref{thm:weak} and \ref{thm:strong}. Except for a log factor, $C'$ and $C''$ only depend on $n$, $m_1$, $m_2$, $s_1$, and $s_2$. Define $C = 648~ m_1m_2 \left(1+2\ln\frac{2\sqrt{n}R^2}{3\delta}\right)$. The explicit expressions for $d$, $C'$, and $C''$ in the scenarios of subspace, mixed, or sparsity constraints are summarized in Table \ref{tab:constants}.
\end{theorem}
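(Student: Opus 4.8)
The plan is to reduce the stability statement to a quantitative injectivity bound on the linear operator $\mathcal{G}_{DE}$ restricted to an appropriate difference set, and then to prove that bound using the information-theoretic / volume-based machinery for complex low-rank matrix recovery developed in Section \ref{sec:mr}. First I would set up the difference sets: for single point stability at $M_0\in\Omega_\calB$ we must control $\inf \norm{\mathcal{G}_{DE}(M-M_0)}_2 / \norm{M-M_0}_2$ over $M\in\Omega_\calB\setminus\{M_0\}$, and for uniform stability we must control the same quantity over all $M_1\ne M_2$ in $\Omega_\calB$. Since $\Omega_\calB$ sits inside the cone $\Omega_\calM$ of rank-$\le 1$ matrices (subspace/mixed cases) or its sparse analogue, the difference $M-M_0$ is a rank-$\le 2$ matrix with controlled row/column support; crucially this difference set has the same "dimension count'' $d$ (or $2d$) as the constraint set itself, because one of the two rank-one components in $M - M_0$ can be absorbed into a bounded neighborhood of the fixed $M_0$ for single-point stability, whereas for uniform stability both components are free, doubling the count. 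So the key reduction is: $\varepsilon(\delta)$-stability holds as soon as $\mathcal{G}_{DE}$ is bounded below by $\delta/\varepsilon$ on the relevant normalized difference set, and it suffices to show this fails only with the stated small probability.

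The core of the argument is a covering/volume comparison in the spirit of Riegler et al.\ \cite{Riegler2015}, extended to the complex case as in Section \ref{sec:mr}. The idea is: if $\mathcal{G}_{DE}$ were \emph{not} bounded below by $\delta/\varepsilon$ on the difference set, then a nontrivial portion of that set would be mapped by $\mathcal{G}_{DE}$ into a small ball of radius $\sim\delta$ in $\bbC^n$; pulling this back and using that the rows $(FD)^{(j,:)*}$ and $(FE)^{(j,:)*}$ are i.i.d.\ uniform on balls of radius $R$, one bounds the probability that all $n$ measurement "coordinates'' are simultaneously small. Each coordinate of $\mathcal{G}_{DE}(M)$ is, after conjugating by the unitary DFT $F$, a bilinear form $\langle (FD)^{(j,:)*}, M\, \overline{(FE)^{(j,:)*}}\rangle$ up to the entrywise-product structure of circular convolution; smallness of such a form for a \emph{fixed} nonzero $M$ is a codimension event with probability $\lesssim \delta^2/(R^4\norm{M}^2)$ (the exponent reflecting the two real dimensions lost in $\bbC$). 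Taking the product over the $n$ independent rows gives the $(\delta^2/R^4)^{n}$-type decay; subtracting the "free'' dimension count $d$ (resp.\ $2d$) of the difference manifold, via a union bound over an $\varepsilon$-net of that manifold (whose log-cardinality supplies the $(1/\varepsilon^2)^n$ factor and the $C,C',C''$ constants, including the $\ln(2\sqrt n R^2/(3\delta))$ log term in $C$), yields the exponents $n-d$ and $n-2d$ respectively. The explicit constants $C'$, $C''$ come from tracking the net cardinality bound, the per-row concentration constant, and the combinatorial factor $\binom{m_1}{s_1}\binom{m_2}{s_2}$ for the sparsity scenario — all of which are polynomial in $n,m_1,m_2,s_1,s_2$ up to the logarithmic term, and are collected in Table \ref{tab:constants}.

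The main obstacle — and the place where the complex-case extension of Section \ref{sec:mr} is essential — is proving the uniform-over-the-net lower bound with the correct exponent, i.e.\ getting the net granularity, the per-point anti-concentration constant, and the dimension subtraction to combine so that the exponent is exactly $n-d$ (not $n - 2d$ or $n$ with the wrong constant). This requires a concentration-of-measure inequality for the bilinear measurement functional under the uniform-on-a-complex-ball law — precisely the "new concentration inequality'' the introduction advertises — because the naive Lipschitz bound on $\mathcal{G}_{DE}$ over the net is too crude and would inflate the constant in front of $\varepsilon$ uncontrollably; one needs that $\norm{\mathcal{G}_{DE}(M)-\mathcal{G}_{DE}(M')}_2$ concentrates around its mean $\sim \norm{M-M'}_\mathrm{F}$ with exponentially small tails, so that a modest net suffices. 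A secondary technical point is handling the single-point case cleanly: there the fixed matrix $M_0$ must be quotiented out (its rank-one component contributes no new randomness budget), which I would do by parametrizing $M - M_0 = x y^T - x_0 y_0^T$ and noting the effective free parameter count drops back to $d$; care is needed that the constant $C'$ absorbs the dependence on $\norm{M_0}_\mathrm{F}\le 1$ uniformly. Once the uniform lower bound on $\mathcal{G}_{DE}$ over the (net of the) difference set is in hand with the advertised probability, both stability claims follow immediately by the reduction in the first paragraph, with $\varepsilon$ and $\delta$ related through the lower-bound constant.
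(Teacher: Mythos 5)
Your high-level architecture matches the paper's: reduce each stability claim to bounding the probability that some $M$ in the difference set ($\Omega_\calB-M_0$ for single-point, $\Omega_\calB-\Omega_\calB$ for uniform) has $\norm{M}_2>\varepsilon$ yet all $n$ frequency-domain bilinear measurements $a_j^*M\overline{b_j}$ are small, then cover the difference set, exploit independence across rows, and apply a per-row anti-concentration bound $\bbP[|a^*M\overline{b}|\leq\rho]\lesssim \rho^2/(R^4\norm{M}_2^2)$ (this is exactly Lemma \ref{lem:concentration2}, and you state it correctly). However, the counting step that produces the advertised exponents is wrong in your write-up. You attribute the $(\frac{1}{\varepsilon^2})^n$ factor to the log-cardinality of an $\varepsilon$-net of the difference manifold. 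That cannot be right: a net of a set of complex dimension $d$ at scale $\varepsilon$ has cardinality on the order of $\varepsilon^{-2d}$, which would contribute an exponent of $d$, not $n$. In the paper's proof (Lemma \ref{lem:stability}), the $(\frac{1}{\varepsilon^2})^n$ arises from the \emph{per-row} anti-concentration bound itself, because each of the $n$ independent factors carries the $1/\ell^2$ dependence with $\ell=\varepsilon$ (the spectral-norm lower bound on the net points); you already wrote the $\norm{M}^{-2}$ in your per-row estimate, so your proposal is internally inconsistent on this point. The net's job is only to supply the factor $N_{\Omega_\calB}(\delta/R^2)\lesssim (R^2/\delta)^{2d}$, which is what lowers the exponent of $\delta^2/R^4$ from $n$ to $n-d$ (resp.\ $n-2d$).

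Relatedly, the net must be taken at scale $\rho=\delta/R^2$, not $\varepsilon$, because the event ``all $|a_j^*M\overline{b_j}|\leq\delta$'' is transferred to the nearest net center via the deterministic bound $|a_j^*(M-M')\overline{b_j}|\leq \norm{a_j}_2\norm{M-M'}_\mathrm{F}\norm{b_j}_2\leq R^2\rho$, which only works when $R^2\rho=O(\delta)$; with an $\varepsilon$-net this transfer fails outright for $\varepsilon\gg\delta$. This also means your proposed remedy --- a two-sided concentration of $\norm{\mathcal{G}_{DE}(M)-\mathcal{G}_{DE}(M')}_2$ around $\norm{M-M'}_\mathrm{F}$ with exponential tails so that ``a modest net suffices'' --- is not what the paper needs or uses (and an RIP-style argument of that kind would reintroduce the log factors and constants the theorem avoids). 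The ``new concentration inequality'' advertised in the introduction is the small-ball bound of Lemma \ref{lem:concentration2} for a single complex bilinear form, not a concentration-around-the-mean statement for the whole operator. Finally, your claim that for single-point stability ``one of the two rank-one components in $M-M_0$ can be absorbed'' is an unnecessary detour: the difference set $\Omega_\calB-M_0$ is simply a translate of $\Omega_\calB$, so its covering numbers (hence its Minkowski dimension $d$) are unchanged, while for uniform stability $\Omega_\calB-\Omega_\calB$ has dimension at most $2d$ by Lemma \ref{lem:setminus}.
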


\begin{table}[htbp]%
\begin{center}
\begin{tabular}{ l >{\centering\arraybackslash}m{1in} >{\centering\arraybackslash}m{1.2in} >{\centering\arraybackslash}m{1.2in} >{\centering\arraybackslash}m{0in} }
\hline
& $d$ & $C'$ & $C''$ &\\
\hline
Subspace constraints & $m_1+m_2$ & $\frac{C^n}{n^{n-d}}$ & $\frac{(4C)^n}{n^{n-2d}}$ &\\[10pt]
Mixed constraints & $s_1+m_2$ & ${m_1\choose s_1}^2\frac{C^n}{n^{n-d}}$ & ${m_1\choose s_1}^4\frac{(4C)^n}{n^{n-2d}}$ &\\[10pt]
Sparsity constraints & $s_1+s_2$ & ${m_1\choose s_1}^2{m_2\choose s_2}^2\frac{C^n}{n^{n-d}}$ & ${m_1\choose s_1}^4{m_2\choose s_2}^4\frac{(4C)^n}{n^{n-2d}}$ &\\[10pt]
\hline
\end{tabular}
\end{center}
\caption{A summary of the constants in Theorem \ref{thm:stability} .}
\label{tab:constants}
\end{table}

The stability results of Theorem \ref{thm:stability} correspond to the identifiability results for the complex case, in Theorems \ref{thm:weak} and \ref{thm:strong}. Similar stability results can be derived for the case where $D$, $E$, $x$, and $y$ are real, which correspond to the identifiability results in Theorem \ref{thm:real}. They are omitted here for brevity.

In the discussion below, we interpret the single point stability result in Theorem \ref{thm:stability}. The uniform stability result can be interpreted similarly. Here, to make sure that the probability of stable recovery $1-C'(\frac{\delta^2}{R^4})^{n-d}(\frac{1}{\varepsilon^2})^{n}$ is non-trivial, let $\varepsilon=\varepsilon(\delta) > {C'}^{\frac{1}{2n}}\left(\frac{\delta}{R^2}\right)^\alpha$, where $\alpha = 1-\frac{d}{n} \in (0,1)$, and $\varepsilon(\delta)$ vanishes as $\delta$ approaches $0$.

Reconstruction signal-to-noise ratio (RSNR) and measurement signal-to-noise ratio (MSNR) are defined by:
\[
\mathrm{RSNR} = \frac{\|M_0\|_2^2}{\|M-M_0\|_2^2},\qquad \mathrm{MSNR} = \frac{\norm{\mathcal{G}_{DE}(M_0)}_2^2}{\norm{\mathcal{G}_{DE}(M)-\mathcal{G}_{DE}(M_0)}_2^2}.
\]
Consider the case when the error bounds are tight: $\norm{M-M_0}_2 = \varepsilon$, and $\norm{\mathcal{G}_{DE}(M)-\mathcal{G}_{DE}(M_0)}_2= \delta$.
Since the matrix $M_0$ resides in the unit ball, RSNR is on the order of $\frac{1}{\varepsilon^2}$. Since $\{(FD)^{(j,:)*}\}_{j=1}^{n}$ and $\{(FE)^{(j,:)*}\}_{j=1}^{n}$ are uniformly distributed on balls of radius $R$, the norm of the measurement $\mathcal{G}_{DE}(M_0)$ is on the order of $R^2$. Hence MSNR is on the order of $\frac{R^4}{\delta^2}$. Theorem \ref{thm:stability} can then be interpreted as follows: the probability of failure (unstable reconstruction) is roughly $\mathrm{RSNR}^{n}\cdot\mathrm{MSNR}^{-(n-d)}$.

Let $\varepsilon(\delta) = {C'}^{\frac{1}{2n}}\left(\frac{\delta}{R^2}\right)^{\frac{\alpha}{2}}$, where $\alpha = 1-\frac{d}{n}$, then the probability of single point stability in Theorem \ref{thm:stability} reduces to $1-(\frac{\delta}{R^2})^{n-d}$. If $n>d$, then as $\delta$ approaches $0$, the recovery error $\varepsilon(\delta)$ vanishes, and the probability $1-(\frac{\delta}{R^2})^{n-d}$ converges to $1$. This means that if $D$ and $E$ are random with the distributions specified in Theorem \ref{thm:stability}, then the recovery of $M_0$ is unique with probability $1$, which is also a corollary of Theorem \ref{thm:weak}.

Next, we establish stability for the special case where the operator $\mathcal{G}_{DE}$ is an isometry in the mean. Given any matrix $M=xy^T$, we have
\[
\mathcal{G}_{DE}^*\mathcal{G}_{DE}(M) = n\sum_{j=1}^{n} (FD)^{(j,:)*}(FD)^{(j,:)}M(FE)^{(j,:)T}\overline{(FE)^{(j,:)}},
\]
the expectation of which is
\begin{align*}
\mathbb{E}\left[\mathcal{G}_{DE}^*\mathcal{G}_{DE}(M)\right] = & \frac{n^2}{m_1m_2} \mathbb{E}\left[\norm{(FD)^{(j,:)*}}_2^2\right] \cdot \mathbb{E}\left[\norm{(FE)^{(j,:)*}}_2^2\right] M \\
= & \frac{n^2}{m_1m_2}\cdot \frac{m_1 R^2}{m_1+2}\cdot \frac{m_2 R^2}{m_2+2} M.
\end{align*}
The first line follows from the fact that the distribution of $(FD)^{(j,:)*}$ and $(FE)^{(j,:)*}$ are independent and isotropic. The second line is due to the fact that $(FD)^{(j,:)*}$ and $(FE)^{(j,:)*}$ are uniformly distributed on $R\calB_{\bbC^{m_1}}$ and $R\calB_{\bbC^{m_2}}$, respectively. It follows that by setting $R= \left(\frac{(m_1+2)(m_2+2)}{n^2}\right)^\frac{1}{4}$, we have $\mathbb{E}\left[\mathcal{G}_{DE}^*\mathcal{G}_{DE}(M)\right] = M$. 

Next, as an example, we analyze the uniform stability of the subspace constraints scenario, with this special choice of $R$. This will provide insight into how the constants vary with $n$, $m_1$, and $m_2$. Let $\varepsilon(\delta) = 2{C''}^{\frac{1}{2n}}\left(\frac{\delta}{R^2}\right)^\beta$, where $\beta = 1-\frac{2(m_1+m_2)}{n}$. Substituting the expressions for $R$ and $C''$, and ignoring the log factor, we have $\varepsilon(\delta) = O\left((m_1m_2)^\frac{1-\beta}{2}n^\frac{\beta}{2}\delta^\beta \right)$. By Theorem \ref{thm:stability}, in the subspace constraints scenario, if $n> 2(m_1+m_2)$, i.e., $\beta\in(0,1)$, then with probability at least $1-0.25^n$, we have
\[
\norm{M_1-M_2}_2 \lesssim (m_1m_2)^\frac{1-\beta}{2}n^\frac{\beta}{2}\norm{\mathcal{G}_{DE}(M_1)-\mathcal{G}_{DE}(M_2)}_2^\beta, \qquad \forall M_1,M_2\in\Omega_\calB.
\]
Hence, $\mathcal{G}_{DE}^{-1}$ is H\"{o}lder continuous of order $\beta$ on $\mathcal{G}_{DE}(\Omega_\calB)$.

We conclude this section by emphasizing the differences between the identifiability results in Section \ref{sec:identifiability} and the stability results in Section \ref{sec:stability}: 
\begin{enumerate}
	\item The identifiability results address the identifiability on cone constraint sets, whereas the stability results address the stability on the same constraint sets restricted to a ball of an arbitrary but finite radius. From a practical point of view, because the radius can be arbitrarily large, this restriction is of no significant consequence.
	\item The identifiability results hold for generic (Lebesgue almost all) matrices $D$ and $E$. The stability results hold with high probability when $D$ and $E$ follow some specific distributions.
\end{enumerate}


\section{Identifiability in Low-Rank Matrix Recovery}\label{sec:mr}
Using the lifted formulation, blind deconvolution with subspace or sparsity constraints has been reduced to the recovery, subject to constraints, of a rank-$1$ matrix from linear measurements that have a particular structure. The identifiability question in BD is thus reduced to identifiabilty in the latter recovery problem. In this section we address the more general question of identifiability in low rank matrix recovery. Our results express the sample complexity for identifiability in terms of the Minkowski dimension of the set in which the matrix to be recovered lives. These results are applied to the BD problem in Section \ref{sec:proof} to derive the main results of this paper.  

Recently, Riegler et al. \cite{Riegler2015} derived sample complexity results for low-rank matrix recovery, and for the recovery of matrices of low description complexity, that match the number of degrees of freedom. They considered the case where the matrices are real. Define the measurement operator $\calA: \bbR^{m_1\times m_2}\rightarrow \bbR^n$ as
\[
z = \calA(M_0)=\left[\left<A_1,M_0\right>,\left<A_2,M_0\right>,\cdots,\left<A_n,M_0\right>\right]^T\in\bbR^n,
\]
where the measurement matrices $A_j\in\bbR^{m_1\times m_2}$ ($j=1,2,\cdots,n$). Denoting by $\Omega_\calM\subset\bbR^{m_1\times m_2}$ the constraint set (which is assumed to be nonempty and bounded) for the unknown matrix, the matrix recovery problem is:
\begin{align*}
\text{(MR)}\qquad\text{Find}~~& M,\\
\text{s.t.}~~&\calA(M) = z,\\
& M\in \Omega_\calM.
\end{align*}
The conditions for unique solution to the matrix recovery problem (MR) are expressed in terms of the Minkowski dimension of the constraint set $\Omega_\calM$, which is defined as follows.
\begin{definition}\label{def:minkowski}
The lower and upper Minkowski dimensions of the nonempty bounded set $\Omega_\calM\subset\bbR^{m_1\times m_2}$ are
\[
\underline{\dim}_\mathrm{B}(\Omega_\calM)\defeq \underset{\rho\rightarrow 0}{\lim\inf}\frac{\log N_{\Omega_\calM}(\rho)}{\log\frac{1}{\rho}},\qquad
\overline{\dim}_\mathrm{B}(\Omega_\calM)\defeq \underset{\rho\rightarrow 0}{\lim\sup}\frac{\log N_{\Omega_\calM}(\rho)}{\log\frac{1}{\rho}},
\]
where $N_{\Omega_\calM}(\rho)$ denotes the covering number of $\Omega_\calM$ given by
\[
N_{\Omega_\calM}(\rho) = \min\biggl\{k\in\bbN: \Omega_\calM\subset \underset{i\in\{1,2,\cdots,k\}}{\bigcup} (M_i+\rho\calB_{\bbR^{m_1\times m_2}}),~M_i\in\bbR^{m_1\times m_2}\biggr\}.
\]
If $\underline{\dim}_\mathrm{B}(\Omega_\calM) = \overline{\dim}_\mathrm{B}(\Omega_\calM)$, then it is simply denoted by $\dim_\mathrm{B}(\Omega_\calM)$.
\end{definition}

The Minkowski dimension of the constraint set $\Omega_\calM$ can be used to represent its description complexity.
Riegler et al. showed that the solution to (MR) is unique if the sample complexity is greater than the description complexity. For almost all measurement matrices $A_1,A_2,\cdots,A_n\in\bbR^{m_1\times m_2}$, the recovery of $M_0\in\Omega_\calM$ is unique if $n>\underline{\dim}_\mathrm{B}(\Omega_\calM)$ (see \cite[Theorem 1]{Riegler2015}). An even more amazing result is that the same sample complexity can be achieved by rank-$1$ measurement matrices. For almost all $a_j\in\bbR^{m_1}$ and $b_j\in\bbR^{m_2}$ ($j=1,2,\cdots,n$), the recovery of $M_0\in\Omega_\calM$ from measurements $\left<a_jb_j^T,M_0\right>=a_j^TM_0b_j$ ($j=1,2,\cdots,n$) is unique if $n>\underline{\dim}_\mathrm{B}(\Omega_\calM)$ (see \cite[Theorem 2 and Lemma 3]{Riegler2015}). 

We state and prove the extension of this result to the case where the matrices are complex. The Minkowski dimension of the constraint set of complex matrices $\Omega_\calM\subset\bbC^{m_1\times m_2}$ can be defined as in Definition \ref{def:minkowski}, with the real number field $\bbR$ replaced by the complex number field $\bbC$. As will be shown in the next section, by simply changing the number field from real to complex, the Minkowski dimension of a set doubles. Meanwhile, by taking $n$ complex-valued measurements, the number of real-valued measurements also doubles (from $n$ to $2n$). Theorem \ref{thm:extension} shows that, together with the fact that the Minkowski dimension doubles for the complex case, we need the same number of complex-valued measurements in complex matrix recovery as we need real-valued measurements in real matrix recovery.

\begin{theorem}\label{thm:extension}
Suppose the set $\Omega_\calM\subset\bbC^{m_1\times m_2}$ is non-empty and bounded. For almost all sets of vectors $a_j\in\bbC^{m_1}$ and $b_j\in\bbC^{m_2}$ ($j=1,2,\cdots,n$), there does not exist a matrix $M\in \Omega_\calM\backslash \{0\}$ such that $\left<a_jb_j^T,M\right>=a_j^*M\overline{b_j}=0$ for $j=1,2,\cdots,n$, if $2n>\underline{\dim}_\mathrm{B}(\Omega_\calM)$.
\end{theorem}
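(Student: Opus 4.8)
The plan is to show that the ``bad set''
\[
\mathcal{E}\defeq\Bigl\{(a_j,b_j)_{j=1}^{n}:\ \exists\, M\in\Omega_\calM\setminus\{0\}\text{ with }a_j^*M\overline{b_j}=0\text{ for }j=1,\dots,n\Bigr\}
\]
has Lebesgue measure zero in $\bbC^{nm_1}\times\bbC^{nm_2}\cong\bbR^{2n(m_1+m_2)}$. I would begin with three reductions. Identifying $\bbC^{m_1\times m_2}$ with $\bbR^{2m_1m_2}$ via real and imaginary parts turns the Frobenius norm into the Euclidean norm, so covering numbers, and hence $\underline{\dim}_{\mathrm{B}}(\Omega_\calM)$, are unchanged, and the hypothesis $2n>\underline{\dim}_{\mathrm{B}}(\Omega_\calM)$ becomes a purely real statement. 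Next, since $a_j^*M\overline{b_j}=0$ is $\bbC$-linear, hence scale invariant, in $M$, and since $\Omega_\calM$ is bounded, I would write $\Omega_\calM\setminus\{0\}=\bigcup_{k\ge k_0}(\Omega_\calM\cap A_k)$ with $A_k=\{M:2^{-k-1}<\norm{M}_\mathrm{F}\le 2^{-k}\}$ and replace each $\Omega_\calM\cap A_k$ by $\Omega_k\defeq 2^k(\Omega_\calM\cap A_k)\subset\{M:\tfrac12<\norm{M}_\mathrm{F}\le 1\}$, which leaves the associated bad set unchanged; as $\underline{\dim}_{\mathrm{B}}$ is monotone under inclusion and scale invariant, $\underline{\dim}_{\mathrm{B}}(\Omega_k)\le\underline{\dim}_{\mathrm{B}}(\Omega_\calM)<2n$. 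Since $\mathcal{E}$ is the countable union over $k$ of the bad sets for the $\Omega_k$, it suffices to handle a single bounded set $\Omega$ bounded away from the origin with $\underline{\dim}_{\mathrm{B}}(\Omega)<2n$. Finally, it is enough to show that $\mathcal{E}\cap\bigl(\Gamma\calB_{\bbC^{m_1}}\bigr)^{n}\times\bigl(\Gamma\calB_{\bbC^{m_2}}\bigr)^{n}$ is null for each $\Gamma\in\bbN$; call this box $\mathcal{K}_\Gamma$.

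Next I would run the covering/union bound of Riegler et al.\ in the complex setting. Fix $d$ with $\underline{\dim}_{\mathrm{B}}(\Omega)<d<2n$; there is a sequence $\rho_i\downarrow 0$ with $N_\Omega(\rho_i)\le\rho_i^{-d}$, and for each $i$ a $\rho_i$-cover $\Omega\subset\bigcup_{\ell=1}^{N_i}(M_\ell+\rho_i\calB_{\bbC^{m_1\times m_2}})$ with centers $M_\ell\in\Omega$ and $N_i\lesssim\rho_i^{-d}$. If $(a_j,b_j)_j\in\mathcal{E}$ has witness $M$ and $M_\ell$ is the center of a cell containing $M$, then for every $j$
\[
\bigl|a_j^*M_\ell\overline{b_j}\bigr|=\bigl|\langle a_jb_j^T,\,M_\ell-M\rangle\bigr|\le\norm{a_j}_2\norm{b_j}_2\,\rho_i\le\Gamma^2\rho_i .
\]
Hence $\mathcal{E}\cap\mathcal{K}_\Gamma\subset\bigcup_{\ell=1}^{N_i}\prod_{j=1}^{n}\bigl\{(a,b)\in\Gamma\calB_{\bbC^{m_1}}\times\Gamma\calB_{\bbC^{m_2}}:|a^*M_\ell\overline{b}|\le\Gamma^2\rho_i\bigr\}$, and since the $(a_j,b_j)$ are independent across $j$, by the product structure of Lebesgue measure
\[
\mathrm{vol}\bigl(\mathcal{E}\cap\mathcal{K}_\Gamma\bigr)\ \le\ \sum_{\ell=1}^{N_i}\Bigl(\mathrm{vol}\bigl\{(a,b)\in\Gamma\calB_{\bbC^{m_1}}\times\Gamma\calB_{\bbC^{m_2}}:|a^*M_\ell\overline{b}|\le\Gamma^2\rho_i\bigr\}\Bigr)^{n}.
\]

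The crux is a \emph{uniform} concentration estimate: for every nonzero $M\in\bbC^{m_1\times m_2}$ and every $\epsilon>0$,
\[
\mathrm{vol}\bigl\{(a,b)\in\Gamma\calB_{\bbC^{m_1}}\times\Gamma\calB_{\bbC^{m_2}}:|a^*M\overline{b}|\le\epsilon\bigr\}\ \le\ C_0(m_1,m_2,\Gamma)\,\frac{\epsilon^2}{\norm{M}_\mathrm{F}^2}\Bigl(1+\ln\tfrac{\norm{M}_\mathrm{F}}{\epsilon}\Bigr),
\]
with a constant not depending on $M$ beyond $\norm{M}_\mathrm{F}$. I would prove it by writing $a^*M\overline{b}=\overline{a}^{\,T}M\overline{b}$, choosing an entry $(i_0,k_0)$ with $|M_{i_0k_0}|=\max_{i,k}|M_{ik}|\ge\norm{M}_\mathrm{F}/\sqrt{m_1m_2}$, and, holding the other coordinates fixed, ``completing the rectangle'' in $(\overline{a_{i_0}},\overline{b_{k_0}})$ to get $a^*M\overline{b}=M_{i_0k_0}(\overline{a_{i_0}}+p)(\overline{b_{k_0}}+q)+r$, where the shifts $p,q$ have modulus bounded by a constant multiple of $\Gamma$ depending only on $m_1,m_2$ (again because $|M_{i_0k_0}|$ is the largest entry). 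The bound then reduces to the elementary planar estimate $\mathrm{area}\{(s,t)\in\bbC^2:|s|,|t|\le R,\ |st-c|\le\eta\}\le C(R)\,\eta^2\ln\!\tfrac{R^2}{\eta}$, integrated over the remaining $2(m_1+m_2)-4$ real coordinates, which range freely in a ball of radius $\Gamma$. Feeding this back with $\epsilon=\Gamma^2\rho_i$ and $\norm{M_\ell}_\mathrm{F}\in(\tfrac12,1]$, each summand above is $\le\bigl(C_1(m_1,m_2,\Gamma)\,\rho_i^2\ln\tfrac1{\rho_i}\bigr)^{n}$ for large $i$, so
\[
\mathrm{vol}\bigl(\mathcal{E}\cap\mathcal{K}_\Gamma\bigr)\ \lesssim\ \rho_i^{-d}\bigl(C_1\,\rho_i^{2}\ln\tfrac1{\rho_i}\bigr)^{n}\ =\ C_1^{\,n}\,\rho_i^{\,2n-d}\bigl(\ln\tfrac1{\rho_i}\bigr)^{n}\ \xrightarrow[\,i\to\infty\,]{}\ 0
\]
because $2n-d>0$. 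As the left-hand side does not depend on $i$, it vanishes, and a countable union over $\Gamma$ and over the shells $k$ finishes the proof.

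The main obstacle is exactly this concentration estimate, and two of its features separate the complex case from the real-valued argument of Riegler et al. First, the exponent must be $\epsilon^2$, not $\epsilon^{1}$: a complex scalar having modulus $\le\epsilon$ is a real-codimension-$2$ event, which is precisely why the identifiability threshold comes out as $2n>\underline{\dim}_{\mathrm{B}}(\Omega_\calM)$, the doubling of the number of real measurements being matched by the doubling of the Minkowski dimension, so the bookkeeping closes with nothing to spare. Second, the constant must be uniform over the covering centers $M_\ell$; the naive route -- slicing $\{|a^*M\overline{b}|\le\epsilon\}$ over $b$ and bounding each slice by a slab of half-width $\epsilon/\norm{M\overline{b}}_2$ -- pays a spurious factor $\sigma_{\min}^+(M)^{-2}$ (inverse squared smallest nonzero singular value) that cannot be controlled along a covering sequence, since the centers may approach lower-rank matrices. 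The ``largest-entry / complete-the-rectangle'' device avoids this, collapsing everything onto one clean planar estimate with an absolute constant; finding this reduction, rather than any single computation, is the heart of the matter, and is plausibly the simplification (``getting rid of unnecessary technicalities'') the introduction advertises.
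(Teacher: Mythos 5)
Your proposal is correct, and its overall architecture coincides with the paper's: decompose $\Omega_\calM\setminus\{0\}$ into countably many norm shells bounded away from $0$, cover each shell at scale $\rho$, apply a union bound over the covering centers and independence (product structure) across $j$, invoke a concentration estimate of order $\epsilon^2$ (up to a log factor) for the event $|a^*M\overline{b}|\le\epsilon$, and let $\rho\to 0$ using $2n>\underline{\dim}_\mathrm{B}(\Omega_\calM)$. The paper phrases this probabilistically (Lemma \ref{cla:probability}, with $a_j,b_j$ uniform on balls, then transfers to Lebesgue-almost-all via equivalence of the measures) and uses shells $\frac{1}{L}\le\norm{M}_2\le L$ rather than your rescaled dyadic Frobenius shells; these choices are interchangeable. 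The one genuinely different ingredient is your proof of the concentration estimate. The paper's Lemma \ref{lem:concentration2} uses the SVD of $M$ together with the rotation invariance of the uniform distribution on a ball to reduce to the diagonal form $\tilde a^T\Sigma\tilde b$, whose dominant term has coefficient $\Sigma^{(1,1)}=\norm{M}_2\ge\ell$ on the shell; your ``largest entry plus complete-the-rectangle'' device reaches the same planar estimate $\operatorname{area}\{|st-c|\le\eta\}\lesssim\eta^2\ln\frac{1}{\eta}$ without invoking rotation invariance, at the cost of a $\sqrt{m_1m_2}$ loss from $|M_{i_0k_0}|\ge\norm{M}_\mathrm{F}/\sqrt{m_1m_2}$ versus the paper's $\norm{M}_2$. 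Your approach is marginally more robust (it would work for any product measure with bounded density on a box), while the paper's yields cleaner explicit constants, which it then reuses verbatim in the stability analysis of Section \ref{sec:proof_stability}. One remark in your discussion is slightly off: you present the uniformity over covering centers as the central difficulty, warning that a naive slicing pays $\sigma^{+}_{\min}(M)^{-2}$; but the SVD route already pays only $\norm{M}_2^{-2}=\sigma_{\max}(M)^{-2}$, which is controlled by the lower norm bound on each shell, so the paper never meets that obstacle. Finally, your assertion that one may take covering centers in $\Omega$ with $N_i\lesssim\rho_i^{-d}$ should be justified by the paper's recentering step (move each minimal-cover center into the set and double the radius); as written it is slightly loose but harmless.
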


\begin{proof}
We prove Theorem \ref{thm:extension} using the following lemma.
\begin{lemma} \label{cla:probability}
Suppose the set $\Omega_\calM\subset\bbC^{m_1\times m_2}$ is non-empty and bounded. Let the vectors $\{a_j\}_{j=1}^{n}$ and $\{b_j\}_{j=1}^{n}$ be independent random vectors, where $\{a_j\}_{j=1}^{n}$ are i.i.d. following a uniform distribution on $R\calB_{\bbC^{m_1}}$, and $\{b_j\}_{j=1}^{n}$ are i.i.d. following a uniform distribution on $R\calB_{\bbC^{m_2}}$. If $2n>\underline{\dim}_\mathrm{B}(\Omega_\calM)$, then 
\[
P \defeq \bbP\left[\exists M\in \Omega_\calM\backslash \{0\},\text{ s.t. } a_j^*M\overline{b_j}=0\text{ for }j=1,2,\cdots,n\right] = 0.
\]
\end{lemma}

We use $\mathscr{N}(\Omega,\{a_j\}_{j=1}^{n},\{b_j\}_{j=1}^{n})$ to denote the event that there exists $M\in\Omega$ such that $a_j^*M\overline{b_j}=0$ for $j=1,2,\cdots,n$. Restricted to the same support $R\calB_{\bbC^{m_1}}\times R\calB_{\bbC^{m_2}}$, the Lebesgue measure is absolutely continuous with respect to the uniform distribution. (Because the uniform measure is also absolutely continuous with respect to the Lebesgue measure, the two measures are equivalent.) If the probability of the event $\mathscr{N}(\Omega_\calM\backslash \{0\},\{a_j\}_{j=1}^{n},\{b_j\}_{j=1}^{n})$ is zero, then the Lebesgue measure of the set of $\{a_j\}_{j=1}^{n}$ and $\{b_j\}_{j=1}^{n}$, over which the event happens, is zero too. It follows that, for Lebesgue almost all $a_j\in R\calB_{\bbC^{m_1}}$ and $b_j\in R\calB_{\bbC^{m_2}}$ ($j=1,2,\cdots,n$), the event $\mathscr{N}(\Omega_\calM\backslash \{0\},\{a_j\}_{j=1}^{n},\{b_j\}_{j=1}^{n})$ does not happen. This argument is true for arbitrary radius $R$. Hence if $2n>\underline{\dim}_\mathrm{B}(\Omega_\calM)$, then by Lemma \ref{cla:probability} the event $\mathscr{N}(\Omega_\calM\backslash \{0\},\{a_j\}_{j=1}^{n},\{b_j\}_{j=1}^{n})$ does not happen, and therefore this event does not happen for almost all $a_j\in\bbC^{m_1}$ and $b_j\in\bbC^{m_2}$ ($j=1,2,\cdots,n$), i.e., there does not exist a matrix $M\in \Omega_\calM\backslash \{0\}$ such that $a_j^*M\overline{b_j}=0$ for $j=1,2,\cdots,n$. Therefore, we only need to prove Lemma \ref{cla:probability}, thus completing the proof of Theorem \ref{thm:extension}.
\end{proof}

\begin{proof}[Proof of Lemma \ref{cla:probability}]
The set $\Omega_\calM\backslash \{0\}$ can be written as
\[
\Omega_\calM\backslash \{0\} = \bigcup_{L\in\bbZ^+}\Omega_{\calM,L},
\]
where $\Omega_{\calM,L}\defeq\{M\in\Omega_\calM: \frac{1}{L}\leq \norm{M}_2\leq L\}$. By a union bound, we have
\[
P \leq \sum_{L\in\bbZ^+} P_L,
\]
where
\[
P_L \defeq \bbP\left[\exists M\in \Omega_{\calM,L}, \text{ s.t. } a_j^*M\overline{b_j}=0\text{ for }j=1,2,\cdots,n\right].
\]
In order to show that $P =0$, it suffices to prove that $P_L=0$ for all $L\in\bbZ^+$.

Let $L$ be an arbitrary positive integer. We form a minimal cover of $\Omega_{\calM,L}$ with balls of radius $\rho$ centered at the points $\{M_{\rho,L,i}\}_{i=1}^{N_{\Omega_{\calM,L}}(\rho)}$. These points may or may not be in $\Omega_{\calM,L}$. However, by the minimality of the cover, the intersection of $\Omega_{\calM,L}$ with each ball is nonempty, hence there exists another set of points $\{M'_{\rho,L,i}\}_{i=1}^{N_{\Omega_{\calM,L}}(\rho)}$ such that
\[
M'_{\rho,L,i} \in \Omega_{\calM,L} \bigcap (M_{\rho,L,i}+\rho\calB_{\bbC^{m_1\times m_2}}), \quad i = 1,2,\cdots,N_{\Omega_{\calM,L}}(\rho).
\]
Now we cover $\Omega_{\calM,L}$ with balls of radius $2\rho$ centered at $\{M'_{\rho,L,i}\}_{i=1}^{N_{\Omega_{\calM,L}}(\rho)}$, which are points in $\Omega_{\calM,L}$  (a property that will be needed for inequality \eqref{eq:useconcentration} below), because
\[
(M_{\rho,L,i}+\rho\calB_{\bbC^{m_1\times m_2}}) \subset (M'_{\rho,L,i}+2\rho\calB_{\bbC^{m_1\times m_2}}), \quad i = 1,2,\cdots,N_{\Omega_{\calM,L}}(\rho),
\]
\[
\Omega_{\calM,L} \subset \bigcup_{1\leq i\leq N_{\Omega_{\calM,L}}(\rho)} (M_{\rho,L,i}+\rho\calB_{\bbC^{m_1\times m_2}}) \subset \bigcup_{1\leq i\leq N_{\Omega_{\calM,L}}(\rho)} (M'_{\rho,L,i}+2\rho\calB_{\bbC^{m_1\times m_2}}).
\]
Defining $\delta = R^2\rho$, we have
\begin{align}
P_L \leq & \sum_{i=1}^{N_{\Omega_{\calM,L}}(\rho)} \bbP\left[\exists M\in (M'_{\rho,L,i}+2\rho\calB_{\bbC^{m_1\times m_2}}), \text{ s.t. } a_j^*M\overline{b_j}=0\text{ for }j=1,2,\cdots,n\right] \label{eq:union}\\
\leq & \sum_{i=1}^{N_{\Omega_{\calM,L}}(\rho)} \bbP\left[\exists M\in (M'_{\rho,L,i}+2\rho\calB_{\bbC^{m_1\times m_2}}), \text{ s.t. } \left|a_j^*M\overline{b_j}\right|\leq \delta \text{ for }j=1,2,\cdots,n\right] \label{eq:relax}\\
\leq & \sum_{i=1}^{N_{\Omega_{\calM,L}}(\rho)} \bbP\left[\left|a_j^*M'_{\rho,L,i}\overline{b_j}\right|\leq 3\delta \text{ for }j=1,2,\cdots,n\right] \label{eq:triangle}\\
= & \sum_{i=1}^{N_{\Omega_{\calM,L}}(\rho)} \prod_{j=1}^{n}\bbP\left[\left|a_j^*M'_{\rho,L,i}\overline{b_j}\right|\leq 3\delta\right] \label{eq:independent}\\
\leq & N_{\Omega_\calM}\left(\frac{\delta}{R^2}\right) ~(3\delta)^{2n} g(3\delta,\frac{1}{L},L,R)^n. \label{eq:useconcentration}
\end{align}
Inequality \eqref{eq:union} uses a union bound. The event in \eqref{eq:union} implies the event in \eqref{eq:relax}, which then implies the event in \eqref{eq:triangle}. Inequality \eqref{eq:triangle} is due to the following chain of inequalities, of which the last is implied by $|a_j^*M\overline{b_j}|\leq \delta$:
\begin{align*}
\left|a_j^*M'_{\rho,L,i}\overline{b_j}\right| \leq & \left|a_j^*(M'_{\rho,L,i}-M)\overline{b_j}\right| + \left|a_j^*M\overline{b_j}\right| \\
\leq & \norm{a_j}_2\norm{M'_{\rho,L,i}-M}_2\norm{b_j}_2 + \left|a_j^*M\overline{b_j}\right|\\
\leq & \norm{a_j}_2\norm{M'_{\rho,L,i}-M}_\mathrm{F}\norm{b_j}_2 + \left|a_j^*M\overline{b_j}\right|\\
\leq & 2R^2\rho +\delta ~=~ 3\delta.
\end{align*}
Equation \eqref{eq:independent} is due to the independence between random vector pairs $\{a_j,b_j\}_{j=1}^{n}$.
Inequality \eqref{eq:useconcentration} uses the fact that $N_{\Omega_{\calM,L}}(\rho) \leq N_{\Omega_\calM}(\rho)=N_{\Omega_\calM}\left(\frac{\delta}{R^2}\right)$, and the concentration of measure inequality $\bbP\left[\left|a_j^*M'_{\rho,L,i}\overline{b_j}\right|\leq \delta \right]\leq \delta^2g(\delta,\frac{1}{L},L,R)$ in Lemma \ref{lem:concentration2} in Appendix \ref{app:concentration}. (By construction, $M'_{\rho, L,i}$, as points in $\Omega_{\calM,L}$, satisfy the norm bounds $\frac{1}{L}\leq \norm{M'_{\rho, L,i}}_2\leq L$.) Here $g(\delta,\frac{1}{L},L,R)$ is a function of $\delta$ defined in \eqref{eq:gexpression} in Appendix \ref{app:concentration} , which satisfies $\lim\limits_{\delta\rightarrow 0} \frac{\log g(\delta,\frac{1}{L},L,R)}{\log \frac{1}{\delta}} = 0$.

Next, we show that \eqref{eq:useconcentration} implies $P_L = 0$. Assume the contrary, i.e. $P_L > 0$. Since $P_L$ does not depend on $\delta$, we have $\underset{\delta\rightarrow 0}{\lim\inf}\frac{\log P_L}{\log\frac{1}{\delta}} = 0$. By \eqref{eq:useconcentration} and the assumed sample complexity $2n>\underline{\dim}_\mathrm{B}(\Omega_\calM)$,
\[0 = \underset{\delta\rightarrow 0}{\lim\inf}\frac{\log P_L}{\log\frac{1}{\delta}}
\leq \underset{\delta\rightarrow 0}{\lim\inf}\frac{\log N_{\Omega_\calM}\left(\frac{\delta}{R^2}\right) + 2n\log (3\delta)  + n\log g(3\delta,\frac{1}{L},L,R)}{\log\frac{1}{\delta}}
= \underline{\dim}_\mathrm{B}(\Omega_\calM) -2n <  0,
\]
which is a contradiction. Since $L$ is arbitrary, we have $P_L=0$ for all $L\in\bbZ^+$. This completes the proof of Lemma \ref{cla:probability}.
\end{proof}

Corollaries \ref{cor:weak_bounded} and \ref{cor:strong_bounded} are direct consequences of Theorem \ref{thm:extension}.

\begin{corollary}[Weak Identifiability, Bounded]\label{cor:weak_bounded} Suppose the constraint set $\Omega_\calM \subset\bbC^{m_1\times m_2}$ is nonempty and bounded. For almost all $a_j\in\bbC^{m_1}$ and $b_j\in\bbC^{m_2}$ ($j=1,2,\cdots,n$), the recovery of $M_0$ from measurements $\left<a_jb_j^T,M_0\right>$ ($j=1,2,\cdots,n$) is unique if $2n>\underline{\dim}_\mathrm{B}(\Omega_\calM)$.
\end{corollary}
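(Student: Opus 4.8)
The plan is to derive Corollary~\ref{cor:weak_bounded} as an immediate specialization of Theorem~\ref{thm:extension}. Recall that the measurements $\left<a_jb_j^T,M_0\right> = a_j^* M_0 \overline{b_j}$ are linear in $M_0$, so for any two matrices $M_0, M \in \Omega_\calM$ producing the same measurements, the difference $M_0 - M$ lies in the common nullspace, i.e. $a_j^*(M_0-M)\overline{b_j} = 0$ for all $j$. Thus non-uniqueness of the recovery of $M_0$ is equivalent to the existence of a nonzero matrix in the difference set $(\Omega_\calM - \Omega_\calM) \cap \{N : a_j^*N\overline{b_j}=0,\ j=1,\dots,n\}$ with $N = M_0 - M \neq 0$.

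The key step is therefore to apply Theorem~\ref{thm:extension} to the difference set $\Omega_\calM - \Omega_\calM \defeq \{M - M' : M, M' \in \Omega_\calM\}$ in place of $\Omega_\calM$. First I would check that $\Omega_\calM - \Omega_\calM$ is nonempty (it contains $0$) and bounded (it is contained in a ball of twice the radius of any ball containing $\Omega_\calM$), so the hypotheses of Theorem~\ref{thm:extension} are met. Next I would invoke the standard fact that the Minkowski dimension of a difference set (a Lipschitz image of $\Omega_\calM \times \Omega_\calM$ under the subtraction map, composed with the product structure) satisfies $\underline{\dim}_\mathrm{B}(\Omega_\calM - \Omega_\calM) \leq \underline{\dim}_\mathrm{B}(\Omega_\calM \times \Omega_\calM) \le 2\underline{\dim}_\mathrm{B}(\Omega_\calM)$ --- wait, that would only give $4n$, not $2n$. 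So the more careful route, and the one I expect the authors intend, is to observe that in the rank-$1$ (lifted BD) setting $\Omega_\calM$ is a \emph{cone}, and more to the point the argument should be run not on a generic bounded $\Omega_\calM$ but by noting that if recovery of a \emph{particular} $M_0$ fails, the offending difference $N = M_0 - M$ need not be controlled by the dimension of the whole difference set. Instead, the cleanest plan is: for weak identifiability we want uniqueness of one fixed $M_0$, so the relevant "bad set" is $(\Omega_\calM - M_0)\setminus\{0\}$, a translate of $\Omega_\calM$, which has the same Minkowski dimension $\underline{\dim}_\mathrm{B}(\Omega_\calM)$ (Minkowski dimension is translation-invariant). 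Applying Theorem~\ref{thm:extension} to the translated set $\Omega_\calM - M_0$ then yields: for almost all $\{a_j\},\{b_j\}$, there is no $N \in (\Omega_\calM - M_0)\setminus\{0\}$ with $a_j^* N \overline{b_j} = 0$ for all $j$, provided $2n > \underline{\dim}_\mathrm{B}(\Omega_\calM - M_0) = \underline{\dim}_\mathrm{B}(\Omega_\calM)$.

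To finish, I would unwind this: such an $N$ would be exactly $M - M_0$ for some $M \in \Omega_\calM$ with $M \ne M_0$ and $\calA(M) = \calA(M_0)$, so its nonexistence is precisely uniqueness of recovery of $M_0$. The genericity statement ("for almost all $a_j, b_j$") transfers directly since Theorem~\ref{thm:extension} already delivers a Lebesgue-almost-everywhere conclusion. The main obstacle I anticipate is the bookkeeping around translation-invariance of the Minkowski dimension and making sure the "bad set" is identified correctly: one must be careful that we are fixing $M_0$ (weak identifiability concerns a single point), so the set fed to Theorem~\ref{thm:extension} is the translate $\Omega_\calM - M_0$ rather than the full difference set $\Omega_\calM - \Omega_\calM$; using the latter would cost a factor of two in the dimension bound and break the tightness. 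Everything else is routine: linearity of $\calA$, boundedness and nonemptiness of the translate, and translation-invariance of covering numbers (hence of $\underline{\dim}_\mathrm{B}$).
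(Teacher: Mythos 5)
Your final argument---apply Theorem~\ref{thm:extension} to the translate $\Omega_\calM - M_0$, note it is nonempty and bounded with $\underline{\dim}_\mathrm{B}(\Omega_\calM - M_0)=\underline{\dim}_\mathrm{B}(\Omega_\calM)$ by translation invariance, and observe that nonexistence of a nonzero annihilated matrix in $(\Omega_\calM - M_0)\setminus\{0\}$ is exactly uniqueness of recovery of $M_0$---is precisely the paper's proof. The detour through the full difference set $\Omega_\calM - \Omega_\calM$ is correctly abandoned (that route is the one the paper uses for the \emph{strong} identifiability corollary, at the cost of the factor of two you identified), so the proposal is correct and takes essentially the same approach as the paper.
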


\begin{proof}
Define the set $\Omega_\calM-M_0 = \{M_1-M_0 | M_1\in\Omega_\calM\}$.
Saying that the recovery of $M_0$ from $a_j^*M_0\overline{b_j}$ ($j=1,2,\cdots,n$) is unique, is equivalent to saying that there does not exist a matrix $M$ in $(\Omega_\calM - M_0)\backslash \{0\}$ such that $\left<a_jb_j^T,M\right>=0$ ($j=1,2,\cdots,n$).

Since the set $\Omega_\calM-M_0$ is the shift of the set $\Omega_\calM$ by $M_0$, we have that $\underline{\dim}_\mathrm{B}(\Omega_\calM-M_0)=\underline{\dim}_\mathrm{B}(\Omega_\calM)$. Therefore, Corollary \ref{cor:weak_bounded} follows from Theorem \ref{thm:extension}.
\end{proof}

\begin{corollary}[Strong Identifiability, Bounded]\label{cor:strong_bounded}
Suppose the constraint set $\Omega_\calM \subset\bbC^{m_1\times m_2}$ is nonempty and bounded. For almost all $a_j\in\bbC^{m_1}$ and $b_j\in\bbC^{m_2}$ ($j=1,2,\cdots,n$), the recovery of all matrices $M_0\in\Omega_\calM$ from measurements $\left<a_jb_j^T,M_0\right>$ ($j=1,2,\cdots,n$) is unique if $n>\overline{\dim}_\mathrm{B}(\Omega_\calM)$.
\end{corollary}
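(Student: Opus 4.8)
The plan is to deduce Corollary~\ref{cor:strong_bounded} from Theorem~\ref{thm:extension} applied to the \emph{difference set}
\[
\Omega_\calM - \Omega_\calM \defeq \{M_1 - M_2 : M_1, M_2 \in \Omega_\calM\},
\]
exactly as Corollary~\ref{cor:weak_bounded} was deduced by applying Theorem~\ref{thm:extension} to the shifted set $\Omega_\calM - M_0$.

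First I would record the elementary equivalence: saying that the recovery of \emph{every} $M_0 \in \Omega_\calM$ from the measurements $\langle a_j b_j^T, M_0\rangle$ ($j = 1,\dots,n$) is unique is the same as saying that the linear map $M \mapsto (\langle a_j b_j^T, M\rangle)_{j=1}^n$ is injective on $\Omega_\calM$, which in turn holds if and only if there is no $M \in (\Omega_\calM - \Omega_\calM)\setminus\{0\}$ with $\langle a_j b_j^T, M\rangle = a_j^* M \overline{b_j} = 0$ for all $j = 1,\dots,n$. (If $M_1, M_2 \in \Omega_\calM$ have the same measurement vector, take $M = M_1 - M_2$; conversely any such $M$ gives a collision.) Since $\Omega_\calM$ is nonempty and bounded, so is $\Omega_\calM - \Omega_\calM$, and it contains $0$, so Theorem~\ref{thm:extension} is applicable to it provided we can bound its lower Minkowski dimension.

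The one genuine computation is the dimension estimate $\underline{\dim}_\mathrm{B}(\Omega_\calM - \Omega_\calM) \le 2\,\overline{\dim}_\mathrm{B}(\Omega_\calM)$. Given a minimal cover of $\Omega_\calM$ by $N_{\Omega_\calM}(\rho)$ balls of radius $\rho$, the pairwise differences of the centers give a cover of $\Omega_\calM - \Omega_\calM$ by $N_{\Omega_\calM}(\rho)^2$ balls of radius $2\rho$, so $N_{\Omega_\calM - \Omega_\calM}(2\rho) \le N_{\Omega_\calM}(\rho)^2$. Dividing by $\log\frac{1}{2\rho} = \log\frac{1}{\rho} - \log 2$ and taking $\limsup$ as $\rho \to 0$ yields $\overline{\dim}_\mathrm{B}(\Omega_\calM - \Omega_\calM) \le 2\,\overline{\dim}_\mathrm{B}(\Omega_\calM)$, and since the lower Minkowski dimension never exceeds the upper one, the claimed bound follows. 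Then the hypothesis $n > \overline{\dim}_\mathrm{B}(\Omega_\calM)$ gives $2n > 2\,\overline{\dim}_\mathrm{B}(\Omega_\calM) \ge \underline{\dim}_\mathrm{B}(\Omega_\calM - \Omega_\calM)$, so Theorem~\ref{thm:extension} applied to $\Omega_\calM - \Omega_\calM$ shows that for almost all $\{a_j\}_{j=1}^n$ and $\{b_j\}_{j=1}^n$ there is no nonzero $M \in \Omega_\calM - \Omega_\calM$ with $a_j^* M \overline{b_j} = 0$ for all $j$; by the first step this is precisely the asserted strong identifiability.

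I expect no serious obstacle here — the work is essentially bookkeeping — but the point that must be handled carefully is that the difference-set covering bound $N_{\Omega_\calM - \Omega_\calM}(2\rho) \le N_{\Omega_\calM}(\rho)^2$ naturally controls the \emph{upper} Minkowski dimension $\overline{\dim}_\mathrm{B}$, not the lower one. This is why the hypothesis of the corollary must be stated in terms of $\overline{\dim}_\mathrm{B}(\Omega_\calM)$, in contrast to the weak case (Corollary~\ref{cor:weak_bounded}), where translation invariance of $\underline{\dim}_\mathrm{B}$ suffices and $\underline{\dim}_\mathrm{B}(\Omega_\calM)$ appears; there is no free conversion between the two dimensions.
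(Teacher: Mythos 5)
Your proposal is correct and follows essentially the same route as the paper: reduce strong identifiability to the absence of a nonzero element of the difference set $\Omega_\calM-\Omega_\calM$ annihilated by all the measurements, bound $\overline{\dim}_\mathrm{B}(\Omega_\calM-\Omega_\calM)\leq 2\,\overline{\dim}_\mathrm{B}(\Omega_\calM)$ via the covering-number estimate $N_{\Omega_\calM-\Omega_\calM}(2\rho)\leq N_{\Omega_\calM}(\rho)^2$ (this is exactly the paper's Lemma~\ref{lem:setminus} specialized to $\Omega_\calX=\Omega_\calY=\Omega_\calM$), and invoke Theorem~\ref{thm:extension}. Your closing remark about why the upper rather than lower Minkowski dimension must appear in the hypothesis matches the paper's chain of inequalities $\underline{\dim}_\mathrm{B}(\Omega_\calM-\Omega_\calM)\leq\overline{\dim}_\mathrm{B}(\Omega_\calM-\Omega_\calM)\leq 2\,\overline{\dim}_\mathrm{B}(\Omega_\calM)$.
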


\begin{proof}
Define the set $\Omega_\calM-\Omega_\calM = \{M_1-M_2 | M_1,M_2\in\Omega_\calM\}$. Saying that the recovery of all matrices in $\Omega_\calM$ is unique, is equivalent to saying that there does not exist a matrix $M$ in $(\Omega_\calM - \Omega_\calM)\backslash \{0\}$ such that $\left<a_jb_j^T,M\right>=0$ ($j=1,2,\cdots,n$).

By Lemma \ref{lem:setminus} in Appendix \ref{app:lemmas},
\[
\underline{\dim}_\mathrm{B}(\Omega_\calM-\Omega_\calM)\leq \overline{\dim}_\mathrm{B}(\Omega_\calM-\Omega_\calM) \leq 2 \overline{\dim}_\mathrm{B}(\Omega_\calM).
\]
Therefore, Corollary \ref{cor:strong_bounded} follows from Theorem \ref{thm:extension}.
\end{proof}

The proof of Theorem \ref{thm:extension} is very similar to the proofs of Theorem 2 and Lemma 3 in the paper by Riegler et al. \cite{Riegler2015}. In fact, the proofs in this paper can serve as a simpler proof of the sample complexity for the real matrix recovery problem, which is $n>\underline{\dim}_\mathrm{B}(\Omega_\calM)$, by making the following modifications:
\begin{enumerate}
	\item Changing the number field from complex to real.
	\item Using a different concentration of measure inequality in \eqref{eq:useconcentration}:
	\[
	\bbP\left[\left|a_j^TM'_{\rho,L,i}b_j \right|\leq \delta \right] \leq \delta f(\delta,\frac{1}{L},L,R),
	\]
	which is formally stated and proved in Lemma \ref{lem:concentration1}, where $f(\delta,\frac{1}{L},L,R)$ is a function of $\delta$ that satisfies $\lim\limits_{\delta\rightarrow 0} \frac{\log f(\delta,\frac{1}{L},L,R)}{\log \frac{1}{\delta}} = 0$. Hence $P_L \leq N_{\Omega_\calM}\left(\frac{\delta}{R^2}\right) ~(3\delta)^n f(3\delta,\frac{1}{L},L,R)^n$. If $n>\underline{\dim}_\mathrm{B}(\Omega_\calM)$, then $P_L = 0$ for all $L\in\bbZ^+$.
\end{enumerate}

Owing to the linearity of the measurements in the matrix recovery problem, the above results can be easily extended to the case where the constraint set is a cone. To avoid verbosity, we only prove Corollary \ref{cor:weak}. Corollary \ref{cor:strong} can be proved in a similar fashion. 

\begin{corollary}[Weak Identifiability, Unbounded]\label{cor:weak}
Suppose the constraint set $\Omega_\calM \subset\bbC^{m_1\times m_2}$ is a cone. For almost all $a_j\in\bbC^{m_1}$ and $b_j\in\bbC^{m_2}$ ($j=1,2,\cdots,n$), the recovery of $M_0$ from measurements $\left<a_jb_j^T,M_0\right>=a_j^*M_0\overline{b_j}$ ($j=1,2,\cdots,n$) is unique if $2n>\underline{\dim}_\mathrm{B}(\Omega_\calB)$, where $\Omega_\calB = \Omega_\calM \bigcap \calB_{\bbC^{m_1\times m_2}}$.
\end{corollary}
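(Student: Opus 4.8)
The plan is to reduce the unbounded cone case to the bounded set case already settled by Theorem \ref{thm:extension}, by exhausting the cone $\Omega_\calM$ with the bounded rescalings of its section $\Omega_\calB$. First, exactly as in the proof of Corollary \ref{cor:weak_bounded}, recovery of $M_0$ over $\Omega_\calM$ from the measurements $\langle a_jb_j^T,M_0\rangle=a_j^*M_0\overline{b_j}$ is unique if and only if there is no $M\in(\Omega_\calM-M_0)\setminus\{0\}$ with $a_j^*M\overline{b_j}=0$ for $j=1,\dots,n$, since a competing solution $M_1\in\Omega_\calM$ and the nonzero difference $M_1-M_0$ determine one another. The only new difficulty relative to Corollary \ref{cor:weak_bounded} is that $\Omega_\calM-M_0$ is unbounded, so Theorem \ref{thm:extension} does not apply to it directly; this is why $\underline{\dim}_\mathrm{B}(\Omega_\calB)$ is the relevant quantity -- $\Omega_\calB$ is the bounded section that generates the whole cone.

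Next, using that $\Omega_\calM$ is a cone, I would note that $\Omega_\calM\cap L\calB_{\bbC^{m_1\times m_2}}=L\Omega_\calB$ for every $L\in\bbZ^+$ (if $M\in\Omega_\calM$ and $\norm{M}_\mathrm{F}\leq L$ then $M/L\in\Omega_\calM\cap\calB_{\bbC^{m_1\times m_2}}=\Omega_\calB$, and conversely), hence $\Omega_\calM=\bigcup_{L\in\bbZ^+}L\Omega_\calB$ and $\Omega_\calM-M_0=\bigcup_{L\in\bbZ^+}(L\Omega_\calB-M_0)$. Each set $L\Omega_\calB-M_0$ is nonempty and bounded, and by translation invariance of the Minkowski dimension (already invoked for Corollary \ref{cor:weak_bounded}) together with its scaling invariance -- both immediate from Definition \ref{def:minkowski}, since $N_{L\Omega_\calB-M_0}(\rho)=N_{\Omega_\calB}(\rho/L)$ and the extra additive constant this introduces in the denominator $\log\tfrac1\rho$ does not affect the $\liminf$ -- one gets $\underline{\dim}_\mathrm{B}(L\Omega_\calB-M_0)=\underline{\dim}_\mathrm{B}(\Omega_\calB)<2n$. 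Applying Theorem \ref{thm:extension} to the bounded set $L\Omega_\calB-M_0$ then produces, for each $L$, a Lebesgue-null set $\mathcal{E}_L$ of tuples $(a_1,\dots,a_n,b_1,\dots,b_n)$ outside of which no $M\in(L\Omega_\calB-M_0)\setminus\{0\}$ satisfies $a_j^*M\overline{b_j}=0$ for all $j$.

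Finally, $\bigcup_{L\in\bbZ^+}\mathcal{E}_L$ is again Lebesgue-null, so for every tuple outside it the conclusion of the previous step holds for all $L$ simultaneously; since $\bigcup_{L\in\bbZ^+}(L\Omega_\calB-M_0)=\Omega_\calM-M_0$, this says precisely that no $M\in(\Omega_\calM-M_0)\setminus\{0\}$ satisfies $a_j^*M\overline{b_j}=0$ for all $j$, which by the first step is uniqueness of recovery of $M_0$. I do not expect a genuine obstacle: the whole argument rests on the single observation that the cone is exhausted by the rescalings $L\Omega_\calB$, all of identical Minkowski dimension, so that the measure-zero exceptional set of the bounded theorem survives a countable union over $L$. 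The only items to verify are routine -- nonemptiness and boundedness of each $L\Omega_\calB-M_0$ (automatic once $\Omega_\calM\ni M_0$), and the degenerate case $0\in\Omega_\calM$, where the same argument covers $M_0=0$ -- which is presumably why the authors leave the parallel strong-identifiability statement, Corollary \ref{cor:strong}, to the reader.
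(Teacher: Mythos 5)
Your proof is correct and follows essentially the same route as the paper's: both exploit the cone structure to rescale the problem into the bounded slice $\Omega_\calB$ and then invoke the bounded result (the paper cites Corollary \ref{cor:weak_bounded} after dividing $M_0$ and the competing solution by $\sigma$; you apply Theorem \ref{thm:extension} to $L\Omega_\calB-M_0$, which is the same reduction up to scaling). The only substantive difference is that your explicit countable exhaustion $\Omega_\calM-M_0=\bigcup_{L\in\bbZ^+}(L\Omega_\calB-M_0)$ and union of the null sets $\mathcal{E}_L$ makes rigorous a bookkeeping point the paper's contradiction phrasing leaves implicit, namely that its rescaling factor $\sigma$ depends on the unknown competing solution $M_1$, so the exceptional measure-zero set must be controlled over a (countable) family of scales rather than for a single one.
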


\begin{corollary}[Strong Identifiability, Unbounded]\label{cor:strong}
Suppose the constraint set $\Omega_\calM \subset\bbC^{m_1\times m_2}$ is a cone. For almost all $a_j\in\bbC^{m_1}$ and $b_j\in\bbC^{m_2}$ ($j=1,2,\cdots,n$), the recovery of all matrices $M_0\in\Omega_\calM$ from measurements $\left<a_jb_j^T,M_0\right>=a_j^*M_0\overline{b_j}$ ($j=1,2,\cdots,n$) is unique if $n>\overline{\dim}_\mathrm{B}(\Omega_\calB)$, where $\Omega_\calB = \Omega_\calM \bigcap \calB_{\bbC^{m_1\times m_2}}$.
\end{corollary}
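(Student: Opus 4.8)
The plan is to reduce the cone (unbounded) case to the bounded case already settled in Corollary~\ref{cor:strong_bounded}, via a homogenization argument that uses only the linearity of the measurements. Write $\calA$ for the linear map $M\mapsto\bigl(\left<a_jb_j^T,M\right>\bigr)_{j=1}^{n}=\bigl(a_j^*M\overline{b_j}\bigr)_{j=1}^{n}$, and let $\calN\defeq\{M\in\bbC^{m_1\times m_2}: a_j^*M\overline{b_j}=0\text{ for }j=1,\dots,n\}$ be its nullspace. As in the proof of Corollary~\ref{cor:strong_bounded}, ``the recovery of all matrices in a set $S$ is unique'' is equivalent to $(S-S)\cap\calN=\{0\}$. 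Thus strong identifiability on the cone $\Omega_\calM$ is equivalent to $(\Omega_\calM-\Omega_\calM)\cap\calN=\{0\}$, and ``recovery of all matrices in $\Omega_\calB$ is unique'' is equivalent to $(\Omega_\calB-\Omega_\calB)\cap\calN=\{0\}$.

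First I would show that, for every fixed choice of $\{a_j\}_{j=1}^{n}$ and $\{b_j\}_{j=1}^{n}$, these two nullspace conditions are equivalent. One implication is immediate, since $\Omega_\calB\subset\Omega_\calM$ gives $\Omega_\calB-\Omega_\calB\subset\Omega_\calM-\Omega_\calM$. For the converse, suppose $0\neq M=M_1-M_2$ with $M_1,M_2\in\Omega_\calM$ and $M\in\calN$. Since $M\neq0$, the number $t\defeq\max\{\norm{M_1}_\mathrm{F},\norm{M_2}_\mathrm{F}\}$ is positive; set $\sigma\defeq 1/t>0$. Because $\Omega_\calM$ is a cone, $\sigma M_1,\sigma M_2\in\Omega_\calM$, and $\norm{\sigma M_i}_\mathrm{F}\leq1$ gives $\sigma M_i\in\calB_{\bbC^{m_1\times m_2}}$; hence $\sigma M_1,\sigma M_2\in\Omega_\calB$. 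Then $\sigma M=\sigma M_1-\sigma M_2\in\Omega_\calB-\Omega_\calB$, $\calA(\sigma M)=\sigma\calA(M)=0$ so $\sigma M\in\calN$, and $\sigma M\neq0$. Thus a nonzero element of $(\Omega_\calM-\Omega_\calM)\cap\calN$ produces a nonzero element of $(\Omega_\calB-\Omega_\calB)\cap\calN$, which establishes the equivalence.

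Next I would invoke Corollary~\ref{cor:strong_bounded} with $\Omega_\calB$ playing the role of the constraint set. The set $\Omega_\calB=\Omega_\calM\cap\calB_{\bbC^{m_1\times m_2}}$ is bounded (it lies in the unit ball) and nonempty (if $\Omega_\calM=\emptyset$ the claim is vacuous; otherwise any member of $\Omega_\calM$, rescaled by a suitable positive factor into the unit ball, still lies in $\Omega_\calM$, so $\Omega_\calB\neq\emptyset$). Corollary~\ref{cor:strong_bounded} therefore gives: for almost all $a_j\in\bbC^{m_1}$ and $b_j\in\bbC^{m_2}$, the recovery of all matrices in $\Omega_\calB$ is unique, i.e.\ $(\Omega_\calB-\Omega_\calB)\cap\calN=\{0\}$, provided $n>\overline{\dim}_\mathrm{B}(\Omega_\calB)$. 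For that same generic set of $\{a_j\},\{b_j\}$, the equivalence of the previous paragraph yields $(\Omega_\calM-\Omega_\calM)\cap\calN=\{0\}$, which is precisely strong identifiability on $\Omega_\calM$; this completes the argument.

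I do not expect a genuine obstacle: the proof is a routine scaling reduction, paralleling the argument used for Corollary~\ref{cor:weak} with the difference set $\Omega_\calM-\Omega_\calM$ in place of the shifted set $\Omega_\calM-M_0$ and with the bound phrased in terms of $\overline{\dim}_\mathrm{B}$ rather than $\underline{\dim}_\mathrm{B}$. The only points needing mild care are choosing a single scaling factor $\sigma$ that places \emph{both} $M_1$ and $M_2$ in $\Omega_\calB$ simultaneously (handled by taking the larger of the two Frobenius norms, with $M\neq0$ guaranteeing this quantity is nonzero), and checking that $\Omega_\calB$ inherits the nonemptiness and boundedness hypotheses required to apply Corollary~\ref{cor:strong_bounded}.
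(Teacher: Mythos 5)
Your proof is correct and follows essentially the same route the paper intends: the paper proves Corollary~\ref{cor:weak} by rescaling the two conflicting matrices into $\Omega_\calB$ by $1/(2\max\{\norm{M_0}_\mathrm{F},\norm{M_1}_\mathrm{F}\})$ and invoking the bounded-case corollary, and explicitly states that Corollary~\ref{cor:strong} is proved ``in a similar fashion,'' which is precisely your scaling of a nonzero difference $M_1-M_2$ into $\Omega_\calB-\Omega_\calB$ followed by an appeal to Corollary~\ref{cor:strong_bounded}. The added care about choosing one common scale for both $M_1$ and $M_2$ and about nonemptiness of $\Omega_\calB$ is sound and matches the paper's argument.
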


\begin{proof}[Proof of Corollary \ref{cor:weak}]
We prove uniqueness by contradiction. Suppose that the recovery of $M_0$ is not unique, i.e., there exists $M_1 \in\Omega_\calM$ such that $\left<a_jb_j^T,M_1\right>=\left<a_jb_j^T,M_0\right>$ ($j=1,2,\cdots,n$). Let $\sigma\defeq 2\max\{\norm{M_0}_\mathrm{F},\norm{M_1}_\mathrm{F}\}>0$. Since $\Omega_\calM$ is a cone, we have
\[
\frac{1}{\sigma}M_0,~\frac{1}{\sigma}M_0 \in \Omega_\calB,\]
\[
\left<a_jb_j^T,\frac{1}{\sigma}M_1\right> =\left<a_jb_j^T,\frac{1}{\sigma}M_0\right>,\quad j=1,2,\cdots,n.
\]
Therefore, when the matrix recovery problem is restricted to a nonempty bounded constraint set $\Omega_\calB$, the recovery of $\frac{1}{\sigma}M_0$ is not unique. This, however, contradicts the sample complexity $2n>\underline{\dim}_\mathrm{B}(\Omega_\calB)$ and Corollary \ref{cor:weak_bounded}.
\end{proof}

Corollaries \ref{cor:weak} and \ref{cor:strong} show that the solution to the matrix recovery problem with a cone constraint set is unique, if the solution to the corresponding problem restricted to the unit ball is unique.


\section{Proof of the Main Results}\label{sec:proof}

\subsection{Proof of Theorems \ref{thm:weak} and \ref{thm:strong}}\label{sec:proof_identifiability}
The identifiability of $(x_0,y_0)$ up to scaling in (BD) is equivalent to the uniqueness of $M_0=x_0y_0^T$ in (Lifted BD). Note that
\[
z = \mathcal{G}_{DE}(M_0) = (Dx_0)\circledast(Ey_0) = \sqrt{n}F^*[(FDx_0)\odot(FEy_0)],
\]
\[
\frac{1}{\sqrt{n}}(Fz)^{(j)}=(FD)^{(j,:)}x_0(FE)^{(j,:)}y_0 = (FD)^{(j,:)}x_0y_0^T(FE)^{(j,:)T} =  a_j^*M_0\overline{b_j},
\]
where $a_j = (FD)^{(j,:)*}$ is the conjugate transpose of the $j$th row of $FD$, and $b_j=(FE)^{(j,:)*}$ is the conjugate transpose of the $j$th row of $FE$.
Rewriting (Lifted BD) in the frequency domain: 
\begin{align*}
\text{(Lifted BD)$_f$}\qquad\text{Find}~~& M,\\
\text{s.t.}~~& a_j^*M\overline{b_j}= \frac{1}{\sqrt{n}}(Fz)^{(j)},\quad j=1,2\cdots,n\\
& M\in \Omega_\calM = \{xy^T: x\in\Omega_\calX,y\in\Omega_\calY\}.
\end{align*}

Clearly, the constraint set $\Omega_\calM$ is a cone. Since $a_j = (FD)^{(j,:)*}$ and $b_j=(FE)^{(j,:)*}$, there exists a bijection between the pair $(D,E)\in\bbC^{n\times m_1}\times \bbC^{n\times m_2}$ and the set of vector pairs $\{a_j\in\bbC^{m_1},b_j\in\bbC^{m_2}\}_{j=1}^{n}$.
By Corollary \ref{cor:weak}, the recovery of $M_0$ is unique for almost all $D\in\bbC^{n\times m_1}$ and $E\in\bbC^{n\times m_2}$ if $2n > \underline{\dim}_\mathrm{B}(\Omega_\calB)$. By Corollary \ref{cor:strong}, the recovery of all matrices in $\Omega_\calM$ is unique for almost all $D\in\bbC^{n\times m_1}$ and $E\in\bbC^{n\times m_2}$ if $n > \overline{\dim}_\mathrm{B}(\Omega_\calB)$. Hence, Theorems \ref{thm:weak} and \ref{thm:strong} follow from the upper bounds on Minkowski dimensions in Lemma \ref{lem:minkowski}.

\begin{lemma}\label{lem:minkowski}
The upper Minkowski dimensions of $\Omega_\calB=\Omega_\calM\bigcap \calB_{\bbC^{m_1\times m_2}}$ in (Lifted BD) with subspace, mixed, and sparsity constraints are bounded by $2(m_1+m_2)$, $2(s_1+m_2)$, and $2(s_1+s_2)$, respectively.
\end{lemma}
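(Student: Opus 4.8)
The plan is to bound the upper Minkowski dimension of $\Omega_\calB$ in each of the three scenarios by exhibiting an efficient covering, or equivalently by writing $\Omega_\calB$ (or a superset of it) as a Lipschitz image of a low-dimensional set whose Minkowski dimension is already understood. Recall that $\Omega_\calM = \{xy^T : x\in\Omega_\calX,\, y\in\Omega_\calY\}$, and $\Omega_\calB = \Omega_\calM\cap\calB_{\bbC^{m_1\times m_2}}$. The key elementary facts I will use are: (i) the upper Minkowski dimension does not exceed that of any superset, and is stable under finite unions (taking the max); (ii) if $\Phi$ is Lipschitz on a bounded set $S$, then $\overline{\dim}_\mathrm{B}(\Phi(S))\le\overline{\dim}_\mathrm{B}(S)$; and (iii) a bounded subset of $\bbC^k$ (i.e. $\bbR^{2k}$) has upper Minkowski dimension at most $2k$. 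Fact (iii) is where the ``doubling'' alluded to in the text enters.

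First, the \textbf{subspace case}. Here $\Omega_\calX=\bbC^{m_1}$, $\Omega_\calY=\bbC^{m_2}$, so $\Omega_\calM$ is the set of all rank-$\le 1$ matrices in $\bbC^{m_1\times m_2}$. I would parametrize via the map $\Phi:(x,y)\mapsto xy^T$. Restricting to the unit ball, every $M\in\Omega_\calB$ with $\|M\|_\mathrm{F}\le 1$ can be written as $M=xy^T$ with $\|x\|_2\le 1$ and $\|y\|_2\le 1$ (absorb the scalar so both factors are bounded by $1$). On the compact domain $\{\|x\|_2\le1\}\times\{\|y\|_2\le1\}\subset\bbC^{m_1}\times\bbC^{m_2}\cong\bbR^{2(m_1+m_2)}$, the map $\Phi$ is Lipschitz (since $\|x_1y_1^T-x_2y_2^T\|_\mathrm{F}\le\|x_1-x_2\|_2\|y_1\|_2+\|x_2\|_2\|y_1-y_2\|_2$). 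Hence $\Omega_\calB$ is the Lipschitz image of a bounded subset of $\bbR^{2(m_1+m_2)}$, so $\overline{\dim}_\mathrm{B}(\Omega_\calB)\le 2(m_1+m_2)$ by facts (ii) and (iii).

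Next, the \textbf{sparsity case}. Now $\Omega_\calX=\{x\in\bbC^{m_1}:\|x\|_0\le s_1\}$ and similarly for $\Omega_\calY$. Decompose over supports: $\Omega_\calM=\bigcup_{S,T} \{xy^T: \supp(x)\subseteq S,\ \supp(y)\subseteq T\}$, the union over all $S\subseteq\{1,\dots,m_1\}$ with $|S|=s_1$ and $T\subseteq\{1,\dots,m_2\}$ with $|T|=s_2$. This is a \emph{finite} union (${m_1\choose s_1}{m_2\choose s_2}$ pieces). For each fixed $(S,T)$, intersecting with the unit ball gives a set that, by exactly the subspace argument applied in the coordinates indexed by $S$ and $T$, is a Lipschitz image of a bounded subset of $\bbR^{2(s_1+s_2)}$, hence has upper Minkowski dimension $\le 2(s_1+s_2)$. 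By fact (i), the dimension of the finite union is the maximum, namely $\le 2(s_1+s_2)$. The \textbf{mixed case} is the same computation with $\Omega_\calY=\bbC^{m_2}$ unrestricted: the union is over the ${m_1\choose s_1}$ choices of $S$ only, and each piece lives (after intersecting with the ball) in a Lipschitz image of a bounded subset of $\bbR^{2(s_1+m_2)}$, giving the bound $2(s_1+m_2)$.

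The main obstacle — really the only non-routine point — is making sure the rank-one parametrization $\Phi$ can be taken Lipschitz on a \emph{bounded} domain when restricted to the unit ball, i.e. that the scaling ambiguity can be fixed so that both factors $x,y$ stay bounded; the balanced choice $\|x\|_2=\|y\|_2=\|M\|_\mathrm{F}^{1/2}\le 1$ handles this cleanly, and then the bilinear Lipschitz estimate above is immediate. Everything else — stability of $\overline{\dim}_\mathrm{B}$ under finite unions, monotonicity under inclusion, behavior under Lipschitz maps, and the dimension of a bounded subset of Euclidean space — is standard and can be invoked directly (indeed these facts underlie the use of the Minkowski dimension in Riegler et al. \cite{Riegler2015} and in Corollaries \ref{cor:weak} and \ref{cor:strong} above). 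Combining the three bounds with Corollaries \ref{cor:weak} and \ref{cor:strong} then yields Theorems \ref{thm:weak} and \ref{thm:strong}.
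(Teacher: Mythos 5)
Your proposal is correct and follows essentially the same route as the paper's proof: you balance the rank-one factorization so that $\|x\|_2,\|y\|_2\le 1$ (using that $\Omega_\calX,\Omega_\calY$ are cones), exploit the bilinear Lipschitz estimate for $(x,y)\mapsto xy^T$, and handle sparsity by a finite union over the $\binom{m_1}{s_1}\binom{m_2}{s_2}$ supports --- which is exactly the content of the paper's Lemmas \ref{lem:product} and \ref{lem:realimag} together with its volume/covering-number computation. The only difference is packaging: you invoke the standard facts that Lipschitz images and finite unions do not increase the upper Minkowski dimension and that a bounded subset of $\bbC^k\cong\bbR^{2k}$ has dimension at most $2k$, whereas the paper derives explicit covering-number bounds (which it then reuses quantitatively in the stability proof of Lemma \ref{lem:stability}); this does not change the substance of the argument.
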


\begin{proof}[Proof of Lemma \ref{lem:minkowski}]
For simplicity, we only prove the upper bound for the mixed constraint set. The bounds for the other two scenarios can be proved in a similar fashion.
First of all, 
\begin{align*}
\Omega_\calB &= \{xy^T: x\in\Omega_\calX,y\in\Omega_\calY, \norm{xy^T}_\mathrm{F}\leq 1\}\\
&= \{xy^T: x\in\Omega_\calX,y\in\Omega_\calY, \norm{x}_2\leq 1,\norm{y}_2\leq 1\}\\
&= \{xy^T: x\in\Omega_\calX\bigcap \calB_{\bbC^{m_1}},y\in\Omega_\calY\bigcap \calB_{\bbC^{m_2}}\}
\end{align*}
By Lemmas \ref{lem:product} and \ref{lem:realimag}, we have
\begin{align}
\overline{\dim}_\mathrm{B}(\Omega_\calM\bigcap \calB_{\bbC^{m_1\times m_2}}) \leq & \overline{\dim}_\mathrm{B}(\Omega_\calX\bigcap \calB_{\bbC^{m_1}})+\overline{\dim}_\mathrm{B}(\Omega_\calY\bigcap \calB_{\bbC^{m_2}}) \nonumber\\
\leq & \overline{\dim}_\mathrm{B}\left(\real\left(\Omega_\calX\bigcap \calB_{\bbC^{m_1}}\right)\right)+\overline{\dim}_\mathrm{B}\left(\imag\left(\Omega_\calX\bigcap \calB_{\bbC^{m_1}}\right)\right) \nonumber\\
& +\overline{\dim}_\mathrm{B}\left(\real\left(\Omega_\calY\bigcap \calB_{\bbC^{m_2}}\right)\right)+\overline{\dim}_\mathrm{B}\left(\imag\left(\Omega_\calY\bigcap \calB_{\bbC^{m_2}}\right)\right). \label{eq:dimsum} 
\end{align}

Recall that, in the mixed constraints scenario, the filter satisfies a subspace constraint, and $\Omega_\calY = \bbC^{m_2}$. The restriction to the unit ball is $\Omega_\calY\bigcap \calB_{\bbC^{m_2}} =\calB_{\bbC^{m_2}}$, whose real and imaginary parts are $\calB_{\bbR^{m_2}}$.
By a standard volume argument,
\begin{equation}
N_{\calB_{\bbR^{m_2}}}(\rho) \leq \left(\frac{3}{\rho}\right)^{m_2}. \label{eq:volume}
\end{equation}
Hence
\begin{align}
\overline{\dim}_\mathrm{B}\left(\real\left(\Omega_\calY\bigcap \calB_{\bbC^{m_2}}\right)\right)=& \overline{\dim}_\mathrm{B}\left(\imag\left(\Omega_\calY\bigcap \calB_{\bbC^{m_2}}\right)\right) \nonumber\\
=& \overline{\dim}_\mathrm{B}(\calB_{\bbR^{m_2}}) \nonumber\\
=& \underset{{\rho\rightarrow 0}}{\lim\sup} \frac{\log N_{\calB_{\bbR^{m_2}}}(\rho)}{\log \frac{1}{\rho}} \nonumber\\
\leq & \underset{{\rho\rightarrow 0}}{\lim\sup}~ m_2 \frac{\log \frac{3}{\rho}}{\log \frac{1}{\rho}} \nonumber\\
=& m_2. \label{eq:dimY}
\end{align}

Meanwhile, the signal satisfies a sparsity constraint, and $\Omega_\calX =\{x\in\bbC^{m_1}:\|x\|_0\leq s_1\}$. The restriction to the unit ball is $\Omega_\calX\bigcap \calB_{\bbC^{m_1}} = \{x\in\bbC^{m_1}:\|x\|_0\leq s_1, \|x\|_2\leq 1\}$, whose real and imaginary parts are
\[
\real\left(\Omega_\calX\bigcap \calB_{\bbC^{m_1}}\right)=\imag\left(\Omega_\calX\bigcap \calB_{\bbC^{m_1}}\right) = \{x\in\bbR^{m_1}:\|x\|_0\leq s_1, \|x\|_2\leq 1\},
\]
which is the union of unit balls in $s_1$-dimensional subspaces. Denote this set by $\Gamma^{m_1}_{s_1,1}$. By a standard volume argument,
\[
N_{\Gamma^{m_1}_{s_1,1}}(\rho) \leq {m_1\choose s_1}\left(\frac{3}{\rho}\right)^{s_1}\leq \left(\frac{em_1}{s_1}\right)^{s_1}\left(\frac{3}{\rho}\right)^{s_1},
\]
where the second inequality follows from Stirling's approximation. Hence,
\begin{align}
\overline{\dim}_\mathrm{B}\left(\real\left(\Omega_\calX\bigcap \calB_{\bbC^{m_1}}\right)\right)=& \overline{\dim}_\mathrm{B}\left(\imag\left(\Omega_\calX\bigcap \calB_{\bbC^{m_1}}\right)\right) \nonumber\\
=& \overline{\dim}_\mathrm{B}(\Gamma^{m_1}_{s_1,1}) \nonumber\\
=& \underset{{\rho\rightarrow 0}}{\lim\sup} \frac{\log N_{\Gamma^{m_1}_{s_1,1}}(\rho)}{\log \frac{1}{\rho}} \nonumber\\
\leq & \underset{{\rho\rightarrow 0}}{\lim\sup}~ s_1 \frac{\log \frac{1}{\rho} + \log \frac{3e m_1}{s_1}}{\log \frac{1}{\rho}} \nonumber\\
=& s_1. \label{eq:dimX}
\end{align}
Combining \eqref{eq:dimsum}, \eqref{eq:dimY}, and \eqref{eq:dimX}, we have that the upper Minkowski dimension of the mixed constraint set is bounded by $2(s_1+m_2)$.
\end{proof}

\subsection{Proof of Theorem \ref{thm:real}}
Next, we prove Theorem \ref{thm:real}, which establishes results corresponding to those of Theorems \ref{thm:weak} and \ref{thm:strong} in the case where $D$, $E$, $x$, and $y$ are real. When $D$ are $E$ are real matrices, $a_j = (FD)^{(j,:)*}$ and $b_j=(FE)^{(j,:)*}$ are complex vectors, but they are no longer generic. Therefore, Corollaries \ref{cor:weak} and \ref{cor:strong} cannot be applied directly to this case.

\begin{proof}[Proof of Theorem \ref{thm:real}]
By \eqref{eq:dimsum} in the proof of Theorem \ref{lem:minkowski}, when $x$, $y$, and $M=xy^T$ are real, the Minkowski dimensions of the restricted constraint sets are half those in Theorem \ref{lem:minkowski}. For subspace, mixed, and sparsity constraints, the upper Minkowski dimensions of the restricted constraint sets are bounded by $m_1+m_2$, $s_1+m_2$, and $s_1+s_2$, respectively. To maintain the same sample complexities, we need to show a result analogous to Theorem \ref{thm:extension}, in which $a_j = (FD)^{(j,:)*}$ and $b_j=(FE)^{(j,:)*}$, $D$ and $E$ are real matrices, and $n>\underline{\dim}_\mathrm{B}(\Omega_\calM)$ is sufficient.
\begin{lemma}\label{lem:real}
Suppose $\Omega_\calM\subset\bbR^{m_1\times m_2}$ is a nonempty bounded set. Let $D\in\bbR^{n\times m_1}$ and $E\in\bbR^{n\times m_2}$, $a_j = (FD)^{(j,:)*}$ and $b_j=(FE)^{(j,:)*}$ ($j=1,2,\cdots,n$). For almost all $D\in\bbR^{n\times m_1}$ and $E\in\bbR^{n\times m_2}$, there does not exist a matrix $M\in \Omega_\calM\backslash \{0\}$ such that $\left<a_jb_j^T,M\right>=a_j^*M\overline{b_j}=0$ for $j=1,2,\cdots,n$, if $n>\underline{\dim}_\mathrm{B}(\Omega_\calM)$.
\end{lemma}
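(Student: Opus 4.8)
The plan is to replay the proof of Lemma~\ref{cla:probability} almost verbatim, after replacing the ``draw $D,E$ directly'' step by a random model built on the \emph{half spectrum}, so as to (i) exploit the conjugate symmetry of the DFT of a real matrix and (ii) keep the induced law on $\bbR^{n\times m_1}\times\bbR^{n\times m_2}$ absolutely continuous. Fix a set of representative frequencies $\mathcal R\subset\{1,\dots,n\}$ containing the DC index (and, when $n$ is even, the Nyquist index) together with one index out of each conjugate pair $\{j,\,(n+2-j)\bmod n\}$; write $n_{\mathrm{self}}\in\{1,2\}$ for the number of self-conjugate indices and $n_{\mathrm{pair}}$ for the number of pairs, so $n_{\mathrm{self}}+2n_{\mathrm{pair}}=n$. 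For real $D$ one has $(FD)^{((n+2-j)\bmod n,\,:)}=\overline{(FD)^{(j,:)}}$ and $(FD)^{(j,:)}$ is real for self-conjugate $j$, so $D\mapsto\{(FD)^{(j,:)}\}_{j\in\mathcal R}$ is an $\bbR$-linear bijection onto the corresponding product space. I will therefore sample, independently over $j\in\mathcal R$ and independently for $D$ and $E$, the vectors $p_j\defeq(FD)^{(j,:)T}$ uniformly on $R\calB_{\bbC^{m_1}}$ at the paired indices and uniformly on $R\calB_{\bbR^{m_1}}$ at the self-conjugate indices (and $q_j\defeq(FE)^{(j,:)T}$ analogously), then define $D,E$ by conjugate-symmetric completion and inverse DFT. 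The push-forward of this product-of-uniforms law under the fixed linear bijection is absolutely continuous with respect to Lebesgue measure, with support increasing to the whole space as $R\to\infty$; hence, exactly as in the proof of Theorem~\ref{thm:extension}, it suffices to prove that for every $R$ the event ``$\exists M\in\Omega_\calM\backslash\{0\}$ with $a_j^*M\overline{b_j}=0$ for all $j$'' has probability $0$, where $a_j=(FD)^{(j,:)*}=\overline{p_j}$ and $b_j=(FE)^{(j,:)*}=\overline{q_j}$.

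Since $M$ is real, $a_j^*M\overline{b_j}=0$ for $j=1,\dots,n$ if and only if it holds for $j\in\mathcal R$, and $\{(a_j,b_j)\}_{j\in\mathcal R}$ is a mutually independent family, with $a_j,b_j$ uniform on complex balls of radius $R$ at the paired indices and on real balls of radius $R$ at the self-conjugate indices. With this setup I would run the chain \eqref{eq:union}--\eqref{eq:useconcentration} line for line: write $\Omega_\calM\backslash\{0\}=\bigcup_L\Omega_{\calM,L}$ and reduce to $P_L=0$; build the radius-$2\rho$ cover of $\Omega_{\calM,L}$ with centers $M'_{\rho,L,i}\in\Omega_{\calM,L}$ (so $\tfrac1L\le\norm{M'_{\rho,L,i}}_2\le L$); union-bound over the cover, relax ``$\exists M$ in the $2\rho$-ball with vanishing measurements'' to ``$|a_j^*M'_{\rho,L,i}\overline{b_j}|\le3\delta$ for all $j\in\mathcal R$'' by the same triangle inequality (using $\norm{a_j}_2,\norm{b_j}_2\le R$ and $\delta=R^2\rho$); factor the probability over $j\in\mathcal R$ by independence; and bound each factor by the appropriate concentration inequality---Lemma~\ref{lem:concentration1} gives $\bbP[|a_j^TM'_{\rho,L,i}b_j|\le3\delta]\le3\delta\,f(3\delta,\tfrac1L,L,R)$ at the self-conjugate indices, and Lemma~\ref{lem:concentration2} gives $\bbP[|a_j^*M'_{\rho,L,i}\overline{b_j}|\le3\delta]\le(3\delta)^2g(3\delta,\tfrac1L,L,R)$ at the paired indices. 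The total power of $\delta$ in the product is $n_{\mathrm{self}}+2n_{\mathrm{pair}}=n$, so
\[
P_L\le N_{\Omega_\calM}\!\left(\tfrac{\delta}{R^2}\right)(3\delta)^n\,f\!\left(3\delta,\tfrac1L,L,R\right)^{n_{\mathrm{self}}}g\!\left(3\delta,\tfrac1L,L,R\right)^{n_{\mathrm{pair}}}.
\]
Since $f,g$ are slowly varying in $\delta$ and $P_L$ does not depend on $\delta$, taking $\liminf_{\delta\to0}\frac{\log P_L}{\log(1/\delta)}$ yields $0\le\underline{\dim}_\mathrm{B}(\Omega_\calM)-n<0$ under $n>\underline{\dim}_\mathrm{B}(\Omega_\calM)$, a contradiction; hence $P_L=0$ for all $L$, so the probability is $0$, and the lemma follows.

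The crux is the first paragraph: identifying the half spectrum as the correct set of independent coordinates. Sampling $D$ uniformly on a ball in $\bbR^{n\times m_1}$ would make the rows of $FD$ dependent in a way that breaks the product step \eqref{eq:independent}, whereas sampling the half spectrum independently both keeps $D$ real and preserves absolute continuity of the induced law. The accompanying bookkeeping---that each of the one or two self-conjugate frequencies contributes only a first power of $\delta$ (hence the \emph{real} bound of Lemma~\ref{lem:concentration1} there), and that these add with the $2n_{\mathrm{pair}}$ from the paired frequencies to exactly $n$---is precisely what delivers the sample complexity $n>\underline{\dim}_\mathrm{B}(\Omega_\calM)$ rather than the weaker $n>\underline{\dim}_\mathrm{B}(\Omega_\calM)+2$ or $2n>\underline{\dim}_\mathrm{B}(\Omega_\calM)$; everything else is a transcription of the proof of Lemma~\ref{cla:probability}.
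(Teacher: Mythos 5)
Your proposal is correct and follows essentially the same route as the paper: the paper likewise replaces the direct sampling of $D,E$ by independent sampling of the half spectrum (real uniform at the one or two self-conjugate DFT indices, complex uniform at one representative of each conjugate pair), then reruns the chain \eqref{eq:union}--\eqref{eq:useconcentration} with Lemma \ref{lem:concentration1} contributing a factor $3\delta f$ at each self-conjugate index and Lemma \ref{lem:concentration2} a factor $(3\delta)^2 g$ at each paired index, so that the total power of $\delta$ is exactly $n$. Your explicit remark that the constraints at conjugate frequencies are redundant for real $M$ is only implicit in the paper, but the argument is otherwise the same.
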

The proof of Lemma \ref{lem:real} is very similar to that of Theorem \ref{thm:extension}. In fact, the only difference is the following: the mapping between the real matrices $D,E$ and the complex vectors $\{a_j\}_{j=1}^{n}$, $\{b_j\}_{j=1}^{n}$ is no longer a bijection. The vectors $a_1$ and $b_1$ are real vectors. Due to the conjugate symmetry of DFT, the vectors $a_j$ and $a_{n+2-j}$ is a conjugate pairs, i.e. $a_j = \overline{a_{n+2-j}}$. The same is true for $b_j$ and $b_{n+2-j}$. Therefore, (roughly) the first half of the DFT measurements contain all the information of real-valued unknowns. There exists a bijection between $D,E$ and the vectors $\{a_j\}_{j=1}^{\lceil (n+1)/2 \rceil}$, $\{b_j\}_{j=1}^{\lceil (n+1)/2 \rceil}$.

Due to this subtlety, in the probabilistic argument (analogous to Lemma \ref{cla:probability}) we assume $\{a_j\}_{j=1}^{\lceil (n+1)/2 \rceil}$, $\{b_j\}_{j=1}^{\lceil (n+1)/2 \rceil}$ are i.i.d. random vectors drawn from the following distribution:
\begin{itemize}
	\item When $n$ is even, $a_1$, $a_{\frac{n}{2}+1}$ are i.i.d. real random vectors following a uniform distribution on $R\calB_{\bbR^{m_1}}$, and $b_1$, $b_{\frac{n}{2}+1}$ are i.i.d. real random vectors following a uniform distribution on $R\calB_{\bbR^{m_2}}$. The vectors $\{a_j\}_{j=2}^{\frac{n}{2}}$ are i.i.d. complex random vectors following a uniform distribution on $R\calB_{\bbC^{m_1}}$, and $\{b_j\}_{j=2}^{\frac{n}{2}}$ are i.i.d. complex random vectors following a uniform distribution on $R\calB_{\bbC^{m_2}}$. 
	\item When $n$ is odd, $a_1$ is a real random vectors following a uniform distribution on $R\calB_{\bbR^{m_1}}$, and $b_1$ is a real random vectors following a uniform distribution on $R\calB_{\bbR^{m_2}}$. The vectors $\{a_j\}_{j=2}^{\frac{n+1}{2}}$ are i.i.d. complex random vectors following a uniform distribution on $R\calB_{\bbC^{m_1}}$, and $\{b_j\}_{j=2}^{\frac{n+1}{2}}$ are i.i.d. complex random vectors following a uniform distribution on $R\calB_{\bbC^{m_2}}$. 
\end{itemize}

The proof of Lemma \ref{cla:probability} is changed correspondingly. (As before, we define $\delta = \rho R^2$.)  When bounding the probability $P_L$, \eqref{eq:independent} and \eqref{eq:useconcentration} now become:
\begin{itemize}
	\item When $n$ is even,
\begin{align*}
P_L \leq & \sum_{i=1}^{N_{\Omega_{\calM,L}}(\rho)} \prod_{j=1}^{\frac{n}{2}+1}\bbP\left[\left|a_j^*M'_{\rho,L,i}\overline{b_j}\right|\leq 3\delta\right] \\
\leq & N_{\Omega_\calM}\left(\frac{\delta}{R^2}\right) ~\bigl(3\delta f(3\delta,\frac{1}{L},L,R)\bigr)^2 \bigl((3\delta)^{2} g(3\delta,\frac{1}{L},L,R)\bigr)^{\frac{n}{2}-1} \\
= & N_{\Omega_\calM}\left(\frac{\delta}{R^2}\right) (3\delta)^n f(3\delta,\frac{1}{L},L,R)^2 g(3\delta,\frac{1}{L},L,R)^{\frac{n}{2}-1}.
\end{align*}
	\item When $n$ is odd,
\begin{align*}
P_L \leq & \sum_{i=1}^{N_{\Omega_{\calM,L}}(\rho)} \prod_{j=1}^{\frac{n+1}{2}}\bbP\left[\left|a_j^*M'_{\rho,L,i}\overline{b_j}\right|\leq 3\delta \right] \\
\leq & N_{\Omega_\calM}\left(\frac{\delta}{R^2}\right) ~\bigl(3\delta f(3\delta,\frac{1}{L},L,R)\bigr) \bigl((3\delta)^{2} g(3\delta,\frac{1}{L},L,R)\bigr)^{\frac{n-1}{2}} \\
= & N_{\Omega_\calM}\left(\frac{\delta}{R^2}\right) (3\delta)^n f(3\delta,\frac{1}{L},L,R) g(3\delta,\frac{1}{L},L,R)^{\frac{n-1}{2}}.
\end{align*}
\end{itemize}
Whether $n$ is even or odd, we have $P_L = O\bigl(N_{\Omega_\calM}\left(\frac{\delta}{R^2}\right) \delta^n\bigr)$. By the same argument as in the proof of Lemma \ref{cla:probability}, the sample complexity is $n>\underline{\dim}_\mathrm{B}(\Omega_\calM)$.
\end{proof}

\subsection{Proof of Theorem \ref{thm:stability}}\label{sec:proof_stability}

In this section, we establish the stability results in blind deconvolution. The measurement in (Noisy BD) can be rewritten in the frequency domain:
\[
\tilde{z}^{(j)}\defeq \frac{1}{\sqrt{n}}(Fz)^{(j)}=(FD)^{(j,:)}x_0(FE)^{(j,:)}y_0 + \frac{1}{\sqrt{n}}(Fe)^{(j)} = a_j^*M_0\overline{b_j}+\tilde{e}^{(j)},
\]
where $M_0=x_0y_0^T$, $a_j = (FD)^{(j,:)*}$, $b_j=(FE)^{(j,:)*}$, and $\tilde{e}= \frac{1}{\sqrt{n}}Fe$. Define linear operator $\calA(M)$ by $\calA(M)= \left[a_1^*M\overline{b_1},a_2^*M\overline{b_2},\cdots,a_n^*M\overline{b_n}\right]^T$.
We rewrite (Noisy BD) in the frequency domain:
\begin{align*}
\text{(Noisy BD)$_f$}\qquad\underset{M}{\min .}~~& \norm{\calA(M)-\tilde{z}}_2,\\
\text{s.t.}~~& M\in \sigma\Omega_\calB,
\end{align*}
where $\sigma\Omega_\calB = \{xy^T: x\in\Omega_\calX,y\in\Omega_\calY,\norm{xy^T}_\mathrm{F}\leq \sigma\}$.

Note that
\[
\calA(M) = \frac{1}{\sqrt{n}} F\mathcal{G}_{DE}(M), \quad \text{and}\quad \norm{\calA(M)}_2 = \frac{1}{\sqrt{n}} \norm{\mathcal{G}_{DE}(M)}_2.
\]
The single point stability result in the subspace constraints scenario in Theorem \ref{thm:stability} follows from Lemma \ref{lem:stability} , with every $\delta$ replaced by $\frac{\delta}{\sqrt{n}}$. All other cases can be proved using similar lemmas, which we omit here for brevity.

\begin{lemma}\label{lem:stability}
In (Noisy BD)$_f$ with subspace constraints, assume that the random vectors $\{a_j\}_{j=1}^{n}$ are i.i.d. following a uniform distribution on $R\calB_{\bbC^{m_1}}$, and $\{b_j\}_{j=1}^{n}$ are i.i.d. following a uniform distribution on $R\calB_{\bbC^{m_2}}$. Let the true matrix be $M_0\in\Omega_\calB = \Omega_\calM\bigcap \calB_{\bbC^{m_1\times m_2}}=\{xy^T:x\in\calB_{\bbC^{m_1}},y\in\calB_{\bbC^{m_2}}\}$. If $n>m_1+m_2$, then, with probability at least
\[
1-\left(648~ m_1m_2 \left(1+2\ln\frac{2R^2}{3\delta}\right)\right)^n\left(\frac{\delta^2}{R^4}\right)^{n-m_1-m_2}\left(\frac{1}{\varepsilon^2}\right)^n,
\]
for all $M\in\Omega_\calB$ such that $\norm{\calA(M)-\calA(M_0)}_2\leq \delta$, we have $\norm{M-M_0}_2\leq \varepsilon$.
\end{lemma}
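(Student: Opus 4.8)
The plan is to establish a quantitative (stable) version of the covering argument already used in the proof of Lemma \ref{cla:probability}. First I would reduce the claim to a statement about the event that there exists a matrix $M \in \Omega_\calB$ with $\norm{M - M_0}_2 > \varepsilon$ but $\norm{\calA(M) - \calA(M_0)}_2 \leq \delta$. Writing $N = M - M_0$, this $N$ lies in the difference set $\Omega_\calB - M_0$ (which has bounded Minkowski dimension at most $2(m_1+m_2)$ by Lemma \ref{lem:minkowski} and Lemma \ref{lem:setminus}), has $\norm{N}_2 > \varepsilon$, and satisfies $\abs{a_j^* N \overline{b_j}} \leq \delta$ for all $j$ (using that $\calA$ acts coordinatewise as $N \mapsto a_j^* N \overline{b_j}$). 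So the bad event is: there exists $N$ in the bounded set $(\Omega_\calB - M_0) \cap \{\norm{\cdot}_2 > \varepsilon\}$ with all $n$ bilinear forms small. Rather than the telescoping union over annuli $\Omega_{\calM,L}$ used for the pure-identifiability proof (which sends $\rho \to 0$), here I would fix a single scale: cover the relevant bounded set by $N_{\Omega_\calB - M_0}(\rho)$ balls of radius $\rho$ centered at points $N'_i$ that themselves lie in the set, with the key point that each $N'_i$ has $\norm{N'_i}_2 \gtrsim \varepsilon$ (after passing to a slightly enlarged radius $2\rho$, as in the excerpt), and choose $\rho$ (equivalently $\delta = R^2 \rho$) at the end.

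The core estimate is then, by a union bound over the cover and independence across $j$,
\[
\bbP[\text{bad event}] \leq N_{\Omega_\calB - M_0}\!\left(\tfrac{\delta}{R^2}\right) \prod_{j=1}^n \bbP\!\left[\abs{a_j^* N'_i \overline{b_j}} \leq 3\delta\right],
\]
exactly as in \eqref{eq:triangle}--\eqref{eq:useconcentration}, and I would invoke the concentration inequality of Lemma \ref{lem:concentration2}, but now used in the form that gives a bound depending on $\norm{N'_i}_2 \geq \varepsilon$: something like $\bbP[\abs{a_j^* N'_i \overline{b_j}} \leq 3\delta] \lesssim \frac{\delta^2}{R^4 \varepsilon^2}$ up to a logarithmic factor. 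Combined with the covering number bound $N_{\Omega_\calB - M_0}(\delta/R^2) \leq (C' \cdot \text{const})(R^2/\delta)^{2(m_1+m_2)}$ from the volumetric estimates \eqref{eq:volume}, multiplying the $n$ factors of $\frac{\delta^2}{R^4\varepsilon^2}$ against the $(\delta/R^2)^{-2(m_1+m_2)} = (\delta/R^2)^{-2d}$ covering factor leaves a net power $(\delta^2/R^4)^{n-d}(1/\varepsilon^2)^n$, which is precisely the claimed form, with the constant $648\, m_1 m_2(1 + 2\ln\frac{2R^2}{3\delta})^{\phantom{n}}$ raised to the $n$ coming from the per-coordinate concentration bound and the base of the covering estimate. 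I would track these constants carefully to match Table \ref{tab:constants}; in the subspace case $d = m_1 + m_2$ and the $\binom{m_1}{s_1}$-type factors are absent because the covering number of $\calB_{\bbC^{m_i}}$ has no combinatorial prefactor.

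The main obstacle, and the place I would spend the most care, is the precise form of the concentration of measure inequality for $\abs{a_j^* N \overline{b_j}}$ when $a_j, b_j$ are uniform on complex balls of radius $R$: I need not just that this probability is $O(\delta^2)$ (which suffices for identifiability, where $\delta \to 0$), but an explicit bound with the correct dependence on $\norm{N}_2$, on $R$, and with an at-most-logarithmic-in-$\delta$ correction factor, so that the final probability bound has the stated clean algebraic form with an honest constant. This is the content of Lemma \ref{lem:concentration2} in the appendix, and the subtlety is that a complex bilinear form $a^* N \overline{b}$ is not simply distributed — one conditions on $b$, uses that $a^* (N\overline{b})$ is, given $b$, a projection of a uniform vector onto a one-complex-dimensional subspace spanned by $N\overline{b}$, giving a factor $\sim \delta^2 / (R^2 \norm{N\overline{b}}_2^2)$, and then must control $\bbP[\norm{N\overline{b}}_2 \text{ small}]$ via $\norm{N}_2$ and a lower-tail bound for $\norm{N \overline{b}}_2$ over uniform $b$ — this is where the logarithmic factor $\ln\frac{R^2}{\delta}$ enters, from integrating the small-ball behavior of $\norm{N\overline{b}}_2$. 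A secondary, more bookkeeping-level obstacle is making sure the radius inflation $\rho \to 2\rho$ and the shift by $M_0$ do not degrade the lower bound $\norm{N'_i}_2 \gtrsim \varepsilon$ beyond a harmless constant, and that the replacement $\delta \mapsto \delta/\sqrt{n}$ needed to pass from $\calA$ back to $\mathcal{G}_{DE}$ (noted in the paragraph before the lemma) is consistent with the normalization; these are routine but need to be stated explicitly.
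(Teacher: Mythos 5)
Your proposal is correct and follows essentially the same route as the paper's proof: reduce to the failure event over the shifted set $\Omega_\calB - M_0$ with $\norm{M}_2>\varepsilon$, cover at scale $\rho=\delta/R^2$ with centers in the set, apply a union bound and independence together with the complex concentration inequality of Lemma \ref{lem:concentration2} at lower norm bound $\varepsilon$ (and upper bound $2$), and multiply against the covering-number bound $\bigl(6\sqrt{2}R^2/\delta\bigr)^{2(m_1+m_2)}$ obtained from \eqref{eq:product}, \eqref{eq:realimag}, and \eqref{eq:volume}. The constant $648 = (6\sqrt{2})^2\cdot 3^2$ and the factor $m_1m_2/R^4$ from the volume ratio $V_{\bbC^{m-1}}(R)/V_{\bbC^{m}}(R)=m/(\pi R^2)$ then yield exactly the stated bound.
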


\begin{proof}[Proof of Lemma \ref{lem:stability}]
We need to bound the following probability of stability:
\begin{align*}
P_s \defeq & \bbP\left[ \forall M\in\Omega_\calB, \text{ if }\norm{\calA(M)-\calA(M_0)}_2\leq \delta, \text{ then } \norm{M-M_0}_2\leq \varepsilon \right] \nonumber\\
= & 1 -\bbP\left[ \exists M\in\Omega_\calB, \text{ s.t. }\norm{\calA(M)-\calA(M_0)}_2\leq \delta, \text{ and } \norm{M-M_0}_2 > \varepsilon \right] \nonumber\\
= & 1 -\bbP\left[ \exists M\in\Omega_\calB - M_0, \text{ s.t. } \norm{M}_2 > \varepsilon \text{ and } \norm{\calA(M)}_2\leq \delta \right] \nonumber\\
\eqdef & 1- P_f,
\end{align*}
where the probability of failure $P_f$ satisfies:
\begin{align}
P_f = & \bbP\left[ \exists M\in\Omega_\calB - M_0, \text{ s.t. } \norm{M}_2 > \varepsilon \text{ and } \norm{\calA(M)}_2\leq \delta \right] \nonumber\\
\leq & \bbP\left[ \exists M\in\Omega_\calB - M_0, \text{ s.t. } \norm{M}_2 > \varepsilon \text{ and } |a_j^*M\overline{b_j}| \leq \delta, j=1,2,\cdots,n \right]  \nonumber\\
\leq & N_{\Omega_\calB}\left(\frac{\delta}{R^2}\right) ~(3\delta)^{2n} g(3\delta,\varepsilon,2,R)^n \label{eq:useusecon} \\
\leq & \left(\frac{6\sqrt{2} R^2}{\delta}\right)^{2m_1+2m_2} (3\delta)^{2n}  \left(\frac{\pi^2 \cdot V_{\bbC^{m_1-1}}(R)\cdot V_{\bbC^{m_2-1}}(R)}{\varepsilon^2\cdot V_{\bbC^{m_1}}(R)\cdot V_{\bbC^{m_2}}(R)} \left(1+2\ln\frac{2R^2}{3\delta}\right)\right)^n \label{eq:usebounds} \\
= & \left(\frac{6\sqrt{2} R^2}{\delta}\right)^{2m_1+2m_2} (3\delta)^{2n}  \left(\frac{m_1m_2}{\varepsilon^2 R^4} \left(1+2\ln\frac{2R^2}{3\delta}\right)\right)^n \label{eq:cancelR} \\
\leq & \left(\frac{R^2}{\delta}\right)^{2m_1+2m_2}(6\sqrt{2})^{2n} (3\delta)^{2n}  \left(\frac{m_1m_2}{\varepsilon^2R^4} \left(1+2\ln\frac{2R^2}{3\delta}\right)\right)^n \nonumber \\
= & \left(\frac{\delta^2}{R^4}\right)^{n-m_1-m_2}\left(\frac{1}{\varepsilon^2}\right)^n \left(648~ m_1m_2 \left(1+2\ln\frac{2R^2}{3\delta}\right)\right)^n. \nonumber
\end{align}
Inequality \eqref{eq:useusecon} follows from \eqref{eq:useconcentration}, with the norm bounds $\varepsilon< \norm{M}_2\leq 2$. In \eqref{eq:usebounds}, the bound on the covering number of $\Omega_\calB = \Omega_\calM\bigcap \calB_{\bbC^{m_1\times m_2}}=\{xy^T:x\in\calB_{\bbC^{m_1}},y\in\calB_{\bbC^{m_2}}\}$ follows from \eqref{eq:product}, \eqref{eq:realimag} in Appendix \ref{app:lemmas}, and from \eqref{eq:volume} using the following steps:
\begin{align*}
N_{\Omega_\calB}\left(\frac{\delta}{R^2}\right) \leq & N_{\calB_{\bbC^{m_1}}}\left(\frac{\delta}{2R^2}\right)N_{\calB_{\bbC^{m_2}}}\left(\frac{\delta}{2R^2}\right) \\
\leq & \left(N_{\calB_{\bbR^{m_1}}}\left(\frac{\delta}{2\sqrt{2}R^2}\right)\right)^2 \left(N_{\calB_{\bbR^{m_2}}}\left(\frac{\delta}{2\sqrt{2}R^2}\right)\right)^2 \\
\leq & \left(\left(\frac{6\sqrt{2} R^2}{\delta}\right)^{m_1}\right)^2\left(\left(\frac{6\sqrt{2} R^2}{\delta}\right)^{m_2}\right)^2 = \left(\frac{6\sqrt{2} R^2}{\delta}\right)^{2m_1+2m_2}.
\end{align*}
The expression for $g(3\delta,\frac{1}{L},2,R)$ is given by \eqref{eq:gexpression} in Appendix \ref{app:concentration}. Recall that $V_{\bbC^{m_1}}(R)$ denotes the volume of a ball of radius $R$ in $\bbC^{m_1}$. Equation \eqref{eq:cancelR} follows from the fact that $V_{\bbC^{m}}(R)=V_{\bbR^{2m}}(R) = \frac{\pi^mR^{2m}}{m!}$. That completes the proof.
\end{proof}

\section{Conclusions} \label{sec:conclusions}
We studied the identifiability of blind deconvolution problems with subspace or sparsity constraints. The sample complexity results in Section \ref{sec:main} are, to within only one sample, optimal. Our results are derived with generic bases or frames, which means they are invalid only on a set of Lebesgue measure zero. If we assume that the bases or frames are drawn from a distribution that is absolutely continuous with respect to the Lebesgue measure on the space of bases or frames, then the results hold almost surely. Furthermore, if the bases or frames follow a distribution specified in this paper, then under the same sample complexities, the recovery is not only unique with probability $1$, but also stable with high probability against small perturbations in the measurements. These results provide the first \emph{tight} sample complexity bounds, without constants or log factors, for unique or stable recovery in blind deconvolution. They are fundamental to the blind deconvolution problem, independent of algorithms.


\appendix
\renewcommand{\theequation}{\Alph{section}.\arabic{equation}}
\setcounter{equation}{0}
\section{Concentration of Measure}\label{app:concentration}
\begin{lemma}\label{lem:concentration1}
Suppose $a\in\bbR^{m_1}$ and $b\in\bbR^{m_2}$ are independent random vectors, following uniform distributions on $R\calB_{\bbR^{m_1}}$ and $R\calB_{\bbR^{m_2}}$, respectively. If a matrix $M\in\bbR^{m_1\times m_2}$ satisfies $\ell\leq \norm{M}_2\leq L$, then
\[
\bbP\left[\left|a^TM b\right|\leq \rho\right] \leq \rho f(\rho,\ell,L,R),
\]
where $f(\rho,\ell,L,R)$ satisfies $\lim_{\rho\rightarrow 0} \frac{\log f(\rho,\ell,L,R)}{\log \frac{1}{\rho}} = 0$.
\end{lemma}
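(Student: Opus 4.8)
The plan is to condition on $b$ and reduce the statement to a one-dimensional anti-concentration estimate for $a$. The elementary fact I would use first: if $a$ is uniform on $R\calB_{\bbR^{m}}$ and $u\in\bbR^{m}$ is any fixed unit vector, then by rotational invariance the law of $a^{T}u$ equals that of a single coordinate of $a$, whose density is proportional to $(1-(t/R)^{2})^{(m-1)/2}$ and hence bounded by $c_{m}/R$ for a constant $c_{m}$ depending only on $m$. Consequently $\bbP[\,|a^{T}u|\le s\,]\le 2c_{m}s/R$ for every $s>0$. Conditioning on $b$ (independent of $a$) and applying this with $u=Mb/\norm{Mb}_{2}$ when $Mb\neq0$, together with the trivial bound by $1$, gives
\[
\bbP\bigl[\,|a^{T}Mb|\le\rho \ \big|\ b\,\bigr]\;\le\;\min\Bigl(1,\ \frac{2c_{m_{1}}\rho}{R\,\norm{Mb}_{2}}\Bigr).
\]

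The second ingredient is a small-ball bound for $r:=\norm{Mb}_{2}$. Since $\norm{M}_{2}\ge\ell$, taking a top right singular vector $v$ of $M$ we have $\norm{Mb}_{2}\ge\norm{M}_{2}\,|\langle v,b\rangle|\ge\ell\,|\langle v,b\rangle|$, so applying the same marginal-density bound to the independent vector $b$ (uniform on $R\calB_{\bbR^{m_{2}}}$) yields
\[
\bbP\bigl[\,r\le t\,\bigr]\;\le\;\bbP\bigl[\,|\langle v,b\rangle|\le t/\ell\,\bigr]\;\le\;\frac{2c_{m_{2}}}{R\ell}\,t,\qquad t>0;
\]
in particular $r>0$ almost surely, and $r\le\norm{M}_{2}\norm{b}_{2}\le LR$, so $r$ takes values in $(0,LR]$.

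It then remains to integrate the conditional bound against the law of $r$. Write $A:=2c_{m_{1}}\rho/R$ and $B:=2c_{m_{2}}/(R\ell)$, so the first display is $\le\min(1,A/r)$ and $\bbP[r\le t]\le Bt$. I would split $\mathbb{E}_{b}[\min(1,A/r)]$ into the contribution of $\{r\le A\}$, bounded by $\bbP[r\le A]\le AB$, and a dyadic decomposition of $(A,LR]$ into the $\lceil\log_{2}(LR/A)\rceil$ intervals $(2^{k}A,2^{k+1}A]$: on each of them $A/r\le 2^{-k}$ while $\bbP\bigl[r\in(2^{k}A,2^{k+1}A]\bigr]\le 2^{k+1}AB$, so each interval contributes at most $2AB$. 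Summing,
\[
\bbP\bigl[\,|a^{T}Mb|\le\rho\,\bigr]\;\le\;AB\Bigl(3+2\log_{2}\tfrac{LR^{2}}{2c_{m_{1}}\rho}\Bigr)\;=\;\rho\cdot\frac{4c_{m_{1}}c_{m_{2}}}{R^{2}\ell}\Bigl(3+2\log_{2}\tfrac{LR^{2}}{2c_{m_{1}}\rho}\Bigr),
\]
so one may take $f(\rho,\ell,L,R)=\frac{4c_{m_{1}}c_{m_{2}}}{R^{2}\ell}\bigl(3+2\log_{2}\frac{LR^{2}}{2c_{m_{1}}\rho}\bigr)$ (replaced by its positive part, or considered only for small $\rho$, if the logarithm is nonpositive). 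Since this $f$ grows only logarithmically in $1/\rho$, we get $\log f=O(\log\log\tfrac1\rho)=o(\log\tfrac1\rho)$, hence $\lim_{\rho\to0}\frac{\log f(\rho,\ell,L,R)}{\log(1/\rho)}=0$, as required.

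The step I expect to be the real obstacle is the last one. A crude single-threshold split of $\norm{Mb}_{2}$ — cut at some level $\tau$, bound $\{r\le\tau\}$ by probability and $\{r>\tau\}$ by $2c_{m_1}\rho/(R\tau)$, then optimize over $\tau$ — only produces an $O(\sqrt{\rho})$ bound, which is \emph{not} of the claimed form $\rho f(\rho)$ with $f$ subpolynomial. Spending a logarithmic number of scales in the dyadic sum is precisely what upgrades $\sqrt{\rho}$ to $\rho\log\tfrac1\rho$; the marginal-density bound and the singular-value inequality are routine once this structure is in place.
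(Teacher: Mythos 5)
Your proof is correct, but it takes a genuinely different route from the paper's. The paper rotates by the SVD ($\tilde a = U^Ta$, $\tilde b = V^Tb$, both still uniform on balls), conditions on all coordinates except $\tilde a^{(1)},\tilde b^{(1)}$, and evaluates the remaining two-dimensional integral explicitly: for fixed $\tilde a^{(1)}$ the constraint confines $\tilde b^{(1)}$ to an interval of length $2\rho/(\norm{M}_2|\tilde a^{(1)}|)$, and integrating $\min\bigl(2\rho/(\norm{M}_2|t|),\,2R\bigr)$ over $t\in[-R,R]$ produces the factor $\frac{4\rho}{\norm{M}_2}\bigl(1+\ln\frac{\norm{M}_2R^2}{\rho}\bigr)$. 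You instead condition on all of $b$, invoke the bounded marginal density of $a^Tu$ to get the slab bound $\min\bigl(1,\,2c_{m_1}\rho/(R\norm{Mb}_2)\bigr)$, prove a linear small-ball estimate for $\norm{Mb}_2$ via the top singular vector, and integrate out $\norm{Mb}_2$ by a dyadic decomposition. The skeleton is the same -- a one-dimensional anti-concentration step whose strength degrades as a certain scalar becomes small, plus a linear small-ball bound for that scalar, with the logarithm arising from the integration across scales -- but the conditioning variable differs ($|\tilde a^{(1)}|$ in the paper versus $\norm{Mb}_2$ for you), and your layer-cake/dyadic argument replaces the paper's closed-form integral. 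Your version is more robust: it only uses bounded marginal densities and rotational invariance, so it transfers to other distributions and to the complex case (the paper's Lemma A.2) with the obvious modifications ($c/R^2$ density bound and quadratic small-ball, yielding $\rho^2\log\frac1\rho$). The paper's explicit computation buys the exact constant $\frac{4\,V_{\bbR^{m_1-1}}(R)V_{\bbR^{m_2-1}}(R)}{\ell\,V_{\bbR^{m_1}}(R)V_{\bbR^{m_2}}(R)}\bigl(1+\ln\frac{LR^2}{\rho}\bigr)$, whose complex analogue is carried verbatim into the quantitative stability bounds of Theorem 3.4; your constant is of the same form (your $c_m$ \emph{is} the volume ratio $V_{\bbR^{m-1}}(1)/V_{\bbR^m}(1)$) but slightly larger on the logarithmic term. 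Your closing remark is also on point: a single-threshold split really does cap out at $O(\sqrt{\rho})$, and either the multi-scale sum or the exact integral is needed to reach $\rho\log\frac1\rho$. One shared caveat, which you flag and the paper does not: for $\rho$ so large that the logarithm goes nonpositive the stated bound degenerates, but only the small-$\rho$ regime is ever used.
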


\begin{proof}
Suppose the  singular value decomposition (SVD) of $M$ is
\[
M = U\Sigma V^T,
\]
where $U\in\bbR^{m_1\times m_1}$ and $V\in\bbR^{m_2\times m_2}$ are orthogonal matrices, and $\Sigma\in\bbR^{m_1\times m_2}$ satisfies $\ell<\Sigma^{(1,1)} = \norm{M}_2<L$. Let $\tilde{a} \defeq U^Ta$, and $\tilde{b} \defeq V^Tb$, then $\tilde{a}$ and $\tilde{b}$ are also independent random vectors, following uniform distributions on $R\calB_{\bbR^{m_1}}$ and $R\calB_{\bbR^{m_2}}$, respectively. 

Therefore,
\begin{align}
& \bbP\left[\left|a^TM\overline{b}\right|\leq \rho\right] \nonumber\\
= & \bbP\left[\left|\tilde{a}^T\Sigma\tilde{b}\right|\leq \rho\right] \nonumber\\
= & \frac{\int_{R\calB_{\bbR^{m_1}}} ~\rmd\tilde{a}\int_{R\calB_{\bbR^{m_2}}}~ \rmd\tilde{b} ~\ind{|\tilde{a}^T\Sigma\tilde{b}|\leq \rho} }{\int_{R\calB_{\bbR^{m_1}}} ~\rmd\tilde{a}\int_{R\calB_{\bbR^{m_2}}}~\rmd\tilde{b} } \nonumber\\
=& \frac{1}{V_{\bbR^{m_1}}(R)\cdot V_{\bbR^{m_2}}(R)} \int\limits_{R\calB_{\bbR^{m_1-1}}}  ~\rmd\tilde{a}^{(2:m_1)} \int\limits_{R\calB_{\bbR^{m_2-1}}}~\rmd\tilde{b}^{(2:m_2)}~ \phi(\tilde{a},\tilde{b}),  \label{eq:big}
\end{align}
where $V_{\bbR^{m_1}}(R)=\int_{R\calB_{\bbR^{m_1}}} ~\rmd\tilde{a}$ denotes the volume of a ball of radius $R$ in $\bbR^{m_1}$, and
\begin{align}
\phi(\tilde{a},\tilde{b}) =& \int_{-R}^{R} ~\rmd\tilde{a}^{(1)} \int_{-R}^{R} ~\rmd\tilde{b}^{(1)}~\ind{|\tilde{a}^T\Sigma\tilde{b}|\leq \rho} \cdot \ind{|\tilde{a}^{(1)}|^2 \leq R^2-\norm{\tilde{a}^{(2:m_1)}}_2^2} \cdot \ind{|\tilde{b}^{(1)}|^2 \leq R^2-\norm{\tilde{b}^{(2:m_2)}}_2^2} \nonumber\\
\leq & \int_{-R}^{R} ~\rmd\tilde{a}^{(1)} \int_{-R}^{R} ~\rmd\tilde{b}^{(1)} \ind{|\tilde{a}^{(1)}\tilde{b}^{(1)}+\frac{1}{\norm{M}_2}\tilde{a}^{(2:m_1)T}\Sigma^{(2:m_1,2:m_2)}\tilde{b}^{(2:m_2)}| \leq \frac{\rho}{\norm{M}_2}}             \label{eq:goodend}\\
\leq & \int_{-R}^{R} ~\rmd\tilde{a}^{(1)} \min\Biggl(\frac{2\rho}{\norm{M}_2|\tilde{a}^{(1)}|}, 2R \Biggl) \label{eq:badstart} \\
=  & \frac{4\rho}{\norm{M}_2} \left(1+\ln\frac{\norm{M}_2R^2}{\rho}\right) \nonumber \\
\leq & \frac{4\rho}{\ell}  \left(1+\ln\frac{LR^2}{\rho}\right). \label{eq:small}
\end{align}
Substituting \eqref{eq:small} into \eqref{eq:big}, we obtain
\[
\bbP\left[\left|a^TM\overline{b}\right|\leq \rho\right] \leq \frac{4\rho \cdot V_{\bbR^{m_1-1}}(R)\cdot V_{\bbR^{m_2-1}}(R)}{\ell\cdot V_{\bbR^{m_1}}(R)\cdot V_{\bbR^{m_2}}(R)} \left(1+\ln\frac{LR^2}{\rho}\right).
\]
Define
\[
f(\rho,\ell,L,R) \defeq \frac{4 \cdot V_{\bbR^{m_1-1}}(R)\cdot V_{\bbR^{m_2-1}}(R)}{\ell \cdot V_{\bbR^{m_1}}(R)\cdot V_{\bbR^{m_2}}(R)} \left(1+\ln\frac{LR^2}{\rho}\right).
\]
Clearly, $\lim_{\rho\rightarrow 0} \frac{\log f(\rho,\ell,L,R)}{\log \frac{1}{\rho}} = 0$.
\end{proof}

\begin{lemma}\label{lem:concentration2}
Suppose $a\in\bbC^{m_1}$ and $b\in\bbC^{m_2}$ are independent random vectors, following uniform distributions on $R\calB_{\bbC^{m_1}}$ and $R\calB_{\bbC^{m_2}}$, respectively. If  a matrix $M\in\bbC^{m_1\times m_2}$ satisfies $\ell\leq\norm{M}_2\leq L$, then
\[
\bbP\left[\left|a^*M\overline{b}\right|\leq \rho\right] \leq \rho^2 g(\rho,\ell,L,R),
\]
where $g(\rho,\ell,L,R)$ satisfies $\lim_{\rho\rightarrow 0} \frac{\log g(\rho,\ell,L,R)}{\log \frac{1}{\rho}} = 0$.
\end{lemma}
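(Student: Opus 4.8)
The plan is to follow the template of the proof of Lemma~\ref{lem:concentration1}, performing the same coordinate reductions but replacing the scalar integral over the first coordinate with a genuinely two-dimensional one; this is exactly where the exponent $2$ on $\rho$ (as opposed to $1$ in the real case) will come from. First I would take the singular value decomposition $M = U\Sigma V^*$ with $U\in\bbC^{m_1\times m_1}$ and $V\in\bbC^{m_2\times m_2}$ unitary and $\Sigma^{(1,1)} = \norm{M}_2$. Setting $\tilde a \defeq U^* a$ and $\tilde b \defeq V^T b$, and using that $U^*$ and $V^T$ are unitary together with the unitary invariance of the uniform distribution on a centered Euclidean ball, the vectors $\tilde a$ and $\tilde b$ are again uniform on $R\calB_{\bbC^{m_1}}$ and $R\calB_{\bbC^{m_2}}$, while $a^* M \overline b = \tilde a^* \Sigma \overline{\tilde b}$. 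Thus it suffices to bound $\bbP\bigl[\,|\tilde a^*\Sigma\overline{\tilde b}| \le \rho\,\bigr]$.

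Next I would write this probability as a normalized multiple integral and apply Fubini to peel off the first coordinates $\tilde a^{(1)},\tilde b^{(1)}\in\bbC$. Freezing $\tilde a^{(2:m_1)}$ and $\tilde b^{(2:m_2)}$, the diagonal structure of $\Sigma$ gives $\tilde a^*\Sigma\overline{\tilde b} = \norm{M}_2\,\overline{\tilde a^{(1)}\tilde b^{(1)}} + c$ for a constant $c$ depending only on the frozen coordinates, and conditionally $\tilde a^{(1)},\tilde b^{(1)}$ are independent and uniform on disks of radius at most $R$. Relaxing the ball constraints that tie the first coordinates to the remaining ones (which only enlarges the region of integration), the inner double integral is then bounded by the planar Lebesgue measure of $\bigl\{(\tilde a^{(1)},\tilde b^{(1)}) : |\tilde a^{(1)}|\le R,\ |\tilde b^{(1)}|\le R,\ |\tilde a^{(1)}\tilde b^{(1)} + c'| \le r\bigr\}$, where $r \defeq \rho/\norm{M}_2$ and $c'$ is the appropriate shift.

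The crux of the argument, and the step I expect to be the main obstacle, is this planar estimate: it is essentially the ``new concentration of measure inequality'' advertised in the introduction. For each fixed $\tilde a^{(1)}$ the admissible $\tilde b^{(1)}$ form a disk of radius $r/|\tilde a^{(1)}|$, hence of area at most $\min\bigl(\pi r^2/|\tilde a^{(1)}|^2,\ \pi R^2\bigr)$; integrating this over $\tilde a^{(1)}$ in polar coordinates and splitting the radial integral at the crossover radius $|\tilde a^{(1)}| = r/R$ yields a bound of the form $\pi^2 r^2\bigl(1 + 2\ln(R^2/r)\bigr)$. The key point is that the $1/|\tilde a^{(1)}|^2$ singularity at the origin, though integrable in two real dimensions, produces the extra logarithmic factor; this logarithm is the qualitatively new feature that is absent from the real case of Lemma~\ref{lem:concentration1}. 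Substituting $r = \rho/\norm{M}_2$ and using $\ell \le \norm{M}_2 \le L$ bounds the inner integral by $\pi^2\rho^2\ell^{-2}\bigl(1 + 2\ln(LR^2/\rho)\bigr)$.

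Finally I would reassemble: multiplying the bound on the inner integral by the outer integrals over the frozen coordinates and dividing by the normalizing volumes gives
\[
\bbP\bigl[\,|a^* M \overline b| \le \rho\,\bigr] \;\le\; \frac{\pi^2\, V_{\bbC^{m_1-1}}(R)\, V_{\bbC^{m_2-1}}(R)}{\ell^2\, V_{\bbC^{m_1}}(R)\, V_{\bbC^{m_2}}(R)}\;\rho^2\Bigl(1 + 2\ln\frac{LR^2}{\rho}\Bigr),
\]
so that defining $g(\rho,\ell,L,R)$ to be the coefficient of $\rho^2$ on the right-hand side completes the proof; since $g$ depends on $\rho$ only through a single logarithm, $\lim_{\rho\to 0}\frac{\log g(\rho,\ell,L,R)}{\log(1/\rho)} = 0$ is immediate.
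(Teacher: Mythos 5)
Your proposal is correct and follows essentially the same route as the paper: SVD reduction via a unitary change of variables, peeling off the first coordinates, and the polar-coordinate estimate $\int_{R\calB_{\bbC^1}}\min\bigl(\pi r^2/|z|^2,\pi R^2\bigr)\,\rmd z = \pi^2 r^2\bigl(1+2\ln(R^2/r)\bigr)$, yielding exactly the paper's bound and expression for $g$. The only (immaterial) difference is the convention for the rotated vectors ($\tilde a = U^*a$, $\tilde b = V^Tb$ versus the paper's $\tilde a = U^T\overline{a}$, $\tilde b = V^*\overline{b}$).
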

\begin{proof}
The proof follows steps mostly analogous to those in the proof of Lemma \ref{lem:concentration1} by replacing the real field by the complex field. Here, we define $\tilde{a} \defeq U^T\overline{a}$, and $\tilde{b} \defeq V^*\overline{b}$. It follows that \eqref{eq:big} -- \eqref{eq:goodend} apply, with the real field replaced by the complex field, and the interval of integration $[-R,R]$ replaced by the disk in the complex plane $R\calB_{\bbC^{1}}$. Then \eqref{eq:badstart} -- \eqref{eq:small} are replaced by
\begin{align*}
\phi(\tilde{a},\tilde{b}) \leq & \int_{R\calB_{\bbC^{1}}} ~\rmd\tilde{a}^{(1)} \min\Biggl(\frac{\pi\rho^2}{\norm{M}_2^2|\tilde{a}^{(1)}|^2}, \pi R^2\Biggl) \nonumber \\
=  & \frac{\pi^2\rho^2}{\norm{M}_2^2} \left(1+2\ln\frac{\norm{M}_2R^2}{\rho}\right) \nonumber \\
\leq & \frac{\pi^2\rho^2}{\ell^2} \left(1+2\ln\frac{LR^2}{\rho}\right). \nonumber
\end{align*}

In a manner analogous to the proof of Lemma \ref{lem:concentration1} , it follows that
\[
\bbP\left[\left|a^*M\overline{b}\right|\leq \rho\right] \leq \frac{\pi^2 \rho^2 \cdot V_{\bbC^{m_1-1}}(R)\cdot V_{\bbC^{m_2-1}}(R)}{\ell^2\cdot V_{\bbC^{m_1}}(R)\cdot V_{\bbC^{m_2}}(R)} \left(1+2\ln\frac{LR^2}{\rho}\right).
\]
Here we use $V_{\bbC^{m_1}}(R)=\int_{R\calB_{\bbC^{m_1}}} ~\rmd\tilde{a}$ to denote the volume of a ball of radius $R$ in $\bbC^{m_1}$. Define
\begin{equation}
g(\rho,\ell,L,R) \defeq \frac{\pi^2 \cdot V_{\bbC^{m_1-1}}(R)\cdot V_{\bbC^{m_2-1}}(R)}{\ell^2\cdot V_{\bbC^{m_1}}(R)\cdot V_{\bbC^{m_2}}(R)} \left(1+2\ln\frac{LR^2}{\rho}\right). \label{eq:gexpression}
\end{equation}
Clearly, $\lim_{\rho\rightarrow 0} \frac{\log g(\rho,\ell,L,R)}{\log \frac{1}{\rho}} = 0$.
\end{proof}

\setcounter{equation}{0}
\section{Useful Lemmas About Minkowski Dimension}\label{app:lemmas}

\begin{lemma}\label{lem:setminus}
Let $\Omega_\calX$ and $\Omega_\calY$ be nonempty bounded subsets of a normed vector space. Then $\overline{\dim}_\mathrm{B}(\Omega_\calX-\Omega_\calY)\leq  \overline{\dim}_\mathrm{B}(\Omega_\calX)+\overline{\dim}_\mathrm{B}(\Omega_\calY)$.
\end{lemma}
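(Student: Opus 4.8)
The plan is the standard covering-number product argument for set differences. The key observation is that a $\rho$-cover of $\Omega_\calX$ and a $\rho$-cover of $\Omega_\calY$ together induce a $2\rho$-cover of $\Omega_\calX-\Omega_\calY$ whose centers are the pairwise differences of the original centers; this yields a submultiplicative bound on covering numbers, and the dimension inequality then falls out by taking logarithms and $\limsup$'s.

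\textbf{Main steps.} First I would invoke boundedness of $\Omega_\calX$ and $\Omega_\calY$ to ensure that $N_{\Omega_\calX}(\rho)$ and $N_{\Omega_\calY}(\rho)$ are finite for every $\rho>0$, so the quantities below are well defined. Next, fix $\rho>0$ and choose minimal covers: points $M_1,\dots,M_{N_{\Omega_\calX}(\rho)}$ with $\Omega_\calX\subset\bigcup_i(M_i+\rho\calB)$ and points $N_1,\dots,N_{N_{\Omega_\calY}(\rho)}$ with $\Omega_\calY\subset\bigcup_j(N_j+\rho\calB)$, where $\calB$ is the unit ball of the normed space (the definition of covering number in Definition~\ref{def:minkowski} places no restriction on the location of the centers, so this is legitimate). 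Then for any $x\in\Omega_\calX$ and $y\in\Omega_\calY$ there are indices $i,j$ with $\norm{x-M_i}\le\rho$ and $\norm{y-N_j}\le\rho$, hence $\norm{(x-y)-(M_i-N_j)}\le 2\rho$ by the triangle inequality. This shows $\Omega_\calX-\Omega_\calY\subset\bigcup_{i,j}\bigl((M_i-N_j)+2\rho\calB\bigr)$, and therefore
\[
N_{\Omega_\calX-\Omega_\calY}(2\rho)\;\le\;N_{\Omega_\calX}(\rho)\,N_{\Omega_\calY}(\rho).
\]

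\textbf{Passing to dimensions.} Taking logarithms, dividing by $\log\frac{1}{2\rho}$, and letting $\rho\to0$ (the reparametrization $\rho\mapsto2\rho$ does not affect the $\limsup$ as $\rho\to0$), I would write
\[
\overline{\dim}_\mathrm{B}(\Omega_\calX-\Omega_\calY)
=\limsup_{\rho\to0}\frac{\log N_{\Omega_\calX-\Omega_\calY}(2\rho)}{\log\frac{1}{2\rho}}
\le\limsup_{\rho\to0}\;\frac{\log\frac{1}{\rho}}{\log\frac{1}{2\rho}}\cdot\frac{\log N_{\Omega_\calX}(\rho)+\log N_{\Omega_\calY}(\rho)}{\log\frac{1}{\rho}}.
\]
Since $\frac{\log(1/\rho)}{\log(1/(2\rho))}\to1$ as $\rho\to0$, and since $\limsup$ is subadditive, the right-hand side is bounded by $\overline{\dim}_\mathrm{B}(\Omega_\calX)+\overline{\dim}_\mathrm{B}(\Omega_\calY)$, which is the claim.

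\textbf{Expected obstacle.} There is no serious obstacle here; the argument is routine. The only points requiring a little care are purely bookkeeping: confirming that covering numbers may be taken with unconstrained centers (so the difference centers $M_i-N_j$ are admissible), and handling the $\limsup$ correctly — in particular that the factor $\log\frac{1}{\rho}/\log\frac{1}{2\rho}\to1$ lets us replace the denominator $\log\frac{1}{2\rho}$ by $\log\frac{1}{\rho}$ without loss, and that subadditivity of $\limsup$ (rather than equality) is all that is needed.
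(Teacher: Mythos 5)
Your proposal is correct and follows essentially the same route as the paper's proof: cover each set with $\rho$-balls, take pairwise differences of centers to obtain a $2\rho$-cover of $\Omega_\calX-\Omega_\calY$, deduce $N_{\Omega_\calX-\Omega_\calY}(2\rho)\leq N_{\Omega_\calX}(\rho)N_{\Omega_\calY}(\rho)$, and pass to the $\limsup$. Your extra remarks on the reparametrization $\rho\mapsto 2\rho$ and on subadditivity of $\limsup$ are just careful bookkeeping of steps the paper treats implicitly.
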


\begin{proof}
We cover $\Omega_\calX$ and $\Omega_\calY$ with balls of radius $\rho$ centered at $\{x_i\}_{i=1}^{N_{\Omega_\calX}(\rho)}$ and $\{y_i\}_{i=1}^{N_{\Omega_\calY}(\rho)}$, respectively. Given any point $x-y\in\Omega_\calX-\Omega_\calY$, we can find centers of the above covering, $x_{i_1}$ and $y_{i_2}$, such that
\[
\norm{x-x_{i_1}} \leq \rho,\quad
\norm{y-y_{i_2}} \leq \rho.
\]
Hence, 
\[
\norm{(x-y)-(x_{i_1}-y_{i_2})} \leq \norm{x-x_{i_1}}+\norm{y-y_{i_2}} \leq 2\rho.
\]
Therefore, the set $\Omega_\calX-\Omega_\calY$ can be covered by $N_{\Omega_\calX}(\rho)N_{\Omega_\calY}(\rho)$ balls of radius $2\rho$ centered at points (like $x_{i_1}-y_{i_2}$) generated by the centers $\{x_i\}_{i=1}^{N_{\Omega_\calX}(\rho)}$ and $\{y_i\}_{i=1}^{N_{\Omega_\calY}(\rho)}$. It follows that
\[
N_{(\Omega_\calX-\Omega_\calY)}(2\rho)\leq N_{\Omega_\calX}(\rho)N_{\Omega_\calY}(\rho).
\]
We then bound the Minkowski dimension:
\begin{align*}
\overline{\dim}_\mathrm{B}(\Omega_\calX-\Omega_\calY) = & \underset{\rho\rightarrow 0}{\lim\sup}\frac{\log N_{(\Omega_\calX-\Omega_\calY)}(2\rho)}{\log\frac{1}{2\rho}} \nonumber\\
\leq & \underset{\rho\rightarrow 0}{\lim\sup}\frac{\log N_{\Omega_\calX}(\rho)N_{\Omega_\calY}(\rho)}{\log\frac{1}{2\rho}} \nonumber\\
\leq & \underset{\rho\rightarrow 0}{\lim\sup} \frac{\log N_{\Omega_\calX}(\rho)}{\log\frac{1}{2\rho}}+\underset{\rho\rightarrow 0}{\lim\sup} \frac{\log N_{\Omega_\calY}(\rho)}{\log\frac{1}{2\rho}} \nonumber\\
=& \overline{\dim}_\mathrm{B}(\Omega_\calX)+\overline{\dim}_\mathrm{B}(\Omega_\calY).
\end{align*}
\end{proof}

\begin{lemma}\label{lem:product}
Let $\Omega_\calX$ and $\Omega_\calY$ be nonempty bounded subsets of $\bbC^{m_1}$ and $\bbC^{m_2}$, respectively. Let $\Omega_\calM = \{xy^T: x\in\Omega_\calX, y\in\Omega_\calY\}\subset \bbC^{m_1\times m_2}$. Then $\overline{\dim}_\mathrm{B}(\Omega_\calM)\leq  \overline{\dim}_\mathrm{B}(\Omega_\calX)+\overline{\dim}_\mathrm{B}(\Omega_\calY)$.
\end{lemma}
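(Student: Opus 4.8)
The plan is to mimic the proof of Lemma~\ref{lem:setminus}, replacing the additive structure $\Omega_\calX-\Omega_\calY$ by the bilinear structure $xy^T$. First I would fix $\rho>0$ and take minimal $\rho$-covers of $\Omega_\calX$ and $\Omega_\calY$ by Euclidean balls centered at $\{x_i\}_{i=1}^{N_{\Omega_\calX}(\rho)}$ and $\{y_j\}_{j=1}^{N_{\Omega_\calY}(\rho)}$. Since both sets are bounded, put $B\defeq \sup_{x\in\Omega_\calX}\norm{x}_2+\sup_{y\in\Omega_\calY}\norm{y}_2<\infty$; and, because a minimal cover has each ball meeting the set, each center obeys $\norm{x_i}_2\leq \sup_{x\in\Omega_\calX}\norm{x}_2+\rho$, so for $\rho\leq 1$ every relevant norm is at most $B+1$.

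The key step is the elementary bilinear estimate: for $x\in\Omega_\calX$, $y\in\Omega_\calY$, choosing $x_i,y_j$ within $\rho$ of $x,y$ respectively,
\[
\norm{xy^T-x_iy_j^T}_\mathrm{F}\leq \norm{xy^T-x_iy^T}_\mathrm{F}+\norm{x_iy^T-x_iy_j^T}_\mathrm{F}=\norm{x-x_i}_2\norm{y}_2+\norm{x_i}_2\norm{y-y_j}_2\leq 2(B+1)\rho,
\]
using $\norm{uv^T}_\mathrm{F}=\norm{u}_2\norm{v}_2$ and the triangle inequality in Frobenius norm. Hence $\Omega_\calM$ is covered by $N_{\Omega_\calX}(\rho)N_{\Omega_\calY}(\rho)$ Frobenius balls of radius $c\rho$ with $c\defeq 2(B+1)$, i.e. $N_{\Omega_\calM}(c\rho)\leq N_{\Omega_\calX}(\rho)N_{\Omega_\calY}(\rho)$.

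Finally I would pass to the limit exactly as in Lemma~\ref{lem:setminus}: write $\overline{\dim}_\mathrm{B}(\Omega_\calM)=\limsup_{\rho\rightarrow 0}\frac{\log N_{\Omega_\calM}(c\rho)}{\log\frac{1}{c\rho}}$, bound the numerator by $\log N_{\Omega_\calX}(\rho)+\log N_{\Omega_\calY}(\rho)$, use subadditivity of $\limsup$, and observe that $\frac{\log(1/\rho)}{\log(1/(c\rho))}\rightarrow 1$ as $\rho\rightarrow 0$ so the multiplicative constant $c$ disappears; this gives $\overline{\dim}_\mathrm{B}(\Omega_\calM)\leq \overline{\dim}_\mathrm{B}(\Omega_\calX)+\overline{\dim}_\mathrm{B}(\Omega_\calY)$. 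There is no genuine obstacle here: the only points needing a line of care are the boundedness bookkeeping (so that the Frobenius radius of the product cover is $O(\rho)$) and the harmless rescaling of $\rho$ by $c$, both routine.
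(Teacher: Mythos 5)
Your proof is correct and follows essentially the same route as the paper's: a product of $\rho$-covers of $\Omega_\calX$ and $\Omega_\calY$, the bilinear Frobenius estimate $\norm{xy^T-x_iy_j^T}_\mathrm{F}\leq \norm{x-x_i}_2\norm{y}_2+\norm{x_i}_2\norm{y-y_j}_2 = O(\rho)$, the resulting bound $N_{\Omega_\calM}(c\rho)\leq N_{\Omega_\calX}(\rho)N_{\Omega_\calY}(\rho)$, and absorption of the constant $c$ in the limit. The extra care you take about cover centers possibly lying slightly outside the sets is a harmless refinement of the paper's choice of a single bounding radius $L$.
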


\begin{proof}
Since $\Omega_\calX$ and $\Omega_\calY$ are bounded, there exists a large enough constant $L$ such that
\[
\Omega_\calX \subset L\calB_{\bbC^{m_1}},\qquad \Omega_\calY \subset L\calB_{\bbC^{m_2}}.
\]
We cover $\Omega_\calX$ and $\Omega_\calY$ with balls of radius $\rho$ centered at the following two sets of points, respectively:
\[ \{x_i\}_{i=1}^{N_{\Omega_\calX}(\rho)} \subset L\calB_{\bbC^{m_1}},\qquad  \{y_i\}_{i=1}^{N_{\Omega_\calY}(\rho)} \subset L\calB_{\bbC^{m_2}}.
\]
Given any point $xy^T\in\Omega_\calM$, we can find centers of the above coverings, $x_{i_1}$ and $y_{i_2}$, such that
\[
\norm{x-x_{i_1}}_2 \leq \rho,\qquad
\norm{y-y_{i_2}}_2 \leq \rho.
\]
Then
\begin{align*}
\norm{xy^T-x_{i_1}y_{i_2}^T}_\mathrm{F} =& \norm{xy^T-x_{i_1}y^T+x_{i_1}y^T-x_{i_1}y_{i_2}^T}_\mathrm{F}\\
\leq & \norm{x-x_{i_1}}_2\norm{y}_2+\norm{y-y_{i_2}}_2\norm{x_{i_1}}_2\\
\leq & 2L\rho.
\end{align*}
Therefore, the set $\Omega_\calM$ can be covered by $N_{\Omega_\calX}(\rho)N_{\Omega_\calY}(\rho)$ balls in $\bbC^{m_1\times m_2}$ of radius $2L\rho$, centered at the rank-$1$ matrices (like $x_{i_1}y_{i_2}^T$) generated by the centers of the coverings of $\Omega_\calX$ and $\Omega_\calY$. It follows that
\begin{equation}
N_{\Omega_\calM}(2L\rho)\leq N_{\Omega_\calX}(\rho)N_{\Omega_\calY}(\rho).\label{eq:product}
\end{equation}
Therefore,
\begin{align*}
\overline{\dim}_\mathrm{B}(\Omega_\calM) =& \underset{\rho\rightarrow 0}{\lim\sup}\frac{\log N_{\Omega_\calM}(2L\rho)}{\log\frac{1}{2L\rho}}\\
\leq & \underset{\rho\rightarrow 0}{\lim\sup}\frac{\log N_{\Omega_\calX}(\rho)N_{\Omega_\calY}(\rho)}{\log\frac{1}{2L\rho}}\\
\leq & \underset{\rho\rightarrow 0}{\lim\sup}\frac{\log N_{\Omega_\calX}(\rho)}{\log\frac{1}{2L\rho}}+\underset{\rho\rightarrow 0}{\lim\sup}\frac{\log N_{\Omega_\calY}(\rho)}{\log\frac{1}{2L\rho}}\\
=& \overline{\dim}_\mathrm{B}(\Omega_\calX)+\overline{\dim}_\mathrm{B}(\Omega_\calY).
\end{align*}
\end{proof}

\begin{lemma}\label{lem:realimag}
Let $\Omega_\calX$ be a nonempty bounded subset of $\bbC^m$. Let $\real(\Omega_\calX) = \{\real(x): x\in\Omega_\calX\}$, and $\imag(\Omega_\calX) = \{\imag(x): x\in\Omega_\calX\}$. Then
$\overline{\dim}_\mathrm{B}(\Omega_\calX)\leq \overline{\dim}_\mathrm{B}(\real(\Omega_\calX))+\overline{\dim}_\mathrm{B}(\imag(\Omega_\calX))$.
\end{lemma}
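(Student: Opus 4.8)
The plan is to mimic the covering-number arguments already used for Lemmas \ref{lem:setminus} and \ref{lem:product}. First I would observe that, since $\Omega_\calX$ is bounded in $\bbC^m$, both $\real(\Omega_\calX)$ and $\imag(\Omega_\calX)$ are bounded subsets of $\bbR^m$, so their upper Minkowski dimensions are well defined. Throughout, I identify $\bbC^m$ with $\bbR^{2m}$ via real and imaginary parts, which is exactly the convention under which the $\ell_2$ distance on $\bbC^m$ is defined in this paper, so that covering numbers on $\bbC^m$ and on $\bbR^{2m}$ are consistent.

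Next, fix $\rho>0$ and cover $\real(\Omega_\calX)$ by balls of radius $\rho$ centered at $\{p_i\}_{i=1}^{N_{\real(\Omega_\calX)}(\rho)}$, and $\imag(\Omega_\calX)$ by balls of radius $\rho$ centered at $\{q_i\}_{i=1}^{N_{\imag(\Omega_\calX)}(\rho)}$. Given any $x\in\Omega_\calX$, write $x = \real(x)+\sqrt{-1}\,\imag(x)$ and pick centers $p_{i_1},q_{i_2}$ with $\norm{\real(x)-p_{i_1}}_2\leq\rho$ and $\norm{\imag(x)-q_{i_2}}_2\leq\rho$. Then
\[
\norm{x-(p_{i_1}+\sqrt{-1}\,q_{i_2})}_2^2 = \norm{\real(x)-p_{i_1}}_2^2 + \norm{\imag(x)-q_{i_2}}_2^2 \leq 2\rho^2,
\]
so $\Omega_\calX$ is covered by the $N_{\real(\Omega_\calX)}(\rho)\,N_{\imag(\Omega_\calX)}(\rho)$ balls of radius $\sqrt{2}\rho$ centered at the points $p_{i_1}+\sqrt{-1}\,q_{i_2}$ (which need not lie in $\Omega_\calX$, but the covering-number definition allows arbitrary centers). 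Hence
\[
N_{\Omega_\calX}(\sqrt{2}\rho) \leq N_{\real(\Omega_\calX)}(\rho)\,N_{\imag(\Omega_\calX)}(\rho).
\]

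Finally I would take logarithms, divide by $\log\frac{1}{\sqrt{2}\rho}$, use $\log(ab)=\log a+\log b$ together with the subadditivity of $\lim\sup$, and pass to the limit $\rho\rightarrow 0$, noting $\log\frac{1}{\sqrt{2}\rho}\big/\log\frac{1}{\rho}\rightarrow 1$, exactly as in the proofs of Lemmas \ref{lem:setminus} and \ref{lem:product}. This gives $\overline{\dim}_\mathrm{B}(\Omega_\calX)\leq\overline{\dim}_\mathrm{B}(\real(\Omega_\calX))+\overline{\dim}_\mathrm{B}(\imag(\Omega_\calX))$. There is no genuine obstacle; the only point deserving a moment's care is the Pythagorean splitting of the $\bbC^m$ distance into its real and imaginary contributions, which is legitimate precisely because of the $\bbC^m\cong\bbR^{2m}$ identification fixed at the outset and the resulting consistency of the covering-number conventions.
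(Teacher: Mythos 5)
Your proof is correct and follows essentially the same route as the paper's: cover $\real(\Omega_\calX)$ and $\imag(\Omega_\calX)$ separately by $\rho$-balls, recombine the centers into complex points, use the Pythagorean identity to get a $\sqrt{2}\rho$-cover of $\Omega_\calX$, and pass to the $\lim\sup$. The resulting inequality $N_{\Omega_\calX}(\sqrt{2}\rho)\leq N_{\real(\Omega_\calX)}(\rho)\,N_{\imag(\Omega_\calX)}(\rho)$ is exactly the paper's \eqref{eq:realimag}.
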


\begin{proof}
The real and imaginary parts $\real(\Omega_\calX)$ and $\imag(\Omega_\calX)$ are bounded subsets of $\bbR^m$. There exists a large enough constant $L$ such that
\[
\real(\Omega_\calX), \imag(\Omega_\calX)\subset L\calB_{\bbR^m}.
\]
We cover $\real(\Omega_\calX)$ and $\imag(\Omega_\calX)$ with balls of radius $\rho$ centered at the following two sets of points, respectively:
\[ \bigl\{x_i^{\mathrm{Re}}\bigr\}_{i=1}^{N_{\real(\Omega_\calX)}(\rho)},~\bigl\{x_i^{\mathrm{Im}}\bigr\}_{i=1}^{N_{\imag(\Omega_\calX)}(\rho)}~\subset L\calB_{\bbR^m}.
\]
Given any point $x\in\Omega_\calX$, we can find centers of the above coverings, $x_{i_1}^{\mathrm{Re}}$ and $x_{i_2}^{\mathrm{Im}}$, such that
\[
\norm{\real(x)-x_{i_1}^{\mathrm{Re}}}_2\leq \rho,\qquad
\norm{\imag(x)-x_{i_2}^{\mathrm{Im}}}_2\leq \rho.
\]
Let $x_c = x_{i_1}^{\mathrm{Re}}+\sqrt{-1}x_{i_2}^{\mathrm{Im}}$. Then
\[
\norm{x-x_c}_2 = \sqrt{\norm{\real(x)-x_{i_1}^{\mathrm{Re}}}_2^2+\norm{\imag(x)-x_{i_2}^{\mathrm{Im}}}_2^2} \leq \sqrt{2}\rho
\]
Therefore, the set $\Omega_\calX$ can be covered by $N_{\real(\Omega_\calX)}(\rho)N_{\imag(\Omega_\calX)}(\rho)$ balls in $\bbC^{m}$ of radius $\sqrt{2}\rho$, centered at the complex vectors (like $x_c$) generated by the centers of the coverings of $\real(\Omega_\calX)$ and $\imag(\Omega_\calX)$. It follows that
\begin{equation}
N_{\Omega_\calX}(\sqrt{2}\rho)\leq N_{\real(\Omega_\calX)}(\rho)N_{\imag(\Omega_\calX)}(\rho). \label{eq:realimag}
\end{equation}
Therefore,
\begin{align*}
\overline{\dim}_\mathrm{B}(\Omega_\calX) 
=& \underset{\rho\rightarrow 0}{\lim\sup}\frac{\log N_{\Omega_\calX}(\sqrt{2}\rho)}{\log\frac{1}{\sqrt{2}\rho}}\\
\leq & \underset{\rho\rightarrow 0}{\lim\sup}\frac{\log N_{\real(\Omega_\calX)}(\rho)N_{\imag(\Omega_\calX)}(\rho)}{\log\frac{1}{\sqrt{2}\rho}}\\
\leq & \underset{\rho\rightarrow 0}{\lim\sup}\frac{\log N_{\real(\Omega_\calX)}(\rho)}{\log\frac{1}{\sqrt{2}\rho}}+\underset{\rho\rightarrow 0}{\lim\sup}\frac{\log N_{\imag(\Omega_\calX)}(\rho)}{\log\frac{1}{\sqrt{2}\rho}}\\
=& \overline{\dim}_\mathrm{B}(\real(\Omega_\calX))+\overline{\dim}_\mathrm{B}(\imag(\Omega_\calX)).
\end{align*}
\end{proof}


\end{document}